\numberwithin{equation}{section}
\newcommand{\beq}{\begin{equation}}
\newcommand{\eeq}{\end{equation}}
\newcommand{\beqa}{\begin{eqnarray}}
\newcommand{\eeqa}{\end{eqnarray}}
\newcommand{\beqan}{\begin{eqnarray*}}
\newcommand{\eenan}{\end{eqnarray*}}
\def\Bbb{\mathbb}
\newcommand{\Dslash}{{\slash{\kern -0.5em}\partial}}
\newcommand{\Aslash}{{\slash{\kern -0.5em}A}}
\def\sqr#1#2{{\vcenter{\hrule height.#2pt
     \hbox{\vrule width.#2pt height#1pt \kern#1pt
        \vrule width.#2pt}
     \hrule height.#2pt}}}
\def\square{\mathchoice\sqr68\sqr68\sqr{4.2}6\sqr{3.0}6}
\def\thinspace{\kern .16667em}
\def\xp{x_{{\kern -.2em}_\perp}}
\def\subp{_{{\kern -.2em}_\perp}}
\def\defeq{:{\kern -0.5em}=}
\newcommand{\dom}{\mathrm{dom}\,}
\newcommand{\epi}{\mathrm{epi}\,}
\newcommand{\sub}{\mathrm{sub}\,}
\newcommand{\co}{\mathrm{co}\,}
\newcommand{\cco}{\overline{\mathrm{co}}\,}
\newcommand{\pair}[2]{\langle #1 , #2 \rangle}
\newcommand{\Trnorm}[1]{ \|{#1}\|_{tr} }
\newcommand{\Hilb}{{\mathcal H}}
\newcommand{\DiracProj}[1]{|#1\rangle\langle #1 |}
\newcommand{\braket}[2]{\langle #1 | #2 \rangle}
\newcommand{\Bool}{\mathsf{Bool}}
\newcommand{\Real}{{\mathbb R}}
\newcommand{\Nat}{{\mathbb N}}
\newcommand{\Numparts}{\mathcal N}
\newcommand{\JN}{{\mathcal J}_\Numparts}
\newcommand\dens{\mathrm{dens}\,}
\newcommand\EE{{\mathcal E}}
\newcommand\States{\mathsf{State}}
\newcommand\PureStates{\mathsf{State}^\circ}
\newcommand\Dens{\mathsf{Dens}}
\newcommand\DensN{\Dens_{\Numparts}}
\newcommand{\Fpure}{F^\circ}
\newcommand{\Potls}{\mathsf{Potl}}
\newcommand{\PotlsUp}{\Potls^\uparrow}
\newcommand{\PotlsBdd}{\mathsf{Bdd}}
\newcommand{\PotlsSimple}{\Potls^f(\Part)}
\newcommand{\PotlsF}{\mathsf{FSmall}}
\newcommand{\Lieb}{\mathsf{Lieb}}
\newcommand{\Const}{\mathsf{Const}}
\def\Part{{\mathfrak P}}
\newcommand\Star{{{}^*}}
\newcommand{\Std}[1]{{}^\circ{#1}}
\newcommand{\st}{\mathrm{st}\, }
\newcommand{\near}{\simeq}
\newcommand{\Univ}{{\mathbb U}}
\theoremstyle{plain}
\newtheorem{lem}{Lemma}[section]
\newtheorem{thm}{Theorem}[section]
\newtheorem{cor}{Corollary}[section]
\newtheorem*{cor*}{Corollary}
\newtheorem*{thm*}{Theorem}
\newtheorem{prop}{Proposition}[section]
\theoremstyle{definition}
\newtheorem{defn}{Definition}[section]
\theoremstyle{remark}
\newtheorem{rem}{Remark}[section]
\begin{document}

\title{On $F$ and $E$, in DFT}
\author{Paul~E.~Lammert}
\email{pel1@psu.edu}
\address{Dept. of Physics, 104B Davey Lab \\ 
Pennsylvania State University \\ University Park, PA 16802-6300}
\date{\today}

\begin{abstract}
Rigorous mathematical foundations of density functional theory are revisited,
with some use of infinitesimal (nonstandard) methods.
A thorough treatment is given of basic properties of internal
energy and ground-state energy functionals along with several improvements 
and clarifications of known results. A simple metrizable topology is 
constructed on the space of densities using a hierarchy of spatial partitions.
This topology is very weak, but supplemented by control of internal energy, 
it is, in a rough sense, essentially as strong as $L^1$.
Consequently, the internal energy functional $F$ is lower semicontinuous 
with respect to it.
With separation of positive and negative parts of external potentials, 
very badly behaved, even infinite, positive parts can be handled. 
Confining potentials are thereby incorporated directly into the density 
functional framework.
\end{abstract}
\maketitle

\tableofcontents

\section{Introduction}
\label{sec:intro}

\subsection{Motivations}
\label{sec:motivation}

In the density functional theory (DFT) literature, whenever the matter of 
mathematically rigorous foundations arises, a 1983 paper by 
Elliott Lieb\cite{Lieb83} justly looms large. 
It propounds what could reasonably be called the ``standard framework''.
Although (or perhaps because) the community seems generally to regard it 
as a satisfactory foundation, the literature gives evidence
that it is not widely understood.
Standard DFT textbooks\cite{Parr+Yang,Dreizler+Gross,Eschrig,Martin}
and reviews\cite{Capelle06} spill very little ink on such matters
since they have much else to cover.
What few expository treatments exist\cite{vanLeeuwen03,Eschrig} hew
very close to Lieb, and are not easy to obtain.
Consequently, one more such exposition may have a place. 
However, while these notes are a fairly complete
treatment of the most fundamental matters, it does not just add another
set of footprints to the same patch of ground. 
The standard framework locates densities in $L^1({\Real}^3)\cap L^3({\Real}^3)$
and external one-body potentials in the dual space 
$L^\infty({\Real}^3)+L^{3/2}({\Real}^3)$, with the internal energy function 
$F$ on the former and the ground-state energy $E$ on the latter in a relation 
of Fenchel conjugacy.
Rigid adherence to this framework is both unnecessarily
limiting and physically distorting. Consider first the density side.
In its primitive conception, density is a {\em measure}, telling us
the mass in every region. The natural topologies for measures are weak 
topologies. This is reinforced by the fact (\S \ref{sec:F-lsc}, apparently 
not noticed before, that $F$ is lower semicontinuous with respect to
topologies weaker than weak-$L^1$. 
Keeping in mind that densities are non-negative and normalized,
not general elements of a {\em vector} space is important
to seeing that. On the potential side, the standard framework is 
even more constricting. It is good physics, and may even be useful,
to allow potentials to have vastly more ill-behaved
positive parts than negative parts.
Why is $E(v) > -\infty$ for every $v\in L^{\infty}+L^{3/2}$? 
The general theory of Fenchel conjugacy, by itself,
certainly does not make that expected. 
A very clear understanding (Thm. \ref{thm:stability})
is reached by maintaining a clear view of what is required
to keep positive or negative unboundedness under control.

\subsection{What is here, old and new}
\label{sec:what-is-here}

This section gives a brief tour of the contents of the paper, 
emphasizing the new results and putting it in context
with some initial casual motivation.

It all begins with the following problem of conventional Hilbert-space
quantum mechanics.
A system of $\Numparts$ indistinguishable particles (electrons, if you wish)
with kinetic energy (operator) $T$ and mutual interaction $W$ is placed in 
an external one-body potential $v$, and we want to find the ground state and 
its energy. That is, a minimizer of 
$\langle \psi | T + W + \underline{v} | \psi \rangle$
is sought:
\begin{equation}
E(v) = \inf_\psi
\left\{ \langle \psi| T+W | \psi\rangle + \langle \psi | \underline{v} | \psi \rangle \right\}.
\nonumber
\end{equation}
$\underline{v}$ here is the $\Numparts$-body potential obtained by adding
a copy of $v$ for each particle in the system. But building $\underline{v}$
is a step in the wrong direction.
The external potential energy depends on the state only through its one-particle
density. So, instead of minimizing over all states in one go, why not split
it into two stages? First, for each density $\rho$, find the state with lowest
internal energy,
\begin{equation}
\Fpure(\rho) = \inf_{ \dens \psi = \rho} \langle \psi | T + W | \psi \rangle.
\label{eq:Fpure-min}
\end{equation}
Since all states with density $\rho$ have the same external energy, only
the best among them has any chance of being a ground state. Then, in
stage two, minimize over densities:
\begin{equation}
E(v) = \inf_{\rho} \left\{ \Fpure(\rho) + \int v \rho\right\}.
\label{eq:E-min}
\end{equation}
This is the constrained-search approach of Levy\cite{Levy82} and Lieb\cite{Lieb83}
The original problem involved particular $\Numparts$, $W$ and $v$. 
This final formula suggests an entirely different viewpoint.
The {\em system} specified by $\Numparts$ and $W$ is fixed, and we
study its {\em response} to {\em many different} external potentials.
All the relevant properties of the system are encoded in $\Fpure(\rho)$.

This may remind the reader of statistical mechanics. In a prototypical
magnetic system, there are an enormous number of internal degrees of
freedom, but the only thing relevant for the interaction with an
externally imposed magnetic field is the net magnetization.
The free energy as a function of magnetization is analogous to
$\Fpure(\rho)$, and the free energy as function of external field,
to $E(v)$. The two are related by a Legendre transformation,
similarly to (\ref{eq:E-min}.
But thermodynamic conjugacy works both ways. Do we also have
\begin{equation}
\Fpure(\rho) = \sup_v \left\{ E(v) - \int v \rho\right\} \, ?
\label{eq:F-max}
\end{equation}

Questions now come hard and fast; they are the subject of this paper. 
We have lots of infima here. Are there, in fact, minimizers?
Consider the functions $f(x) = 1/x$ and $g(x) =
(\text{ if }\, x = 0 \,\, \text{ then }\,\, 1 \,\, \text{ else }\,\, x^2)$.
Then, $\inf_x f(x) = \inf_x g(x) = 0$, but neither function actually
takes the value $0$ anywhere. 
Take minimizing sequences $f(x_n) \stackrel{n\to\infty}{\to} 0$, and 
$g(y_n)\stackrel{n\to\infty}{\to} 0$.
We would hope that at least subsequences converge to minimizers.
But it does not work. There is no subsequence of $(x_n)$ that converges
at all. The sequence $(y_n)$ converges to zero, but the function $g$ suddenly 
jumps up. In the case of $f$, there is
a failure of {\em compactness}, in the case of $g$, of {\em lower-semicontinuity}.
(If $g$ suddenly jumped {\em down} at zero, all would be well, so only
``half'' of continuity is really needed.)
In the context of function spaces there are many distinct, legitimate
and useful concepts of convergence.
When we say, ``$F$ is lower semicontinuous'', we want to make the
strongest possible statement, so we seek the {\em weakest topology}
with respect to which it is true. Lower-semicontinuity is a pervasive
theme in the following pages.

In \S \ref{sec:internal-e-as-fn-of-state}, we study the internal
energy as a function of pure and mixed states, and its
lower-semicontinuity.
\S \ref{sec:density} turns attention to density, where the
important results involve continuity of the map from states to densities
and compactness of the set of low-energy states with given density.
Pure state ($\Fpure$) and mixed-state ($F$) internal energy as a function 
of density is studied in \S \ref{sec:F}.
Several results here are new. Lower-semicontinuity with respect to
topologies much weaker than the $L^1$ topology is demonstrated.
A general method of transporting densities and wavefunctions is used to make
the well-known demonstration of an upper bound on $\Fpure$
in terms of the $H^1$ Sobolev norm of $\sqrt{\rho}$ less {\it ad hoc}
and to facilitate the calculation of a kinetic energy bound.
We point out that not only do finite-energy densities satisfy 
$\sqrt{\rho} \in H^1({\Real^3})$ as is well-known, 
but also $\rho \in W^{1,1}(\Real^3)$. 
Section \ref{sec:wk-P} presents the construction of a convenient very weak
yet metrizable topology on the space of densities, based on a hierarchy of
partitions of space into cubical lattices.
$F$ is lower semicontinuous with respect to this weak-$\Part$ topology.
Despite its weakness, the weak-$\Part$ topology is, roughly speaking,
essentially as strong as $L^1$ under conditions of control on the
internal energy.
Finally, in \S \ref{sec:energy-floor}, attention is turned to the
ground-state energy $E(v)$. Here we make a very strong, and uncustomary,
distinction between unbounded positive and unbounded negative potentials.
The physical motivation for this was touched upon in \ref{sec:motivation}.
Rewards for separating them in the mathematical treatment are the
possibility of incorporating confining potentials directly into the
density-functional framework, as well as an improved understanding
of the stability of Lieb's $L^\infty + L^{3/2}$ potentials, and continuity
of $E(v)$. We will see that Eq. (\ref{eq:F-max}) is not true, since the
right-hand side is necessarily convex, but $\Fpure$ is not.
The defect is easily corrected by replacing $\Fpure$ by the mixed-state internal
energy $F$. Furthermore, the maximization need be carried 
out only over potentials which are linear combinations of indicator 
functions of cells in the partition hierarchy $\Part$. This is a new result.
Appendix \ref{app:top} gives a refresher on basic functional analysis.
In these notes, we freely use simple methods of infinitesimal 
(or nonstandard) analysis. Infinitesimal methods are a way of doing mathematics, 
and especially of dealing with mathematical idealizations, that ought to be 
familiar to physical scientists. Appendix \ref{sec:nsa} gives an account
of what we need. 
It is relegated to an appendix simply because it is not the real subject of 
these notes, but only a tool. The reader may want to take an early look,
since it begins to be used already in \ref{sec:lsc-def}.

\subsection{Particle types}
\label{sec:particle-types}

We deal throughout with a system of $\Numparts$ indistinguishable particles.
A condensed matter physicist or chemist generally thinks about DFT
only as a theory of electrons.
Our default assumption is indeed that the particles are fermions,
but in principle, the theory could be applied to many different kinds of 
particles.
Statistics only appear in the explicit construction of a wavefunction
in \ref{sec:wavefunction-construction}, so for bosons a better bound
on $\Fpure$ would be achievable.
Spin is nothing but an annoyance here; it is always summed over
and makes no qualitative difference. 
Finally, although the mutual interaction of most interest is
the Coulomb interaction, the crucial ingredient is 
relative form-boundedness by kinetic energy. 
We do use positivity of the interaction in the form $F \ge 0$, 
but that is a matter of convenience and could be relaxed.

\section{Internal energy as function of state (${\mathcal E}$) }
\label{sec:internal-e-as-fn-of-state}

\subsection{Kinetic energy}
\label{sec:KE}
Throughout, the Hilbert space of $\Numparts$-particle wavefunctions
of appropriate symmetry is denoted by $\Hilb$.
In units such that $\hbar^2/2m$ has value 1,
the kinetic energy of the wavefunction $\psi(\underline{x},\underline{\sigma})$ is
\begin{equation}
\EE_0(\psi) \defeq \langle\psi | T | \psi\rangle 
= \int |\nabla\psi|^2
= \|\nabla\psi\|_2^2
\label{kinetic-energy}
\end{equation}
\begin{rem}[integration and configuration variable conventions]
$\underline{x}$ denotes the full set of $\Numparts$-particle configuration
space coordinates $(x_1,x_2,\ldots,x_{\Numparts})$, and similarly
$\underline{\sigma}$, the collection of spin coordinates.
The integral in (\ref{kinetic-energy}) illustrates some conventions
that we will use. Integrals are over all available variables unless
a restriction is indicated by a subscript on the integral sign.
For example, `$\int_{x_1=x} \cdots$' means that the position coordinate
of particle 1 is held fixed at $x$. 
The gradient symbol is subject to a {\em different} convention.
The gradient in (\ref{kinetic-energy}) is $3\Numparts$-dimensional;
a subscript would indicate that differentiation is with
respect to {\em only} the indicated coordinates.
Integration over spin variables means summation. 
\end{rem}
In case $\psi$ is in the domain of the operator $T$, so that
$T\psi \in L^2({\Bbb R})^3$ the quadratic form (\ref{kinetic-energy}) 
is equal to the inner product of $\psi$ with $T\psi$,
which would be written $\langle\psi|T\psi\rangle$.
The quadratic form, understood as a function 
$L^2({\Bbb R}^3) \rightarrow \overline{\Real}$
into the topped reals (see \ref{sec:lsc-def})
$\overline{\Real} \defeq {\Real}\cup\{+\infty\}$
has the advantage that it is meaningful for every $\psi$,
and there are many wavefunctions satisfying $\EE_0(\psi)<+\infty$
for which $T\psi$ does not exist.
In general, Hamiltonians appear in DFT in the guise of expectation
values. Quadratic form definitions are therefore appropriate.

\subsection{Interaction energy}
\label{sec:Coulomb}

Now we add the interaction energy of the particles.
\begin{equation}
\EE_\lambda(\psi) = \EE_0(\psi) + \langle\psi | W | \psi\rangle.
\end{equation}
Usually, the interaction $W$ is a Coulomb interaction, 
$\sum_{i<j} |x_i - x_j|^{-1}$, but we consider generally 
what properties we want to require.

Call $a > 0$ {\em an} $\EE_0$ form-bound for $W$ if there is some $b \ge 0$
such that
\begin{equation}
\forall \psi\in L^2({\Real}^3),\quad
|\langle\psi | W | \psi\rangle| \le a \EE_0(\psi) + b \|\psi\|^2.
\end{equation}
Then, if $a_0$ is the infimum of all $\EE_0$ form-bounds for $W$,
says that $W$ is $\EE_0$-form-bounded with relative bound $a_0$. 
The relative form bound of the Coulomb interaction is zero, as
shown in \ref{sec:Hardy}. 
For any $a>0$, there is some constant $b$ such that
$W_{Coul}$ is dominated by $\EE_0 + b$.
This fact is crucial to the stability of non-relativistic one-electron
atoms of arbitrary $Z$. The motif of dominating an attractive energy
with a repulsive one and its significance for stability will recur.
In the following, we assume that the interaction $W$ is non-negative
and $\EE_0$ bounded with relative form bound zero (``Kato tiny relative
to the kinetic energy''). Then,
for some constants $c,d > 0$,
\begin{equation}
d \|\nabla\psi\|_{2} \le E(\psi) \le c \|\psi\|_{H^1}.
\end{equation}

\subsubsection{Hardy inequality}
\label{sec:Hardy}

The proof that the Coulomb interaction is Kato tiny relative to 
kinetic energy uses the following inequality.
\begin{lem}[Hardy's inequality]
For $\psi \in H^1(\Real^3)$,
\begin{equation}
\int_{\Real^3} \frac{|\psi|^2}{4r^2}\, dx \le \int_{\Real^3} |\nabla\psi|^2\, dx
\nonumber 
\end{equation}
\end{lem}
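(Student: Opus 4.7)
The plan is to complete the square: for any real parameter $\alpha$, start from the trivial inequality
$$0 \le \int_{\Real^3} \Bigl|\nabla\psi + \alpha\,\tfrac{\hat{r}}{r}\,\psi\Bigr|^2\,dx,$$
expand, rewrite the cross term using $2\,\mathrm{Re}(\bar\psi\,\hat{r}\cdot\nabla\psi) = \hat{r}\cdot\nabla |\psi|^2$, and integrate by parts. The divergence $\nabla\cdot(\hat{r}/r) = 1/r^2$ for $r>0$ (a direct check via $\nabla\cdot(f(r)\hat{r}) = r^{-2}(r^2 f(r))'$) converts the cross term into a multiple of $\int |\psi|^2/r^2$, and the constant $1/4$ appears as the maximum of $\alpha - \alpha^2$, attained at $\alpha = 1/2$.

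First I would reduce to the case $\psi \in C_c^\infty(\Real^3)$. Take $\psi_n \in C_c^\infty(\Real^3)$ with $\psi_n \to \psi$ in $H^1$; the right-hand side is continuous in $H^1$, while for the left-hand side I pass to an a.e.-convergent subsequence and apply Fatou's lemma, so the bound transfers from $\psi_n$ to $\psi$.

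For $\psi \in C_c^\infty(\Real^3)$, the only delicate point is the integration by parts, since $\hat{r}/r$ is singular at the origin. I would perform it on $\{r \ge \epsilon\}$ and send $\epsilon \to 0$. After expansion,
$$0 \le \int_{r\ge\epsilon} |\nabla\psi|^2\,dx + \alpha^2 \int_{r\ge\epsilon} \frac{|\psi|^2}{r^2}\,dx + \alpha \int_{r\ge\epsilon} \frac{\hat{r}}{r}\cdot \nabla|\psi|^2\,dx,$$
and the divergence theorem rewrites the last integral as $-\int_{r\ge\epsilon}|\psi|^2/r^2\,dx$ plus a boundary term on the inner sphere. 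That boundary term equals $-\epsilon^{-1}\int_{r=\epsilon}|\psi|^2\,dS$, bounded by $4\pi\|\psi\|_\infty^2\,\epsilon$, so it vanishes in the limit; the outer boundary term is absent by compact support. Letting $\epsilon \to 0$ and using monotone convergence on the two positive integrands leaves
$$0 \le \int_{\Real^3} |\nabla\psi|^2\,dx + (\alpha^2 - \alpha) \int_{\Real^3} \frac{|\psi|^2}{r^2}\,dx,$$
and setting $\alpha = 1/2$ gives precisely the claim.

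The one real (and mild) obstacle is the boundary-term estimate; everything else is algebra plus a standard density argument. A minor extra check is that the integral $\int |\psi|^2/r^2$ is well-defined in $[0,+\infty]$ for $\psi \in H^1(\Real^3)$, which is automatic since the integrand is nonnegative and measurable.
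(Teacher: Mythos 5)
Your proof is correct, but it takes a genuinely different route from the paper's. The paper substitutes $\phi = r^{1/2}\psi$ (after first reducing to real-valued $\psi$), performs a pointwise algebraic estimate $(\nabla\psi)^2 \ge \frac{1}{4r^2}\psi^2 - \frac{1}{2r^2}\frac{\partial\phi^2}{\partial r}$ by discarding a positive square, and then integrates the radial total-derivative term to zero using $\phi(0)=0$ and the compact support. You instead use the standard ``completion of the square'' with the singular vector field $\hat{r}/r$, integrate by parts on $\{r\ge\epsilon\}$ using $\nabla\cdot(\hat{r}/r)=1/r^2$, and optimize over $\alpha$, with the sharp constant $1/4$ appearing as $\max_\alpha(\alpha-\alpha^2)$. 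Both are classical arguments and give the sharp constant. Your version handles complex $\psi$ directly (via $2\,\mathrm{Re}(\bar\psi\,\hat{r}\cdot\nabla\psi)=\hat{r}\cdot\nabla|\psi|^2$), discovers the constant systematically by optimization, and is more explicit about the density step than the paper's one-line ``approximation argument extends it to $H^1$''; the paper's route requires guessing the substitution but is shorter once that's done. One small simplification available to you: for $\alpha>0$ the inner-sphere boundary term $-\frac{\alpha}{\epsilon}\int_{r=\epsilon}|\psi|^2\,dS$ is already nonpositive, so it can simply be dropped from the inequality rather than shown to vanish; this removes the need for any $\|\psi\|_\infty$ estimate, though your bound is also correct.
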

\begin{proof}(sketch)
We demonstrate the inequality for {\it real}-valued $\psi \in C^\infty_c(\Real^3)$.
Real and imaginary parts separate, and an approximation argument extends it 
to $H^1(\Real^3)$.

Define $\phi = r^{1/2}\psi$, and note that $\phi(0) = 0$.
Now,
\begin{align}
(\nabla\psi)^2 
&= \left( -\frac{\nabla r}{r^{3/2}}\phi + r^{-1/2}\nabla\phi\right)^2
\ge \frac{1}{4r^3} \phi^2 - 
2\left(\frac{\phi}{2r^{3/2}}r^{-1/2}\right)\left(\frac{\partial\phi}{\partial r}\right)
\nonumber \\
&= \frac{1}{4r^2} \psi^2 - 
\frac{1}{2r^{2}}\frac{\partial\phi^2}{\partial r},
\nonumber
\end{align}
where in the inequality, we've thrown out a positive term.
The second term in the final expression integrates to zero,
since $\phi(x)=0$ at $x=0$ and for large $x$.
\end{proof}
A couple of tricks are needed to apply the Hardy inequality to our situation.
First, note that
$1/r$ differs from $f(r|\epsilon) = 
(\text{ if }\, r<\epsilon\, \text{ then }\, 1/r\, \text{ else }\, 0)$ 
by a bounded function and $f(r|\epsilon) \le \epsilon/r^2$.
Second, the expectation of $|x_i-x_j|^2$ can be handled by change of variables.
Write the required integral as
$\int |\psi|^2/|x_1-x_2|^2 \, d(x_1-x_2)\, dx_2\, dx_3\cdots dx_\Numparts$, and notice
that $|\nabla_{x_1-x_2}\psi|^2 \le |\nabla \psi|^2$.

\subsection{$\EE : \Hilb \rightarrow \overline{\Real}$ is lower semicontinuous}
\label{sec:E-lsc-pure}

\subsubsection{Lower semicontinuity and extended-real-valued functions}
\label{sec:lsc-def}

We will often be interested in functions taking values in the topped reals
\begin{equation}
\overline{\Real} \defeq \Real \cup \{+\infty\}.
\end{equation}
A neighborhood of $+\infty$ contains an interval of the form
$(a,\infty]$ for some $a\in\Real$.
Infinitesimally, this means that ``$x\near +\infty$'' is
to be understood literally.
$\Star{\overline{\Real}}$ consists of $\Star{\Real}$ together
with the new point $+\infty$. If $x\in \Star{\Real}$ is illimited,
then $\Std{x} = +\infty$.

Continuity of a function $f: X \rightarrow {\Real}$ at a point $x \in X$ 
is defined this way:
Given a tolerance $\epsilon$, there is a 
neighborhood $U$ of $x$ such that $|f(y) - f(x)| <  \epsilon$ for every $y\in U$. 
If we split the condition into $f(y) > f(x) - \epsilon$ and
$f(y) < f(x) + \epsilon$, then the first characterizes lower semicontinuity
at $x$, and the latter, upper semicontinuity.
In case $f$ is a map into $\overline{\Real}$, we extend this definition
for a point $x$ with $f(x) = +\infty$. $f$ is lower semicontinuous at $x$
if for any $M \in \Real$, there is a neighborhood
$U$ of $x$ such that $f(y) > M$ whenever $y\in U$.
In an infinitesimal idiom, $f: X\rightarrow\Real$ is lower semicontinuous
(abbreviated `lsc') at $x$ if $f(y) \gtrsim f(x)$ for all $y$ in the
halo of $x$. Equivalently, $\Std{f(y)} \ge f(x)$.
If $f$ is lsc everywhere, then we call it simply ``lower semicontinuous''.
This is the property we are usually interested in.
\begin{defn}
A $\Real$-valued function $f$ on a topological space $X$ is
lower semicontinuous if $f^{-1}(-\infty,a]$ is closed for every $a\in\Real$.
\end{defn}
We usually work with the infinitesimal characterization, but there
is also a useful geometrical intuition attached to the stated definition.
\begin{defn}
If $f: X \rightarrow \overline{\Real}$, then its effective domain is
\begin{equation}
\dom f \defeq \{x\in X : f(x) < \infty \}.
\label{eq:dom-def}
\end{equation}
The epigraph of $f$ is the set on or above its graph in
$X \times \Real$:
\begin{equation}
\epi f \defeq \left\{ (x,a) \in X\times\Real \, :\, f(x) \le a \right\}.
\label{eq:epigraph-def}
\end{equation}
\end{defn}
This implies that if $f(x) = \infty$, then there are no points in
$\epi f$ of the form $(x,a)$. Implicitly, $X\times\Real$ is equipped 
with the product topology, so that $(y,b) \near (x,a)$ if and only if
both $y\near x$ and $b\near a$.

We have introduced the notion of epigraph here because it 
provides another way to look at lower semicontinuity.
$f: X \rightarrow \overline{\Real}$ is lower semicontinuous if and only
if its epigraph is closed.
This perspective also gives a nice proof of the fact that
the pointwise supremum of any set of lower semicontinuous
functions is also lsc.
Let $g = \sup_{\alpha \in I} f_\alpha$, with all the $f_\alpha$'s lsc.
Then, $\epi g = \cap_\alpha \epi f_\alpha$.
But, the intersection of any collection of closed sets is closed,
hence $g$ is lsc.

\subsubsection{Lsc theorem}

\begin{thm}
\label{thm:E-lsc-on-H} 
$\EE : {\Hilb} \rightarrow \overline{\Real}$ is lower semicontinuous.
\end{thm}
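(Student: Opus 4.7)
The plan is to verify the infinitesimal criterion of \S\ref{sec:lsc-def}: given $\psi \in \Hilb$ and any $\psi' \near \psi$ in the $L^2$ topology, I show $\Std{\EE(\psi')} \ge \EE(\psi)$. The case in which $\EE(\psi')$ is illimited is trivial, so I assume $\EE(\psi')$ is limited. Split $\EE = \EE_0 + \EE_W$, with $\EE_W(\psi) \defeq \langle\psi|W|\psi\rangle$ non-negative since $W\ge 0$.

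The first step is to upgrade $L^2$-nearness of $\psi'$ to $\psi$ into weak $H^1$-nearness. Since $\EE_W(\psi')\ge 0$, one has $\EE_0(\psi') \le \EE(\psi')$ limited, and $\|\psi'\|_2 \near \|\psi\|_2$ is also limited, so $\|\psi'\|_{H^1}$ is limited. Therefore $\psi'$ lies in the $\Star$-image of an $H^1$-ball, which is weakly compact in $H^1$; let $\tilde\psi \in H^1$ be its standard part in the weak $H^1$ topology. For any test function $\phi\in C^\infty_c$, the relation $\langle\psi'|\phi\rangle_2 \near \langle\psi|\phi\rangle_2$ coming from $L^2$-nearness must agree with $\langle\psi'|\phi\rangle_2 \near \langle\tilde\psi|\phi\rangle_2$ coming from weak-$H^1$-nearness (since $\phi$ and $\Delta\phi$ both lie in $L^2$), which forces $\tilde\psi = \psi$. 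In particular $\psi\in H^1$, and weak lower-semicontinuity of $\|\nabla(\cdot)\|_2^2$ on $H^1$ then yields $\EE_0(\psi) \le \Std{\EE_0(\psi')}$.

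For the interaction piece, any non-negative self-adjoint $W$ defines a closed, hence $L^2$-lsc, quadratic form; concretely, the variational representation
\beq
\EE_W(\psi) = \sup_\phi \bigl\{ 2\,\mathrm{Re}\,\langle\psi|W\phi\rangle - \langle\phi|W\phi\rangle \bigr\},
\nonumber
\eeq
with $\phi$ ranging over the operator domain of $W$, exhibits $\EE_W$ as a pointwise supremum of $L^2$-continuous affine functions of $\psi$, hence lsc. When $W$ is multiplication by a non-negative function --- the case of real interest in DFT --- Fatou's lemma applied to $|\psi'|^2$ after passing to an almost-everywhere convergent representative of the halo yields the bound $\EE_W(\psi) \le \Std{\EE_W(\psi')}$ more directly. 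Adding the two contributions produces $\EE(\psi) \le \Std{\EE(\psi')}$, as required.

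The main obstacle I foresee is the identification step in the middle paragraph --- pinning down the weak-$H^1$ standard part of $\psi'$ as $\psi$ itself. This is a limit-uniqueness argument turning on the fact that $C^\infty_c$ is simultaneously dense in $L^2$ and a testing class for the weak $H^1$-topology. Everything else is standard material on closed non-negative quadratic forms and the weak compactness of Hilbert-space balls.
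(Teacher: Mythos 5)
Your proof is correct, but it follows a genuinely different (and considerably heavier) path than the paper's, which is worth laying side by side.

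The paper dispatches this in two quick steps. First, it observes that $\EE \sim \EE_0$: since $W$ is non-negative and $\EE_0$-form-bounded with relative bound zero, the form norms of $\EE$ and $\EE_0$ are equivalent on the common domain $H^1$, so one form is closed (equivalently lsc) iff the other is. Second, it proves $\EE_0$ lsc by the Fourier trick: $\EE_0(\psi) = \int q^2 |\hat\psi(q)|^2\,dq$ is the monotone supremum of the truncated, manifestly $L^2$-continuous quadratic forms $p_Q(\psi) = \int_{|q|\le Q} q^2 |\hat\psi(q)|^2\,dq$, and a supremum of continuous functions is lsc. Four lines.

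You instead split $\EE = \EE_0 + \EE_W$ and prove lsc of each summand separately. For $\EE_0$ you use weak compactness of $H^1$-balls, extraction of a weak-$H^1$ standard part, an identification-of-limits argument via $C^\infty_c$ test functions, and weak lsc of the convex norm functional. This is correct but is roughly a proof of (a case of) Rellich--Kondrashov-adjacent machinery where the paper uses a one-liner; the Fourier representation is the more economical route because it avoids invoking the weak topology entirely. For $\EE_W$ you invoke the variational representation of a closed form (or Fatou for multiplication operators), and here there is a small but real gap relative to the paper's hypotheses: the paper only assumes $W$ is non-negative and $\EE_0$-form-bounded with relative bound zero, which does \emph{not} imply $\EE_W$ is closed on its own (non-negative, $\EE_0$-infinitesimally-form-bounded but non-closable forms exist, e.g.\ the 1D delta form). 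Your argument handles this by additionally assuming $W$ is self-adjoint, or a multiplication operator, which covers Coulomb and everything physically relevant --- but the paper's reduction $\EE \sim \EE_0$ sidesteps the question entirely, never needing $\EE_W$ to be lsc in isolation. So the paper's route is both shorter and strictly more general at the level of hypotheses; yours is a valid independent derivation under the physically natural extra assumption.
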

\begin{proof}
Since $\EE \sim \EE_0$, it suffices to prove this for $\EE_0$.
And that is very easily done by Fourier transformation:
\begin{equation}
\EE_0(\psi) = \|\nabla\psi\|_2^2 = \int q^2 |\hat{\psi}(q)|^2\, dq
= \lim_{Q\to\infty} p_Q(\psi),
\nonumber
\end{equation}
where
\begin{equation}
p_Q(\psi) \defeq \int_{|q| \le Q} q^2 |\hat{\psi}(q)|^2 \, dq.
\nonumber
\end{equation}
But, $p_Q : \Hilb \rightarrow {\Real}$ is clearly continuous,
and $p_Q(\psi)$ is monotonically non-decreasing in $Q$.
As supremum of continuous functions, then, $\EE_0$ is lsc.
\end{proof}

\subsection{Extension to mixed states}
\label{sec:mixed-states}

\subsubsection{Standard and nearstandard mixed states}

To the vector ${\psi} \in \Hilb$ is associated the pure state 
$\DiracProj{\psi} \in \PureStates$, which is insensitive to the phase of $\psi$.
A mixed state is a probabalistic mixture of pure states, and
has the normal form
\begin{equation}
\gamma = \sum_{i\in{\mathcal I}} \DiracProj{\psi_i}, 
\quad j\not= k \Rightarrow \braket{\psi_j}{\psi_k}= 0,
\quad \|\psi_1\| \ge \|\psi_2\| \ge \cdots > 0,
\quad \sum_{i\in{\mathcal I}} \|\psi_i\|^2 < \infty.
\end{equation}
The index set ${\mathcal I}$ here is either $\Nat$, or $[1,\ldots,n]$ 
for some $n\in\Nat$.
Note, the normal form is not uniquely specified in case of degenerate eigenvalues.
On a few occasions, we have use for an alternate normal form
\begin{equation}
\gamma = \sum_{i\in{\mathcal I}} c_i \DiracProj{\hat{\psi_i}}.
\end{equation}
The two are related by $c_i = \|\psi_i\|^2$, so that the $\hat{\psi_i}$'s
are unit vectors.

We can also think of the set of mixed states $\States$ as a convex cone in the 
Banach space of trace-class operators equipped with the trace norm,
\begin{equation}
\|A\|_{\mathrm{Tr}} = \sum \braket{\phi_i}{|A|\phi_i}, \,\, \{\phi_i\} \, \text{ an ONB}.
\end{equation}
For a positive trace-class operator, the trace norm is just the trace,
and for a mixed state this is the generalization of norm-squared for
a state vector. 

For ${\psi}\in\Star{\Hilb}$, $\Std{\psi}$ is the standard vector infinitely close 
(``near'') to it, if such exists, else it is undefined. The map 
$\psi \mapsto \DiracProj{\psi}$ from $\Hilb$ to $\PureStates$ ought to be
continuous, so $\st \DiracProj{\psi} = \DiracProj{\Std{\psi}}$.
Continuing in this vein, to deduce the normal form of $\st \gamma$
without explicitly working with the trace norm, note that both
addition in $\States$ and $\mathrm{Tr}$ should be continuous.
Combined with the standard part operation in $\PureStates$, we conclude that
\begin{equation}
\st \gamma = 
\,\,
\text{ if } 
\Std{(\sum_{{\mathcal I}} \|\psi_i\|^2)} = \sum_{\Std{\mathcal I}} \|\Std{\psi_i}\|^2
\,\text{ then }\,
\sum_{i\in \Std{\mathcal I}} \DiracProj{\Std{\psi_i}}
\,\, \text{else }\mathsf{undefined}.
\label{eq:state-std-part}
\end{equation}
Thus, $\gamma$ is remote if any of the $\psi_i$'s is so, 
or if its norm is illimited,
or if an appreciable weight is carried by infinitesimal $\psi_i$'s,
or if the norm of $\gamma$ is illimited (the if-clause fails because
the first standard part is undefined).

\begin{defn}
The set of mixed states based on $\Hilb$ is denoted $\States$, 
and equipped with the trace norm.
The subset of pure states is denoted by $\PureStates$, and the
normalized states by $\States_1$. `State' with no modifier means
a member of $\States$, and the distinction between a pure state
and a corresponding state vector is often elided.
\end{defn}

\subsubsection{Lower semicontinuity of $\EE$}

With (\ref{eq:state-std-part}) and the nonstandard characterization of
lower-semicontinuity, we can now extend Thm. \ref{thm:E-lsc-on-H}
to $\States$. However, it will not be used in the following. 
\begin{cor}
\label{cor:E-lsc-on-mixed-states} 
$\EE: {\States} \rightarrow \overline{\Real}$ is lower semicontinuous.
\end{cor}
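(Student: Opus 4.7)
The plan is to use the nonstandard halo characterization of lower semicontinuity, reducing the claim to a termwise application of Thm.~\ref{thm:E-lsc-on-H} to the vectors in the normal form. So fix $\gamma \in \States$, take any $\gamma' \in \Star{\States}$ with $\st \gamma' = \gamma$, and aim to show $\Std{\EE(\gamma')} \ge \EE(\gamma)$. Write the external normal form $\gamma = \sum_{i\in\mathcal{I}} \DiracProj{\psi_i}$ and the internal one $\gamma' = \sum_{j\in\mathcal{I}'} \DiracProj{\psi'_j}$. Equation~(\ref{eq:state-std-part}) tells us that, after relabeling, $\mathcal{I}$ embeds into $\Std{\mathcal{I}'}$ with $\Std{\psi'_i} = \psi_i$ for every standard $i \in \mathcal{I}$; in particular $\psi'_i \near \psi_i$ in $\Star{\Hilb}$.

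Since $\EE_0 \ge 0$ and $W$ is non-negative, every internal summand $\EE(\psi'_j)$ is non-negative, so for any standard finite $J \subset \mathcal{I}$,
\[
\EE(\gamma') \;=\; \sum_{j\in\mathcal{I}'} \EE(\psi'_j) \;\ge\; \sum_{i\in J} \EE(\psi'_i).
\]
The right-hand side is a finite sum in $\Star{\overline{\Real}}$, so its standard part is the sum of standard parts; combining this with Thm.~\ref{thm:E-lsc-on-H} applied at each $\psi_i$ (which gives $\Std{\EE(\psi'_i)} \ge \EE(\psi_i)$ because $\psi'_i \near \psi_i$) yields
\[
\Std{\EE(\gamma')} \;\ge\; \sum_{i \in J} \EE(\psi_i).
\]
Taking the supremum over finite $J \subset \mathcal{I}$ then delivers $\Std{\EE(\gamma')} \ge \sum_{i\in\mathcal{I}} \EE(\psi_i) = \EE(\gamma)$, which is exactly the halo inequality needed for lower semicontinuity.

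The only delicate point I expect is that $\mathcal{I}'$ may contain nonstandard indices, that individual $\EE(\psi'_j)$ may be illimited, and that the defining sum for $\EE(\gamma')$ is an internal (possibly hyperfinite) sum rather than a convergent series in the external sense; non-negativity of $\EE$ dodges all three issues at once, since it lets us discard every nonstandard-indexed summand and commute $\Std{\,\cdot\,}$ through finite truncations without worrying about cancellations. Non-uniqueness of the normal form under eigenvalue degeneracy is similarly harmless, as $\EE(\gamma)$ and $\EE(\gamma')$ are traces and are insensitive to the particular orthonormal basis that diagonalizes the operator.
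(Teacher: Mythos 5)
Your proof is correct and follows essentially the same route as the paper: both invoke the nonstandard halo characterization, expand $\gamma'$ in normal form, use non-negativity of $\EE$ to truncate to a finite sum, commute $\st$ through that finite sum, apply Thm.~\ref{thm:E-lsc-on-H} termwise via (\ref{eq:state-std-part}), and pass to the supremum over finite truncations. Your version is, if anything, a bit more careful about distinguishing the index sets of $\gamma$ and $\gamma'$ and about flagging why non-negativity lets you discard the nonstandard-indexed and illimited summands, but the argument is the same.
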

\begin{proof}
Take $\gamma \in {\States}$ a standard state and
and suppose 
$\gamma^\prime = \sum_{{\mathcal I}} \DiracProj{\psi_i^\prime}$
is {\it near} $\gamma$.
We need to show that $\EE(\gamma^\prime) \gtrsim \EE(\gamma)$.
Since $\EE$ is non-negative and lsc on $\PureStates$,
\begin{equation}
\Std{\EE(\gamma^\prime)} \ge \st \sum_{i=1}^n  \EE(\psi_i^\prime) 
= \sum_{i=1}^n  \Std{\EE(\psi_i^\prime)}
\ge \sum_{i=1}^m  \EE(\psi_i).
\nonumber
\end{equation}
If ${\mathcal I}$ is finite, just take $n = |{\mathcal I}|$, otherwise,
the limit $n\to\infty$ yields $\Std{\EE(\gamma^\prime)} \ge \EE(\gamma)$.
\end{proof}

\section{Density ($\rho$)}
\label{sec:density}

The one-particle density associated with an $\Numparts$-particle wavefunction 
$\psi$ is
\begin{equation}
(\dens \psi)(x) = 
\Numparts \int_{x_1=x} |\psi|^2\,.
\label{eq:density-def}
\end{equation}
The right-hand side here is a linear function of the rank-one operator
$\DiracProj{\psi}$ and thus has an immediate linear extension to 
mixed states:
\begin{equation}
\dens \left(\sum c_i \DiracProj{\psi_i} \right) = \sum c_i \dens \psi_i.
\end{equation}
We will overload the notation `$\dens$' by writing $\dens \psi$ 
for $\dens (\DiracProj{\psi})$.

\subsection{$\dens$ is continuous}
\label{sec:dens-cts}

Integrating the previous display over $x$, one finds
\begin{equation}
\|\dens \gamma \|_1 = \Numparts \Trnorm{\gamma}.
\label{eq:dens-bdd-linear-op}
\end{equation}
Densities are, of course, non-negative integrable functions.
We use the notations
\begin{defn}
\begin{align}
\Dens &= \left\{ \rho \in L^1(\Real^3): \rho \ge 0\right\},
\nonumber \\
\DensN &= \left\{ \rho \in \Dens: \int \rho\, dx = \Numparts \right\}.
\end{align}
\end{defn}

\begin{thm}
\label{thm:dns-is-cts}
$\dens : \States \rightarrow \Dens$ is continuous.
\end{thm}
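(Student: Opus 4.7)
The plan is to upgrade the identity (\ref{eq:dens-bdd-linear-op}) into a Lipschitz bound on the difference of two states, from which continuity will be immediate. Since $\States$ is only a cone inside the real Banach space of self-adjoint trace-class operators on $\Hilb$, I would first note that the formula defining $\dens$ extends linearly to all such operators (either directly via $A = A_+ - A_-$, or by reading $\dens A(x)$ as $\Numparts$ times the diagonal of the partial trace of $A$ over particles $2,\ldots,\Numparts$). The target estimate is then $\|\dens A\|_1 \le \Numparts \Trnorm{A}$ for every self-adjoint trace-class $A$, yielding Lipschitz constant $\Numparts$ for $\dens$.

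To carry this out I take $\gamma_1,\gamma_2\in\States$, set $A = \gamma_1-\gamma_2$, and decompose $A = A_+ - A_-$ via the spectral theorem, so that $A_\pm \ge 0$, $|A| = A_+ + A_-$, and $\Trnorm{A} = \Tr A_+ + \Tr A_-$. Because $\dens A_+$ and $\dens A_-$ are honest non-negative densities, the pointwise inequality $|\dens A_+ - \dens A_-| \le \dens A_+ + \dens A_-$ together with (\ref{eq:dens-bdd-linear-op}) applied to each piece yields
\[
\|\dens \gamma_1 - \dens \gamma_2\|_1 = \|\dens A_+ - \dens A_-\|_1 \le \Numparts(\Tr A_+ + \Tr A_-) = \Numparts \Trnorm{\gamma_1 - \gamma_2}.
\]
Continuity of $\dens$ is then automatic. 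In the infinitesimal idiom the same conclusion reads: if $\gamma' \near \gamma$ in trace norm, then $\Trnorm{\gamma'-\gamma}$ is infinitesimal, hence so is $\|\dens\gamma' - \dens\gamma\|_1$, i.e.\ $\dens\gamma' \near \dens\gamma$ in $L^1$.

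The only step that calls for any care---and so the nearest thing here to an obstacle---is justifying the linear extension of $\dens$ from the cone $\States$ to the surrounding vector space: one must check that $\dens A_+$ and $\dens A_-$ are finite a.e.\ (which follows from (\ref{eq:dens-bdd-linear-op}) applied to $A_\pm$) and that $\dens A_+ - \dens A_-$ coincides with $\dens \gamma_1 - \dens \gamma_2$, which is transparent once both sides are expanded using the common spectral data of $A$. Everything else is just the triangle inequality and the already-established norm identity.
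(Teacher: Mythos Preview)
Your proof is correct and is essentially the paper's own argument with the details filled in: the paper's one-line proof simply invokes (\ref{eq:dens-bdd-linear-op}) to declare that $\dens$ is a bounded linear map, which tacitly presupposes exactly the linear extension to self-adjoint trace-class operators that you carry out explicitly via the spectral decomposition $A=A_+-A_-$. Your version is more careful but not a different route.
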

\begin{proof}
(\ref{eq:dens-bdd-linear-op}) shows that $\dens$ is actually a
bounded linear map.
\end{proof}

\subsection{Tameness and nearstandardness}
\label{sec:tame-implies-ns}

\subsubsection{Tameness}
\label{sec:tameness}

A density in $\DensN$ always integrates to $\Numparts$. Thus, given
a density $\rho$, there is always some function (many in fact)
$R(\epsilon): (0,N] \rightarrow {\Real}^+$ such
that $\int_{|x| \ge R(\epsilon)} \rho\, dx < \epsilon$ for all $\epsilon$.
When a set $S$ of densities can all be controlled in this way by the 
same falloff function, $S$ would be said to be {\it tight}.
In a nonstandard context, we borrow this term to apply to a
single $\Star$density in $\Star{\DensN}$. By Leibniz' Principle,
$\rho \in \Star{\DensN}$ has a $*$-falloff function.
We say it is tight if $R(\epsilon)$ is limited for non-infinitesimal
$\epsilon$. Here is an alternative phrasing.
\begin{defn}
A $*$-density $\rho$ is {\it tight} if
$\int_{|x| \ge R} \rho\, dx \near 0$ for every illimited $R$.
A $*$-state $\gamma$ is {\it tame} if it has limited internal energy 
and norm, and
$\dens \gamma$ is tight. A $\Star$-state which is not tame is {\it wild}.
\end{defn}

\subsubsection{A fundamental spatial approximation}
\label{sec:in-a-box}

\newcommand{\cut}{\Lambda}
\newcommand{\BX}{\square}
Our approximation scheme has two ingredients, a spatial truncation method
and a family of finite-dimensional subspaces of $\Hilb$. Start with the former.
Let $0 \le \eta({x}) \le 1$ be a continuously differentiable cutoff function 
equal to one in the cubical box $\square(1/2) = [-1/2,1/2]^3$ and supported in
$\square({1}) = [-1,1]^3$, for example, a product of three 1D cutoff functions,
and for $R > 0$, let $(\eta_R)(x) = \eta(x/R)$.
Then, define the $\Numparts$-particle cutoff operator by
\begin{equation}
(\Lambda_R \psi)(\underline{x},\underline{\sigma}) =
\left( \prod_{i=1}^\Numparts \eta_R({x}_i) \right)\psi(\underline{x},\underline{\sigma}).
\end{equation}
Thus, the state $\cut_R \psi$ is supported in the $\Numparts$-particle
box $\BX^\Numparts(R)$, which is to say that the density is zero outside
$\BX(R)$. 
$\Lambda_R\psi$ agrees with $\psi$ inside $\BX^\Numparts(\frac{R}{2})$
and is an attenuated version outside, so we can get an upper bound
on the norm of the difference by just integrating $|\psi|^2$ over the
complement:
\begin{equation}
\|\Lambda_R\psi - \psi\|^2 
 \le \int_{\BX^\Numparts(\frac{R}{2})^c} |\psi|^2 \, 
\le \int_{\BX(\frac{R}{2})^c} \dens \psi
\label{ineq:cutoff-proximity}
\end{equation}
Bounding the energy of $\Lambda_R\psi$ is a little harder.
Using $\nabla(\Lambda_R \psi) = {\Lambda}_R \nabla \psi + (\nabla {\Lambda}_R) \psi$
and $|a+b|^2 \le 2|a|^2 + 2|b|^2$, we find
\begin{align}
\EE_0(\Lambda_{{R}}\psi) 
&\le 
\EE_0(\psi) 
+ \int_{\left[\BX^\Numparts(\frac{R}{2})\right]^c} |\nabla\psi|^2 \, d\underline{x}
+ 2\int_{\left[\BX^\Numparts(\frac{R}{2})\right]^c} 
|\nabla\Lambda_R|^2 |\psi|^2 \, d\underline{x}
\nonumber \\
&\le 
2\EE_0(\psi) 
+ 2\int_{{\BX(\frac{R}{2})}^c } |\nabla\eta_R|^2 (\dens\psi) 
\nonumber \\
&\le 
2\EE_0(\psi) 
+ \frac{\|\nabla\eta_1\|_\infty^2}{R^2}\int_{{\BX(\frac{R}{2})}^c } \dens\psi.
\label{ineq:cutoff-energy-bd}
\end{align}

To the box $\BX(R)$ we associate two closed subspaces of $\Hilb$.
Let $\{\varphi_1,\varphi_2,\ldots\}$ be the {\em single-particle}
particle-in-a-box eigenstates in $\BX(R)$.
From this single-particle basis, construct a complete 
product basis (adding spin and antisymmetrizing as needed)
for $\Hilb(\BX(R))$, the Hilbert space for $\Numparts$ particles
in the box.
$\Hilb_R$ is the subspace spanned by basis vectors with kinetic 
energy not exceeding $R$; it is a {\em finite-dimensional} Hilbert
space (asymptotically, $\mathrm{dim} \Hilb_R$ does not grow any faster
than $R^{9\Numparts/2}$).
Orthogonal projection $\Hilb(\BX(R)) \rightarrow \Hilb_R$ is denoted by $\pi_R$.
Now, if $\psi$ has density supported in $\BX(R)$,
then $\psi \in \Hilb(\BX(R))$, and if, in addition, 
$\EE_0(\psi) < \epsilon R$, then $\|\psi - \pi_R\psi\|^2 < \epsilon$.
We use the notation $\mathrm{Reg}_R$ to denote the composition of smooth truncation 
followed by projection:
\begin{equation}
\mathrm{Reg}_R \defeq \pi_R \circ \Lambda_R. 
\end{equation}

Now, if $\psi$ is tight and $R$ is illimited,
$\Lambda_R\psi \near \psi$ is immediate.
The inequality (\ref{ineq:cutoff-energy-bd}) shows that also
$\EE_0(\Lambda_R\psi) \lesssim \EE_0(\psi)$.

\subsubsection{Tame implies nearstandard}

Now we put the constructions of \ref{sec:in-a-box} to work.
Two observations are critical. The constructions were described 
for $R \in \Real_{>0}$, but work just as well for $R\in\Star{\Real}$. 
Also, since $\Hilb_R$ is finite-dimensional, with some finite
orthonormal basis $\{\varphi_1,\ldots,\varphi_n\}$, the same is
true of $\Star{\Hilb}_R$. The only difference is that $\phi \in \Star{\Hilb}_R$
has hypercomplex coefficients and may have illimited norm. But,
if $\phi = \sum \alpha_i \Star{\varphi_i}$ is limited, it is nearstandard
with standard part $\sum \Std{\alpha_i} \varphi_i$.

The translation into standard terms of
an external assertion to the effect that a property implies nearstandardness
is a claim about relative compactness. For instance, the following proposition 
is essentially a variation of the Rellich-Kondrashov theorem. We will not,
however, explicitly use the notion of compactness in these notes.
\begin{prop}
\label{prop:tame-implies-ns}
If $\psi \in \Star{\PureStates}$ is tame, then $\psi$ is nearstandard.
Furthermore, in that case, $\dens \st \psi = \st \dens \psi$
and $\EE(\st\psi) \le \st \EE(\psi)$.
\end{prop}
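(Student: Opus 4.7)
I split the argument into three parts: (i) establish nearstandardness of $\psi$, (ii) derive the density equality, (iii) derive the energy inequality.

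For (i), the idea is to approximate $\psi$ in norm by elements of a \emph{standard} finite-dimensional subspace using $\mathrm{Reg}_R = \pi_R \circ \Lambda_R$, then invoke completeness of $\Hilb$. Fix a standard $\epsilon > 0$. By tightness of $\dens \psi$, the internal, nonincreasing function $R \mapsto \int_{|x|\ge R}\dens\psi$ is infinitesimal at every illimited $R$, so by overspill it falls below $\epsilon$ at some standard $R_1$. For standard $R \ge 2R_1$, inequality (\ref{ineq:cutoff-proximity}) gives $\|\Lambda_R\psi - \psi\|^2 \le \int_{\BX(R/2)^c}\dens\psi < \epsilon$, while (\ref{ineq:cutoff-energy-bd}) together with tameness bounds $\EE_0(\Lambda_R\psi)$ by a limited constant $M$ independent of $R$. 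Choosing standard $R$ further large enough that $M < \epsilon R$, the projection estimate of \S\ref{sec:in-a-box} yields $\|\Lambda_R\psi - \pi_R\Lambda_R\psi\|^2 < \epsilon$. Crucially, because $R$ is standard, $\Hilb_R$ is a \emph{standard} finite-dimensional space, and $\pi_R\Lambda_R\psi \in \Star\Hilb_R$ has limited norm, hence admits a standard part $\hat\psi_\epsilon \in \Hilb_R \subset \Hilb$. Combining by the triangle inequality, $\|\psi - \hat\psi_\epsilon\| \le 2\sqrt{\epsilon}$ up to an infinitesimal. Since such a standard $\hat\psi_\epsilon$ exists for every standard $\epsilon > 0$, the family $\{\hat\psi_\epsilon\}$ is Cauchy in $\Hilb$, and its limit $\psi_0 \in \Hilb$ satisfies $\psi \near \psi_0$; the continuity of $\psi \mapsto \DiracProj{\psi}$ then delivers nearstandardness in $\PureStates$.

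For (ii), continuity of the rank-one map gives $\DiracProj{\psi} \near \DiracProj{\psi_0}$ in $\States$, and the continuity of $\dens: \States \to \Dens$ (Theorem \ref{thm:dns-is-cts}) then yields $\dens \psi \near \dens \psi_0$, i.e., $\st \dens \psi = \dens \st \psi$. For (iii), the lower semicontinuity of $\EE$ (Theorem \ref{thm:E-lsc-on-H}) applied to $\psi \near \psi_0$ yields $\EE(\psi_0) \le \st \EE(\psi)$, which is precisely $\EE(\st\psi) \le \st\EE(\psi)$.

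The main obstacle lies in part (i), specifically the need to perform the regularization at a \emph{standard} scale $R$. For illimited $R$ the space $\Star\Hilb_R$ has hyperfinite, not finite, dimension, and its limited-norm elements need not be nearstandard; so one cannot simply take $R$ illimited and be finished. Overspill, converting the external tightness hypothesis into the existence of standard $R_1$ with small tail, is the point at which the underlying Rellich--Kondrashov compactness enters implicitly.
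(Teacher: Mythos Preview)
Your proof is correct and uses essentially the same ingredients as the paper's: regularization via $\mathrm{Reg}_R$, an underflow argument to reach a limited scale, nearstandardness of limited vectors in the finite-dimensional $\Hilb_R$, and completeness of $\Hilb$; parts (ii) and (iii) likewise match. The only difference is where underflow is applied --- you apply it to the monotone tail $R \mapsto \int_{|x|\ge R}\dens\psi$ to secure a \emph{standard} $R_1$ and then verify the cutoff and projection estimates directly at standard $R$, whereas the paper first shows $\mathrm{Reg}_R\psi \near \psi$ for all illimited $R$ and then underflows the internal set $\{a : \|\mathrm{Reg}_a\psi - \psi\| < \epsilon\}$ to obtain a limited $R$.
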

\begin{proof}

$\Hilb$ is a {\em complete} metric space, hence it suffices to 
show that $\psi$ can be approximated to any standard accuracy by a 
(standard) vector in $\Hilb$.

Since $\psi$ is tight, $\Lambda_R\psi \near \psi$ for
illimited $R$. Furthermore, according to inequality 
(\ref{ineq:cutoff-energy-bd}), $\Lambda_R\psi$ 
also has limited internal energy. Thus, the projection
$\pi_R$ will affect it only infinitesimally:
$\mathrm{Reg}_R \psi \near \psi$.

Now let $\epsilon > 0$ be given. The set
\begin{equation}
S(\epsilon) = 
\left\{ a \in \Star{\Real} : \|\mathrm{Reg}_a \psi - \psi \| < \epsilon \right\}
\nonumber 
\end{equation}
is internal, but it contains all illimited hyperreals. Thus, it must contain some
limited $R$. $\mathrm{Reg}_R \psi$ is a limited vector in $\Star{\Hilb_R}$,
and therefore nearstandard by the remarks immediately preceding the 
proposition. Thus, $\psi$ is within $2\epsilon$ of a standard vector.
But $\epsilon$ was arbitrary.

The last two statements of the Proposition now follow by
Thms. \ref{thm:E-lsc-on-H} and \ref{thm:dns-is-cts}.
\end{proof}

Following the familiar pattern, we now extend Prop. \ref{prop:tame-implies-ns}
to mixed states. This is not trivial and will involve one of the
trickiest pieces of reasoning in these notes.
\begin{cor}
\label{cor:tame-implies-ns-mixed-states}
If $\gamma \in \States_1$ is tame, then
$\gamma$ is nearstandard. Consequently,
$\dens \st \gamma = \st \dens \gamma$ and
$\EE(\st \gamma) \le \Std{\EE(\gamma)}$.
\end{cor}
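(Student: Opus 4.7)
My plan is to show nearstandardness in trace norm by approximating $\gamma$ to any standard precision by a standard state. The essential trick is that the regularization $\mathrm{Reg}_R$ may be carried out at a \emph{standard} $R$, depositing the approximant inside the \emph{standard} finite-dimensional subspace $\Hilb_R$, in which nearstandardness is automatic.

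First I would observe that each component $\psi_i$ of the normal form is itself tame: $\|\psi_i\|^2\le\Trnorm{\gamma}$ and $\EE(\psi_i)\le\EE(\gamma)$ are limited, and $\dens\psi_i\le\dens\gamma$ inherits tightness. By Prop.~\ref{prop:tame-implies-ns}, $\chi_i:=\st\psi_i$ exists for every $i$. The formula (\ref{eq:state-std-part}) already dictates what $\st\gamma$ must be once it is shown to exist, namely $\gamma^0:=\sum_{i\in\Std{\mathcal I}}\DiracProj{\chi_i}$, with trace-mass preservation built in.

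Now for the main step, fix standard $\delta>0$. Tightness of $\dens\gamma$ combined with overspill yields a standard $R_1$ with $\int_{\BX(R_1/2)^c}\dens\gamma<\delta$. Summing (\ref{ineq:cutoff-energy-bd}) across $i$ bounds $\sum_i\EE_0(\Lambda_R\psi_i)$ by a standard constant $K$ depending only on $\st\EE(\gamma)$, $\Numparts$, and $\|\nabla\eta_1\|_\infty$. Choose standard $R:=\max(R_1,K/\delta)$. Summing the cutoff-proximity bound (\ref{ineq:cutoff-proximity}) and using $\|\phi-\pi_R\phi\|^2\le\EE_0(\phi)/R$ for $\phi\in\Hilb(\BX(R))$ delivers the uniform estimate
\beq
\sum_i \|\psi_i-\mathrm{Reg}_R\psi_i\|^2 \le \int_{\BX(R/2)^c}\dens\gamma+\frac{1}{R}\sum_i\EE_0(\Lambda_R\psi_i) < 2\delta.
\nonumber
\eeq
The rank-one bound $\|\DiracProj{\psi}-\DiracProj{\phi}\|_{\mathrm{Tr}}\le\|\psi-\phi\|(\|\psi\|+\|\phi\|)$ together with Cauchy--Schwarz then upgrades this to $\|\gamma-\tilde\gamma\|_{\mathrm{Tr}}\le 2\sqrt{2\delta\,\Trnorm\gamma}$, where $\tilde\gamma:=\sum_i\DiracProj{\mathrm{Reg}_R\psi_i}$.

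The decisive observation is that $\tilde\gamma$ is an element of $\Star{\mathrm{TC}(\Hilb_R)}$, the $*$-enlargement of a \emph{standard} finite-dimensional Banach space; since $\Trnorm{\tilde\gamma}$ is limited, $\tilde\gamma$ is automatically nearstandard, with standard part $\eta_R:=\st\tilde\gamma$. Hence $\|\gamma-\eta_R\|_{\mathrm{Tr}}=O(\sqrt\delta)$. Since $\delta$ was arbitrary and the trace class is a Banach space, $\gamma$ is nearstandard; (\ref{eq:state-std-part}) then identifies $\st\gamma=\gamma^0$, and in particular the Fatou-type inequality $\sum\|\chi_i\|^2\le\st\Trnorm\gamma$ is forced to be equality---no mass is lost to illimited indices or to infinitesimal-norm components. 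The closing identities $\dens\st\gamma=\st\dens\gamma$ and $\EE(\st\gamma)\le\st\EE(\gamma)$ follow respectively from the continuity of $\dens$ (Thm.~\ref{thm:dns-is-cts}) and the lower semicontinuity of $\EE$ on states (Cor.~\ref{cor:E-lsc-on-mixed-states}). The main obstacle, which I expect the author flags as the trickiest piece of reasoning, is securing \emph{uniformity in $i$} in the cutoff and projection estimates; this is what allows $\gamma$ as a whole, rather than just its individual components, to be trapped inside $\Hilb_R$.
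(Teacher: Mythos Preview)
Your argument is correct, and it is genuinely different from the paper's proof. The paper argues by contrapositive: it enumerates the three ways a state can be remote (illimited trace, an appreciable but remote eigenvector, or a ``tail'' of infinitesimal eigenvalues carrying appreciable weight) and shows each forces wildness. The third case is isolated as Lemma~\ref{lem:wild-mixed-state}, whose proof is the dimension-counting contradiction the author calls ``one of the trickiest pieces of reasoning in these notes'': assuming tameness, one produces an illimited family of nearly-orthonormal vectors inside a \emph{limited}-dimensional $\Hilb_r$.

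By contrast, you go directly for a standard approximant. The key move---summing the bounds (\ref{ineq:cutoff-proximity}) and (\ref{ineq:cutoff-energy-bd}) over the full (possibly hyperfinite) index set so that the errors are governed by $\int_{\BX(R/2)^c}\dens\gamma$ and $\EE_0(\gamma)$ rather than term-by-term---is exactly what lets you fix a \emph{standard} $R$ and land $\tilde\gamma$ in the standard finite-dimensional algebra $\mathrm{TC}(\Hilb_R)$, where nearstandardness is automatic. This sidesteps Lemma~\ref{lem:wild-mixed-state} entirely and is arguably more elementary. Your anticipation that the author's ``trickiest'' point is the uniformity-in-$i$ is slightly off: the paper's actual delicate step is the angular/dimension argument of Lemma~\ref{lem:wild-mixed-state}, which your route never needs. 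A small cosmetic remark: your displayed bound on $\sum_i\|\psi_i-\mathrm{Reg}_R\psi_i\|^2$ drops the factor $2$ coming from $\|a+b\|^2\le 2\|a\|^2+2\|b\|^2$ when splitting into cutoff and projection pieces, but this is harmless for the conclusion.
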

\begin{proof}
Tackle this in contrapositive form. There are three ways $\gamma$ 
can be remote. If $\|\gamma\|_{\mathrm{Tr}}$ is illimited, $\gamma$ is
certainly wild, and if $\|\psi_i\|^2$ is appreciable for some 
remote $\psi_i$ in the support of $\gamma$, then $\gamma$ is wild
because Prop. \ref{prop:tame-implies-ns} says that $\psi_i$ is.
The third possibility is that $\gamma$ has a tail
$\gamma^\prime = \sum_{i\ge N} c_i \DiracProj{\hat{\psi}_i}$ with
$c_i \near 0$ for $i\ge N$ and $\sum_{i\ge N} c_i = M \gg 0$.
The following Lemma \ref{lem:wild-mixed-state} shows 
that $\gamma^\prime/M$ is wild. Hence, so is $\gamma$.
\end{proof}

\begin{lem}
\label{lem:wild-mixed-state}
Suppose the mixed state
\begin{equation}
\gamma = \sum c_i \DiracProj{\hat{\psi}_i} \in \Star{\States}_1
\nonumber
\end{equation}
has all coefficients infinitesimal: $c_i \near 0$.
Then, $\gamma$ is wild.
\end{lem}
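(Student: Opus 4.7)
The plan is to prove the contrapositive: assume $\gamma$ is tame, meaning $\EE(\gamma)\le K$ for some limited $K$ and $\dens\gamma$ is tight, and derive a contradiction from the assumption that all $c_i \near 0$. The guiding intuition is that tameness ought to force $\gamma$ to be essentially concentrated in a subspace $\Hilb_R$ of \emph{limited} dimension, where the combination of unit trace and uniformly infinitesimal eigenvalues becomes untenable.

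Fix a small standard $\epsilon>0$ (say $\epsilon=1/100$). The first step is to extract a limited cutoff scale $R$ for which both $\int_{\BX(R/2)^c}\dens\gamma<\epsilon$ and $(2K+C')/R<\epsilon$ hold, where $C'$ depends only on $\Numparts$ and $\eta_1$. The internal set of $R$'s meeting the first inequality contains every illimited $R$ by tightness, so by underflow it contains some limited $R$; the second inequality merely requires $R$ large on a standard scale, so both can be achieved simultaneously. Summing the pure-state bounds (\ref{ineq:cutoff-proximity}) and (\ref{ineq:cutoff-energy-bd}) against the probabilities $c_i$ yields $\sum c_i\|\hat{\psi}_i-\Lambda_R\hat{\psi}_i\|^2<\epsilon$ and $\EE_0(\Lambda_R\gamma)\le 2K+C'$. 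Combining the latter with the projection estimate $\|\phi-\pi_R\phi\|^2\le\EE_0(\phi)/R$ on $\Hilb(\BX(R))$, applied with $\phi=\Lambda_R\hat{\psi}_i$, gives $\sum c_i\|\Lambda_R\hat{\psi}_i-\mathrm{Reg}_R\hat{\psi}_i\|^2<\epsilon$. A triangle inequality followed by Cauchy--Schwarz against the weights $c_i$ (which sum to $1$) then delivers $\sum c_i\|\hat{\psi}_i-\mathrm{Reg}_R\hat{\psi}_i\|\le 2\sqrt{\epsilon}$, so the regularized operator $\gamma_R\defeq\sum c_i\DiracProj{\mathrm{Reg}_R\hat{\psi}_i}$ satisfies $\Tr\gamma_R=\sum c_i\|\mathrm{Reg}_R\hat{\psi}_i\|^2\ge 1-4\sqrt{\epsilon}\ge 1/2$, an appreciable positive number.

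For the contradiction, write $\Pi=\pi_R\Lambda_R$, which is a contraction, and observe $\gamma_R=\Pi\gamma\Pi^*$. For every unit vector $v$, $\langle v,\gamma_R v\rangle=\langle\Pi^{*}v,\gamma\Pi^{*}v\rangle\le c_1\|\Pi^{*}v\|^2\le c_1\near 0$, so every eigenvalue of $\gamma_R$ is infinitesimal. But $\gamma_R$ is supported in $\Hilb_R$, whose dimension $d\le CR^{9\Numparts/2}$ is limited because $R$ is. Therefore $\Tr\gamma_R\le d\,c_1$, which is a limited number times an infinitesimal, hence itself infinitesimal. This contradicts the lower bound $\Tr\gamma_R\ge 1/2$, so $\gamma$ cannot be tame.

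The main obstacle is pinning $R$ down as a \emph{limited} hyperreal. If $R$ were merely illimited, $\dim\Hilb_R$ would be illimited too and the product $d\,c_1$ would be an indeterminate $\infty\cdot 0$ form capable of taking any value. Underflow applied to the internal tightness condition is precisely the device that forces $R$ into the limited regime; everything else is careful bookkeeping with the ingredients already assembled in \S\ref{sec:in-a-box}.
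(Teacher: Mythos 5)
Your proof is correct, and the crucial contradiction step is a genuinely different and, in my view, cleaner device than the paper's. The paper also argues by contraposition and also uses underspill to pin down a limited $R$, but it then mounts a \emph{counting} argument: via a Markov-style estimate it manufactures an internal index set $A(r)$ of illimited cardinality on which $\mathrm{Reg}_r\hat\psi_i\near\hat\psi_i$, and shows that these nearly-orthonormal vectors cannot all fit in the limited-dimensional space $\Hilb_r$ because they would have pairwise cosines below $12\epsilon$. You instead argue at the level of the operator $\gamma$ itself: after defining the compression $\gamma_R = \Pi\gamma\Pi^*$ with $\Pi=\mathrm{Reg}_R$, the estimate $\|\gamma_R\|_{op}\le\|\gamma\|_{op}=c_1\near 0$ combined with $\mathrm{rank}\,\gamma_R\le\dim\Hilb_R\ll\infty$ gives $\Tr\gamma_R\le(\dim\Hilb_R)\,c_1\near 0$, contradicting the appreciable trace you secured by Cauchy--Schwarz bookkeeping against the weights. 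This spectral route avoids the Markov construction of $A(R)$, the near-orthonormality computation, and lets you apply underspill to the plain tightness condition rather than to the more elaborate internal set involving a universal quantifier over $A(R)$.

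Two small points worth tightening if you write this up. First, it is worth noting explicitly that the set $\{R:\int_{\BX(R/2)^c}\dens\gamma<\epsilon\}$ is not merely nonempty in the limited range after underspill but upward-closed (the integral is decreasing in $R$), which is what lets you jack $R$ up to simultaneously satisfy $(2K+C')/R<\epsilon$ while staying limited. Second, $\EE_0(\Lambda_R\gamma)$ is a slight abuse since the vectors $\Lambda_R\hat\psi_i$ are no longer orthonormal; what you really use, and what suffices, is the sum $\sum c_i\,\EE_0(\Lambda_R\hat\psi_i)\le 2K+C'$, which then feeds into the projection estimate $\|\phi-\pi_R\phi\|^2\le\EE_0(\phi)/R$ term by term.
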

\begin{proof}
Assume to the contrary that $\gamma$ is tame, so that
$\EE(\gamma) = E \ll \infty$ and  $\rho \defeq \dens \gamma$ is tight.
Define the {\it weight} of an internal set
$B \subset \Star{\Nat}$ to be $\mathrm{wt}\, B = \sum_{i\in B}c_i$,
Since the $c_i$ are infinitesimal, if $\mathrm{wt}\, B \gg 0$, then
$|B|$ is illimited. 
We are going to derive a contradiction as follows. 
For some {\em limited} $r$, we obtain an {\em illimited} set of indices
$A(r)$ such that
$\|\mathrm{Reg}_{r}\hat{\psi}_i - \hat{\psi}_i\| < \epsilon$ for
$i\in A(r)$.
Then $\|\mathrm{Reg}_{r}\hat{\psi}_i\| > 1-\epsilon > 1/2$, and
since the $\hat{\psi}_i$'s are orthonormal, a quick calculation reveals that
$\braket{\mathrm{Reg}_{r}\hat{\psi}_i }{\mathrm{Reg}_{r}\hat{\psi}_j }
\le ({2+\epsilon})/\epsilon < 3\epsilon$.
Thus, the cosine of the angle between $\mathrm{Reg}_{r}\hat{\psi}_i$ and 
$\mathrm{Reg}_{r}\hat{\psi}_j$ is less than $12\epsilon$.
For small $\epsilon$, this condition is a restriction;
a set of directions in the {\em limited-dimensional} space $\Hilb_r$ 
satisfying it must be limited, but $A(r)$ is not. 
Now we proceed to produce the required $r$ and $A(r)$.
Let standard $\epsilon \in (0,1/12)$ be given,
and define for $R\in\Star{\Real}$,
the {\em internal} set
\begin{equation}
A(R) \defeq
\left\{ i : 
\int_{\BX(R/2)^c} \dens \hat{\psi}_i \le \frac{1}{\epsilon} \int_{\BX(R/2)^c} \rho
\,\text{ and }\,
\EE_0(\hat{\psi}_i) \le \frac{E}{\epsilon}
\right\}.
\nonumber
\end{equation}
For every $R$, $\mathrm{wt}\, A(R) \ge 1-\epsilon$, for if this failed, 
there would be too much mass outside $\BX(R/2)$ or $\EE(\gamma)$ would exceed $E$.
Consequently, $|A(R)| > (1-\epsilon)/c_1 \near \infty$, independently of $R$.

If $R$ is illimited, then by
(\ref{ineq:cutoff-proximity},\ref{ineq:cutoff-energy-bd})
and tightness of $\rho$, $\Lambda_R \hat{\psi}_i \near \hat{\psi}_i$ and
$\EE_0(\Lambda_R \hat{\psi}_i) \ll \infty$ for $i \in A(R)$.
And, this implies $\|\mathrm{Reg}_R \hat{\psi}_i - \hat{\psi}_i\| \near 0$.
As a result, the {\em internal} set
\begin{equation}
C \defeq 
\{R\in\Star{\Real}: \forall i \in A(R), \,
\|\mathrm{Reg}_R \hat{\psi}_i - \hat{\psi}_i\| < \epsilon \}
\nonumber
\end{equation}
contains all illimited hyperreals. 
By underflow, $C$ contains some $r \ll \infty$. $A(r)$ is the 
hyperfinite set discussed in the first paragraph.
\end{proof}

\section{Internal energy as function of density ($\Fpure$ and $F$)}
\label{sec:F}

\subsection{Constrained energy minimization}
\label{sec:constrained-min}

We now move away from talking directly, or at least primarily, about states,
instead addressing them through the mediation of density.
The discussion of the Levy-Lieb constrained search formulation in \ref{sec:what-is-here}
introduced the idea of minimizing internal energy over all states with given 
density. However, it is now clear that we want to consider two different internal energy
functions of density, depending upon whether all states or only pure states are considered.
Thus, we define
\begin{align}
\Fpure(\rho) &= \inf \left\{ \EE(\psi) : \psi \in \PureStates, \, \dens \psi = \rho \right\}
\nonumber \\
F(\rho) &= \inf \left\{ \EE(\gamma) : \gamma \in \States, \, \dens \gamma = \rho \right\}.
\end{align}
Important questions about these functions are these: For which densities they are
finite? Are there useful upper/lower bounds in terms of explicit functions of density?
Are $\Fpure$ and $F$ lower semicontinuous? Are the infima in their definitions actually
realized (are they actually minima)?
Using the results of \ref{sec:E-lsc-pure} and \ref{sec:mixed-states},
we can immediately settle the last question.
\begin{thm}
\label{thm:F-infimum-is-min}
The infima in the definitions of $\Fpure$ and $F$ are realized.
\end{thm}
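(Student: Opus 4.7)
The plan is to combine the nonstandard characterizations of infimum and nearstandardness with the two previously-established results: lower-semicontinuity of $\EE$ on states (Thm.~\ref{thm:E-lsc-on-H} and Cor.~\ref{cor:E-lsc-on-mixed-states}), continuity of $\dens$ (Thm.~\ref{thm:dns-is-cts}), and the ``tame implies nearstandard'' statements (Prop.~\ref{prop:tame-implies-ns}, Cor.~\ref{cor:tame-implies-ns-mixed-states}). The case $\Fpure(\rho) = +\infty$ (respectively $F(\rho)=+\infty$) is either vacuous or automatic, so assume the infimum is finite.

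For $\Fpure$, I would first pick an internal minimizer in the nonstandard extension. By the definition of infimum and transfer, the internal set
\[
\bigl\{ \psi \in \Star{\PureStates} : \dens\psi = \Star{\rho},\ \EE(\psi) < \Fpure(\rho) + \tfrac{1}{n} \bigr\}
\]
is nonempty for every limited $n$, so by overflow there exists $\psi \in \Star{\PureStates}$ with $\dens\psi = \Star{\rho}$ and $\EE(\psi) \near \Fpure(\rho)$. I would then verify that $\psi$ is tame: its norm is $1$ (since $\int \Star{\rho} = \Numparts$), its internal energy is limited (it is near the finite standard number $\Fpure(\rho)$), and its density $\Star{\rho}$ is tight because it is the $\Star$-transfer of a standard integrable function. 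Prop.~\ref{prop:tame-implies-ns} then gives $\psi_0 \defeq \st\psi \in \PureStates$ with $\dens\psi_0 = \st\dens\psi = \rho$ and $\EE(\psi_0) \le \st\EE(\psi) = \Fpure(\rho)$. Since the reverse inequality holds by definition, $\psi_0$ is a minimizer.

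For $F$ the structure is identical but uses Cor.~\ref{cor:tame-implies-ns-mixed-states} in place of Prop.~\ref{prop:tame-implies-ns}. Pick an internal $\gamma \in \Star{\States}$ with $\dens\gamma = \Star{\rho}$ and $\EE(\gamma)\near F(\rho)$; it has limited trace norm (equal to $1$) and limited internal energy, and its density $\Star\rho$ is tight, so it is tame. Taking standard parts produces a mixed state $\gamma_0 \in \States$ with $\dens\gamma_0 = \rho$ and $\EE(\gamma_0) \le F(\rho)$, hence a minimizer.

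The only slightly delicate point is the mixed-state case, where one must be sure that the constraint $\dens\gamma = \Star{\rho}$ survives the standard-part operation; this is precisely the content of the continuity conclusion $\dens \st\gamma = \st\dens\gamma$ built into Cor.~\ref{cor:tame-implies-ns-mixed-states}, and it is there that the nontrivial argument of Lem.~\ref{lem:wild-mixed-state} (ruling out mass escaping into infinitesimally-weighted eigenstates) does its work. With that in hand the entire theorem reduces to bookkeeping.
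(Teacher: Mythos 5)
Your argument is correct and follows essentially the same route as the paper: obtain an internal near-minimizer $\psi$ (or $\gamma$), observe that it is tame, invoke Prop.~\ref{prop:tame-implies-ns} (resp.\ Cor.~\ref{cor:tame-implies-ns-mixed-states}) to pass to the standard part while preserving the density constraint and not increasing the energy, and conclude by the defining infimum. You are slightly more explicit than the paper in producing the internal near-minimizer via overflow and in verifying tameness (unit norm, limited energy, tight density), but these are exactly the details the paper's proof leaves implicit.
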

\begin{proof}
Let standard $\rho$ be given and assume $\Fpure(\rho) < +\infty$, as otherwise there is 
nothing to prove.
Let $\psi \in \Star{\Hilb}$ be such that
$\EE(\psi) \near \Star{\Fpure}(\Star{\rho})$. Since $\psi$ is thus tame,
it is nearstandard by Thm. \ref{prop:tame-implies-ns} so $\st \psi$ exists,
and according to Thm. \ref{thm:E-lsc-on-H}, $\EE(\st \psi) \lesssim \EE(\psi) \near \Fpure(\rho)$.
Since $\Fpure(\rho)$ and $\EE(\st\psi)$ are both standard, they are equal.

The proof for $F$ is entirely analogous, just substituting 
Cor. \ref{cor:E-lsc-on-mixed-states} for Thm. \ref{thm:E-lsc-on-H},
and Cor. \ref{cor:tame-implies-ns-mixed-states} for Thm. \ref{thm:E-lsc-on-H}.
\end{proof}

\subsection{Sobolev norms and kinetic energy bounds}
\label{sec:lower-bds-F}

In this section, we derive some important lower bounds on $\Fpure$ and $F$.

\subsubsection{Basic inequality}

Starting from (\ref{eq:density-def}), we deduce
\begin{align}
\nabla\rho(x) 
&\le 2\Numparts \int_{x_1=x} |\psi^* \nabla_{1}\psi|^2\, 
\le 2\Numparts 
\left(\int_{x_1=x} |\psi|^2 \, \right)^{1/2}
\left(\int_{x_1=x}|\nabla_{1}\psi|^2 \, \right)^{1/2}
\nonumber \\
&= 2 \rho(x)^{1/2}
\left(\int_{x_1=x}|\nabla_1 \psi|^2 \, \right)^{1/2}.
\label{ineq:rho-psi}
\end{align}
The subscript on $\nabla$ indicates that differentiation is 
only with respect to $x_1$.

\subsubsection{Two bounds}

There are two useful things we can do with this result.
First, squaring, dividing through by $\rho$ and noting that 
$|\nabla\rho|^2/\rho = |\nabla\sqrt{\rho}|^2$,
\begin{equation}
\|\nabla\sqrt{\rho}\|_2^2 \le \frac{4}{\Numparts} \EE_0(\psi).
\label{ineq:H1-E0-bound-pure}
\end{equation}
To extend this to a mixed state 
$\gamma = \sum |\psi_i\rangle\langle\psi_i |$,
abbreviate $\rho = \dens \gamma$, $\rho_i = \dens \psi_i$ 
and use $\nabla \rho = 2 \sqrt{\rho}\,\, \nabla\sqrt{\rho}$ and
the Cauchy-Schwartz inequality to obtain
\begin{equation}
\left| \sqrt{\rho}\,\, \nabla\sqrt{\rho} \right|
= \left| \sum_i \sqrt{\rho_i}\,\,\nabla\sqrt{\rho_i} \right|
\le \left(\sum_i \rho_i\right)^{1/2} 
\left( \sum_i \left|\nabla\sqrt{\rho_i}\right|^2\right)^{1/2}.
\nonumber
\end{equation}
Since $\rho = \sum_i \rho_i$, divide through by $\sqrt{\rho}$ and
square to get
$\left|\nabla \sqrt{\rho} \right|^2 \le \sum_i \left|\nabla\sqrt{\rho_i}\right|^2$.
Integrating and inserting (\ref{ineq:H1-E0-bound-pure}) yields
\begin{equation}
\|\nabla(\dens\gamma)^{1/2}\|_2^2 \le \frac{4}{\Numparts} \EE_0(\gamma).
\label{ineq:H1-E0-bound}
\end{equation}
For given $\rho$, this applies to any $\gamma$ with $\dens \gamma = \rho$ to give
\begin{equation}
\|\nabla\sqrt{\rho}\|_2^2 \le \frac{4}{\Numparts} F(\rho).
\label{ineq:H1-F-lower-bd}
\end{equation}

Since we wish to work with density rather than the square root of
density, the inequality (\ref{ineq:H1-E0-bound}) is not always convenient.
Alternatively, returning to (\ref{ineq:rho-psi}),
integrating and applying the Cauchy-Schwartz inequality,
\begin{equation}
\int |\nabla\rho| \, 
\le 2 \left( \int \rho \, \right)^{1/2}
\left(\frac{1}{\Numparts}\int|\nabla\psi|^2 \, \right)^{1/2},
\nonumber
\end{equation}
which yields
$\|\nabla\rho\|_1 \le 2\sqrt{\EE_0(\psi)}$.
This is extended to mixed states by a method similar to what gave
(\ref{ineq:H1-E0-bound}), with the result
\begin{equation}
\|\nabla\dens\gamma\|_1 \le 2\sqrt{\EE_0(\gamma)}.
\label{ineq:W11-E0-bound}
\end{equation}
Again, we turn this into a bound for $F$, as
\begin{equation}
\|\nabla\rho\|_1 \le 2\sqrt{F(\rho)}.
\label{ineq:W11-F-lower-bd}
\end{equation}

\subsection{Effective domain}

In this section we will show that
\begin{equation}
\boxed{
\dom \Fpure = \dom F = 
\JN \defeq \left\{\rho \in \DensN \, :\,  \sqrt{\rho} \in H^1({\Bbb R})^3 \right\}
}
\label{eq:effective-domains}
\end{equation}
is the effective domain of both $\Fpure$ and $F$. 
That $\JN \supseteq \dom F \supseteq \dom \Fpure$ follows from
the results of \ref{sec:lower-bds-F} and the fact that
$F \le \Fpure$. So what needs to be shown is $\JN \subset \dom\Fpure$.
This is done by an explicit construction.

\subsubsection{Transporting densities and wavefunctions}
\label{sec:transporting}

Suppose a density $\rho_0$ is given on some domain $\Omega$
and we wish to pull back the mass in $\Omega$ along a bijection
$\varphi : {\Real}^3 \rightarrow \Omega$. That is, we want the
density $\rho$ on ${\Real}^3$ such that
\begin{equation}
\int_A \rho = \int_{\varphi(A)} \rho_0
\nonumber
\end{equation}
for every measurable $A \subset {\Real}^3$.
The solution to this problem is very familiar:
with $J(x) = |\partial\varphi/\partial x|$ denoting the Jacobian
determinant and abbreviating $y = \varphi(x)$,
\begin{equation}
\rho(x) = \rho_0(\varphi(x))\, J(x) = \rho_0(y) J(x).
\label{density-transport}
\end{equation}
What if, instead, we are given a wavefunction 
$\psi_0$ on $\Omega^\Numparts$ with $\dens \psi_0 = \rho_0$ and wish to find 
$\psi$ with $\dens \psi = \rho$?
A solution is nearly as easy. First extend the map $\varphi$ to
the $\Numparts$-particle configuration space as
\begin{equation}
\underline{y} = \underline{\varphi}(\underline{x}) = 
(\varphi(x_1), \varphi(x_2), \ldots \varphi(x_\Numparts)).
\nonumber
\end{equation}
Then, we can take
\begin{equation}
\psi(\underline{x}) = \psi_0(\underline{y}) 
\, \Big|\frac{\partial \underline{\varphi}}{\partial \underline{x}}\Big|^{\frac{1}{2}}
= \psi_0(\underline{y}) \prod_i J(x_i)^{1/2}.
\end{equation}
Naturally, one wants to bound $\|\psi\|_{H^1({\Real}^3)}$ in terms of 
$\|\psi_0\|_{H^1(\Omega)}$, but we shall not pursue that problem in general,
but move on to the special case that interests us.

\subsubsection{Wavefunction construction for $\rho \in \JN$}
\label{sec:wavefunction-construction}

We now construct\cite{Lieb83,Harriman81,Zumbach85}
a wavefunction $\psi$ with $\dens \psi$ equal to given $\rho$ in $\JN$,
and calculate a bound on its internal energy.
Using \ref{sec:transporting} as inspiration, we find a map
$\varphi: {\Real}^3 \rightarrow \Omega$ onto region $\Omega$ and a type of 
density $\rho_0$ on that region for which an appropriate Slater 
determinant can be easily found. We leave spin coordinates out of
consideration, so implicitly all $\sigma$ are supposed to take
the maximum value.

Take $\Omega = [0,1]\times{\Real}\times{\Real}$, and assume
$\rho_0$ has the property
\begin{equation}
\int \rho_0(y^1,y^2,y^3)\, dy^2\, dy^3 = {\Numparts}.
\label{eq:rho0-condition}
\end{equation}
Then, it is readily apparent that the one-particle wavefunctions
[$y \equiv (y^1,y^2,y^3)$]
\begin{equation}
\phi_{0,k}(y) \defeq \left(\frac{\rho_0(y)}{\Numparts}\right)^{1/2} e^{2\pi i k y^1}
\label{eq:phi-0-k}
\end{equation}
indexed by integer $k$ are orthonormal over $\Omega$.
Since the density for each $\phi_k$ is
$\rho_0(y)/\Numparts$, a Slater determinant of $\phi_k$'s for
$\Numparts$ distinct $k$'s will give density $\rho_0$.

All that is now required is a mapping of $\Real^3$ onto $\Omega$ which
carries the given $\rho$ on to a $\rho_0$ satisfying (\ref{eq:rho0-condition}).
With the definition
\begin{equation}
J(x^1) \defeq \frac{1}{\Numparts}\int \rho(x) \, dx^2\, dx^3,
\label{eq:J-def}
\end{equation}
one immediately checks that the following works.
\begin{align}
\varphi(x) &= (y^1,y^2,y^3) = (y^1,x^2,x^3),
\nonumber \\
y^1 &= \int_{-\infty}^{x^1}J(z^1)\, dz^1.
\end{align}
Mass is being redistributed only along the first direction, at a
varying rate so that (\ref{eq:rho0-condition}) is satisfied.

Of course, the notation was chosen because $J$ actually is the Jacobian of $\varphi$:
\begin{equation}
J(x^1) = \frac{dy^1}{dx^1} = \left| \frac{ \partial y}{\partial x} \right| = \frac{\rho(x)}{\rho_0(y)}.
\label{eq:jacobian}
\end{equation}
According to (\ref{density-transport}), the $\phi_{0,k}$ of (\ref{eq:phi-0-k}) are
now transported to $\Real^3$ as
\begin{equation}
\phi_k(x) = \phi_{0,k}(y) \sqrt{J(x)} = 
\sqrt{ \frac{\rho(x)}{\Numparts} } \,\, e^{2\pi i k y^1}.
\end{equation}
Without further calculation, the general construction assures us
that the $\phi_k$ are orthonormal on ${\Real}^3$ and that
a Slater determinant $\psi = |\phi_{k_1},\ldots,\phi_{k_\Numparts}|$ of them 
yields the desired density $\rho$.
As illustrated here, $y$ may be regarded as a function of $x$ or vice-versa, and
either one as a coordinatization of $\Real^3$ or of $\Omega$ as convenient,
because the correspondence between $x$ and $y$ is bijective.

The wavefunction now being constructed, all that remains is to
bound on the kinetic energy of $\psi$ by bounding the kinetic energy of 
each $\phi_k$:
\begin{equation}
\EE_0(\phi_k) = \frac{1}{\Numparts} \|\nabla\sqrt{\rho}\|_2^2  + 
\frac{(2\pi k)^2}{\Numparts} 
\int {\rho(x)} \left(\frac{d y^1}{d x^1}\right)^2 \, dx.
\label{eq:E0-phi-k}
\end{equation}
The second term here will occupy us with a bit of Cauchy-Scwhartz calisthenics.
Using (\ref{eq:jacobian}),
\begin{equation}
\int \left(\frac{d y^1}{d x^1}\right)^2 \rho(x) \, dx
= \int {\rho_0(y)} J^2 \, dy 
= \Numparts \int_0^1 J^2 \, dy^1.
\label{eq:E0-manips-1}
\end{equation}
Estimate the integrand in the final expression as
\begin{align}
J(x^1)^2 
&= \left( \int_{-\infty}^{x_1} {J(z)}^{\frac{1}{2}} 
\left( J(z)^{-\frac{1}{2}} \frac{d{J(z)}}{dz}\right) dz \right)^2
\le \left( \int J \, dx^1\right) 
\int {J}^{-1} \left|\frac{d{J}}{dx^1}\right|^2 dx^1
\nonumber \\
&= \int J^{-1} \left|\frac{d{J}}{dx^1}\right|^2 dx^1.
\label{ineq:manips-2}
\end{align}
The first equality is just the fundamental theorem of calculus, the inequality is
Cauchy-Schwartz, and the final equality follows from $\int J\, dx^1 = \int dy^1 = 1$.
To evaluate the final integral,
Again, we go to work on the integrand with the Cauchy-Schwartz inequality, to obtain
\begin{equation}
\left|\frac{dJ}{dx^1}\right|^2
= \left( \frac{1}{\Numparts}\int 2 \sqrt{\rho(x)} 
\left|\frac{\partial\sqrt{\rho(x)}}{\partial x^1}\right| \, dx^2\, dx^3 \right)^2
\le \frac{4}{\Numparts^2} {J(x^1)} 
\int |\nabla\sqrt{\rho}|^2 \, dx^2\, dx^3.
\nonumber
\end{equation}
Inserting this into (\ref{ineq:manips-2}),
\begin{equation}
J(x^1)^2 \le \int J^{-1} \left|\frac{d{J}}{dx^1}\right|^2 dx^1 
\le  \frac{4}{\Numparts^2} \|\nabla\sqrt{\rho}\|_2^2
\nonumber
\end{equation}
Finally, returning to (\ref{eq:E0-phi-k}) via (\ref{eq:E0-manips-1}) results in
\begin{equation}
\EE_0(\phi_k) \le \frac{1}{\Numparts}\left(1 + \left(\frac{4\pi k}{\Numparts}\right)^2\right)
\|\nabla\sqrt{\rho}\|_2^2.
\nonumber
\end{equation}
This is for one of the orbitals entering our Slater determinant, so
summing over $\Numparts$ values of $|k| < \Numparts$,
\begin{equation}
\Fpure(\rho) \le \EE_0(\psi) \le (1 + 16\pi^2) \|\nabla\sqrt{\rho}\|_2^2.
\label{ineq:final-explicit-construction}
\end{equation}

\subsection{$\JN \subset L^1 \cap L^3 \cap W^{1,1}$}

Combining the upper bound on $\Fpure(\rho)$ from (\ref{ineq:final-explicit-construction})
with the lower bound of (\ref{ineq:H1-F-lower-bd}) yields
\begin{equation}
\boxed{
c \|\sqrt{\rho}\|_{H^1}^2 \le F(\rho) \le \Fpure(\rho) \le c^\prime \|\sqrt{\rho}\|_{H^1}^2
}
\label{ineq:F-equiv-H1-sqrt-rho}
\end{equation}
for some $\Numparts$-dependent constants $c$ and $c^\prime$.
The demonstration of (\ref{eq:effective-domains}) is thus achieved,
with some extra information. 

Characterizations in terms of norms of $\rho$ rather than of $\sqrt{\rho}$
can also be useful. Inequality (\ref{ineq:W11-F-lower-bd}) shows that
$\JN \subset W^{1,1}$, where $W^{1,1}$ is the Sobolev space with norm
$\|f\|_{W^{1,1}} = \int (|f| + |\nabla f|)\, dx$.
Also\cite{Lieb83}, the Sobolev inequality
\begin{equation}
3 \left(\frac{\pi}{2}\right)^{4/3} \|f\|_6^2 \le
\|\nabla f\|_2^2, 
\nonumber
\end{equation}
with $\sqrt{\rho}$ substituted for $f$ yields
\begin{equation}
\|\rho\|_3 \le c F(\rho),
\label{L3-F-bnd}
\end{equation}
for some constant $c$.
Thus, $\JN \subset L^3$, as well.
The inequality (\ref{L3-F-bnd}) will be applied to the consideration
of potentials in \S \ref{Lieb-potls}.

\subsection{Internal energy is lower semicontinuous on $\DensN$}
\label{sec:internal-e-lsc}

\subsubsection{Topologies on $\DensN$}
\label{sec:dens-tops}

The subject of this section,
lower semicontinuity of $\Fpure$ and $F$ on $\DensN$,
is important to the Fenchel conjugacy with the 
ground-state energy $E$ discussed in \S \ref{sec:energy-floor},
among other things.
But, lower semicontinuous with respect to what topology on $\DensN$?
In \S \ref{sec:dens-cts}, we showed that $\dens : \States \rightarrow \Dens$
is continuous with respect to the $L^1$ topology on $\Dens$.
We have not made use of that result, yet.
Continuity of $\dens$ thus holds for any topology weaker than $L^1$.
That is, so long as the topology is not so weak that it fails to
be Hausdorff, $\dens \st \gamma = \st \dens \gamma$ for nearstandard $\gamma$.
(Recall that a Hausdorff topology is such that for $x\not= y$,
one can find neighborhoods $U$ of $x$ and $V$ of $y$, such 
that $U\cap V = \emptyset$.)
The topology is implicit in the standard part operation, which is not
well-defined for a non-Hausdorff topology. 
Continuity of $\dens$ is important for the proof that $\Fpure$ is lsc,
but there is another condition that we need to impose.
\begin{defn}
A topology $\tau$ on $\DensN$ is {\it weak-but-not-leaky} if
it is weaker than, or equivalent to, $L^1$,
yet Hausdorff and such that all nearstandard points are tight. 
\end{defn}
An obvious example of such a topology is the weak-$L^1$ topology
induced by the seminorms $\rho \mapsto \int f \rho$ for $f\in L^\infty$.
A disadvantage of the weak-$L^1$ topology is that it is not metrizable.
In \S \ref{sec:wk-P}, we will construct a metrizable weak-but-not-leaky
topology on $\DensN$, the weak-$\Part$ topology, which is even 
weaker than weak-$L^1$.

\subsubsection{Pure-state internal energy}
\label{sec:Fpure-lsc}

\begin{thm}
\label{thm:Fpure-lsc}
$\Fpure : (\DensN,\tau) \rightarrow \overline{\Real}$ is lsc,
if $\tau$ is weak-but-not-leaky.
\end{thm}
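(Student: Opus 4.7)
The plan is to use the infinitesimal characterization of lsc together with Prop.~\ref{prop:tame-implies-ns}. I fix a standard $\rho \in \DensN$ and $\rho' \in \Star{\DensN}$ with $\rho' \near \rho$ in $\tau$, and aim to show $\Std{\Star{\Fpure}(\rho')} \ge \Fpure(\rho)$. The case $\Std{\Star{\Fpure}(\rho')} = +\infty$ is vacuous, so I assume $\Star{\Fpure}(\rho')$ is limited.

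Step one is to realize the infimum at the nonstandard level: by transfer of Thm.~\ref{thm:F-infimum-is-min}, there exists $\psi' \in \Star{\PureStates}$ with $\dens \psi' = \rho'$ and $\EE(\psi') = \Star{\Fpure}(\rho')$. The crucial observation is that $\psi'$ is tame: its norm is $1$, its internal energy is limited by assumption, and its density $\dens \psi' = \rho'$ is tight because $\rho'$ is $\tau$-nearstandard and $\tau$ is weak-but-not-leaky. Step two invokes Prop.~\ref{prop:tame-implies-ns}: tameness gives a standard part $\psi_0 \defeq \st \psi' \in \PureStates$, together with $\dens \psi_0 = \st \dens \psi'$ (standard part taken in $L^1$) and $\EE(\psi_0) \le \Std{\EE(\psi')}$. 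Step three identifies $\dens \psi_0 = \rho$: since $\tau$ is weaker than $L^1$, the $L^1$-nearness of $\dens \psi_0$ to $\rho' = \dens \psi'$ implies $\tau$-nearness; since $\rho' \near \rho$ in $\tau$ and $\tau$ is Hausdorff, the $\tau$-standard part is unique, forcing $\dens \psi_0 = \rho$. Then
\[
\Fpure(\rho) \le \EE(\psi_0) \le \Std{\EE(\psi')} = \Std{\Star{\Fpure}(\rho')},
\]
which is the desired lsc inequality.

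The main obstacle is the tightness step, and this is precisely what the weak-but-not-leaky hypothesis is designed to supply. Without it, a $\tau$-convergent minimizing sequence could have mass leaking out to infinity, $\psi'$ would be wild, and $\st \psi'$ would fail to exist, breaking the chain. Every other ingredient -- transferred attainment of the infimum, tame implies nearstandard, commutation of $\dens$ with the standard part, and the lsc bound on $\EE$ -- is already packaged by earlier results, so the argument is in essence just a careful deployment of Prop.~\ref{prop:tame-implies-ns} one level up from states to densities, with the Hausdorff clause doing the minor work of pinning the standard part of $\rho'$ down to $\rho$.
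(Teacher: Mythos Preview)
Your proof is correct and follows essentially the same route as the paper's: pick a (near-)minimizing $\psi'$ over $\rho'$, observe it is tame by the weak-but-not-leaky hypothesis, apply Prop.~\ref{prop:tame-implies-ns} to get $\st\psi'$ with the right density and lower energy, and conclude. The only cosmetic difference is that you invoke transfer of Thm.~\ref{thm:F-infimum-is-min} to obtain an \emph{exact} minimizer $\EE(\psi') = \Star{\Fpure}(\rho')$, whereas the paper simply takes $\psi'$ with $\EE(\psi')\near\Fpure(\rho')$, which follows directly from the definition of infimum and does not require the attainment theorem; either choice works.
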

\begin{proof}
Let $\rho \in \DensN$ be given and $\rho^\prime \in \Star{\DensN}$
be $\tau$-near $\rho$. Assume $\Fpure(\rho^\prime) \ll \infty$,
since otherwise there is nothing to show. $\rho^\prime$ is tight
by definition of weak-but-not-leaky, so
there is a tame $\psi^\prime$ with $\dens \psi^\prime = \rho^\prime$,
and $\EE(\psi^\prime) \near \Fpure(\rho^\prime)$.
By Prop. \ref{prop:tame-implies-ns} and \ref{sec:dens-tops},
$\psi^\prime$ is nearstandard, with 
$\EE(\st\psi^\prime) \lesssim \EE(\psi^\prime) \near \Fpure(\rho^\prime)$,
and $\dens \st \psi^\prime = \st \dens \psi^\prime = \st \rho^\prime = \rho$.
\end{proof}

\subsubsection{Mixed-state internal energy}
\label{sec:F-lsc}

\begin{defn}
If $A$ is a subset of a vector space, then its
convex hull is defined as
\begin{equation}
\co A \defeq \left\{ \sum_{i=1}^n \alpha_i x_i \,:\, \sum_{i=1}^n \alpha_i = 1 
\,\,\text{ and }\,\, x_1,\ldots,x_n\in A \right\}.
\end{equation}
The closed convex hull of $A$, denoted $\cco A$ is the closure
of $\co A$ in whatever topology is being considered on the ambient 
vector space.
\end{defn}
The goal now is to extend the result of \ref{sec:Fpure-lsc} 
from $\Fpure$ to $F$ by showing that $\epi F = \cco \epi \Fpure$.
For, lower semicontinuity (convexity) of $F$ is equivalent to closure 
(resp., convexity) of its epigraph.
\begin{thm}
\label{thm:F-is-cco-of-Fpure}
$\epi F = \cco \epi \Fpure$ in $\DensN \times {\Real}$,
with any weak-but-not-leaky topology on $\DensN$.
\end{thm}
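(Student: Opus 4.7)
The plan is to prove the two inclusions separately. For $\cco \epi \Fpure \subseteq \epi F$, the inclusion $\epi \Fpure \subseteq \epi F$ is immediate from $F \leq \Fpure$, so it suffices to show that $\epi F$ is closed and convex. Convexity follows from the linearity of $\dens$ and $\EE$ on the cone of mixed states: given $(\rho_j, a_j) \in \epi F$ for $j=1,2$ and $\alpha \in [0,1]$, Thm.~\ref{thm:F-infimum-is-min} supplies minimizers $\gamma_j \in \States_1$ with $\dens \gamma_j = \rho_j$ and $\EE(\gamma_j) = F(\rho_j) \leq a_j$, and then $\alpha \gamma_1 + (1-\alpha) \gamma_2 \in \States_1$ has density $\alpha \rho_1 + (1-\alpha)\rho_2$ and energy bounded by $\alpha a_1 + (1-\alpha) a_2$. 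Closedness of $\epi F$ is equivalent to $F$ being lsc, which I would prove by an argument parallel to Thm.~\ref{thm:Fpure-lsc}: take standard $\rho$ and $\rho' \in \Star{\DensN}$ that is $\tau$-near $\rho$ with $F(\rho') \ll \infty$; pick a tame $\gamma'$ with $\dens \gamma' = \rho'$ and $\EE(\gamma') \near F(\rho')$ (tameness is assured because the weak-but-not-leaky hypothesis forces $\rho'$ tight); then invoke Cor.~\ref{cor:tame-implies-ns-mixed-states} to get $\gamma'$ nearstandard with $\dens \st \gamma' = \rho$ and $\EE(\st \gamma') \lesssim F(\rho')$, so $F(\rho) \leq \EE(\st \gamma') \lesssim F(\rho')$.

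For the reverse inclusion $\epi F \subseteq \cco \epi \Fpure$, I would take $(\rho, a)$ with $F(\rho) \leq a < \infty$ and exhibit it as a closure point of convex combinations drawn from $\epi \Fpure$. Use Thm.~\ref{thm:F-infimum-is-min} to produce a minimizer $\gamma = \sum_i c_i \DiracProj{\hat\psi_i}$ with $\|\hat\psi_i\| = 1$, $\sum c_i = 1$, and $\EE(\gamma) = F(\rho)$; setting $\rho_i \defeq \dens \hat\psi_i \in \DensN$, each pair $(\rho_i, \EE(\hat\psi_i))$ lies in $\epi \Fpure$. If the index set is finite, $(\rho, F(\rho)) = \sum_i c_i (\rho_i, \EE(\hat\psi_i))$ already sits in $\co \epi \Fpure$. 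For a countable index set, set $C_n = \sum_{i \leq n} c_i$ and $\gamma_n = C_n^{-1} \sum_{i \leq n} c_i \DiracProj{\hat\psi_i} \in \States_1$; then $(\dens \gamma_n, \EE(\gamma_n))$ is a genuine convex combination of points of $\epi \Fpure$ with weights $c_i/C_n$, and as $n \to \infty$ it converges to $(\rho, F(\rho))$ in $L^1 \times \Real$ (since $C_n \to 1$, the renormalized density truncation converges in $L^1$, and the non-negative energy sum converges by monotone convergence), hence in the weaker product topology as well. Thus $(\rho, F(\rho)) \in \cco \epi \Fpure$. Finally, the upward-closure of $\co \epi \Fpure$ in the energy coordinate --- replacing each $(\rho_i, \EE(\hat\psi_i))$ by $(\rho_i, \EE(\hat\psi_i) + (a - F(\rho)))$, which is still in $\epi \Fpure$ --- lifts this to $(\rho, a)$ for every $a \geq F(\rho)$.

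The main obstacle I anticipate is the countable-index case: partial sums of pure-state densities have total mass strictly less than $\Numparts$, so they sit outside $\DensN$ and must be renormalized by the factor $C_n^{-1}$ before they can be recognized as convex combinations in $\co \epi \Fpure$. The bookkeeping is routine once one checks that $L^1$-convergence survives the rescaling $C_n \to 1$, but it is the one place where the eigendecomposition of $\gamma$ interacts nontrivially with the constraint $\int \rho = \Numparts$. The lsc part is essentially the pure-state proof of Thm.~\ref{thm:Fpure-lsc} rerun with Cor.~\ref{cor:tame-implies-ns-mixed-states} in place of Prop.~\ref{prop:tame-implies-ns}, and requires no new ideas.
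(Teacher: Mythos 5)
Your proof is correct, but for the inclusion $\cco \epi \Fpure \subseteq \epi F$ you take a genuinely different route from the paper. The paper argues this direction by a direct nonstandard argument: it picks a standard point $(\rho,\lambda) \in \cco\epi\Fpure$, extracts a hyperfinite convex combination of $\Star$-pure-states near it, assembles these into a tame $\gamma\in\Star{\States}_1$, and invokes Cor.~\ref{cor:tame-implies-ns-mixed-states} to land in $\epi F$. You instead appeal to the minimality of the closed convex hull: since $\epi\Fpure \subseteq \epi F$ by $F\le\Fpure$, it suffices to show $\epi F$ is closed and convex, which you do by (i) a convexity argument using linearity of $\dens$ and $\EE$ on mixed states, and (ii) a direct lsc argument for $F$ that reruns the proof of Thm.~\ref{thm:Fpure-lsc} with Cor.~\ref{cor:tame-implies-ns-mixed-states} in place of Prop.~\ref{prop:tame-implies-ns}. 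That direct lsc argument is correct, but note it reverses the paper's logical flow: the paper proves the theorem first and then derives $F$ lsc (Cor.~\ref{cor:F-is-lsc}) as a consequence, whereas you prove $F$ lsc first as a lemma and use it to establish the theorem, rendering the subsequent corollary redundant. Your other inclusion, $\epi F \subseteq \cco\epi\Fpure$, is essentially the paper's part (a) --- truncate-and-renormalize the spectral decomposition --- except you carefully make explicit the upward-closure step (shifting the energy coordinate by $a-F(\rho)$) that the paper leaves implicit when it says it ``suffices to show $(\rho,\EE(\gamma))\in\cco\epi\Fpure$.'' Either ordering works; the paper's is a bit more economical because it extracts the lsc of $F$ for free, while yours is a bit more transparent about which ingredients are doing what.
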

\begin{proof}
To show that $\epi F = \cco \epi \Fpure$, we establish
inclusion in both directions.

\noindent (a)
$\epi F \subseteq \cco \epi \Fpure$:
It suffices to show that, if $\dens \gamma = \rho$, 
then \hbox{$(\rho,\EE(\gamma)) \in \cco \epi \Fpure$}.
From the second normal form for $\gamma$, define
\begin{equation}
w_n = \sum_{i=1}^n c_i, \quad
\gamma_n = 
\sum_{i=1}^n \frac{c_i}{w_n} \DiracProj{\hat{\psi}_i}.
\nonumber
\end{equation}
Then,
\begin{equation}
\EE(\gamma_n) = 
\sum_{i=1}^n 
\frac{c_i}{w_n} 
\EE(\hat{\psi}_i) \to \EE(\gamma),
\quad
\dens \gamma_n = 
\sum_{i=1}^n \frac{c_i}{w_n} \dens \hat{\psi}_i \to \dens \gamma.
\nonumber
\end{equation}
$(\dens \gamma_n,\EE(\gamma_n))$ is in $\co \epi \Fpure$, 
because $(\dens \hat{\psi}_i,\EE(\hat{\psi}_i)) \in \epi \Fpure$.
\hfill\break
Therefore,
\hbox{$(\dens \gamma,\EE(\gamma)) \in \cco \epi \Fpure$,} 
as required.

\noindent (b) $\epi F \supseteq \cco \epi \Fpure$:
Let $(\rho,\lambda) \in \cco \epi \Fpure$ be given (standard).
If $\lambda = +\infty$, there is nothing to show, so assume
$\lambda < +\infty$.
By definition of closed convex envelope, there is $N \in \Star{\Nat}$
and collections of coefficients and orthonormal $\Star$-pure-states
such that
\begin{equation}
\sum_{i\le N} c_i = 1, \quad \rho \near \sum_{i\le N} c_i\, \dens(\psi_i), \quad
\lambda \near \sum_{i\le N} c_i\, \EE(\psi_i).
\end{equation}
But this shows that
$\gamma = \sum_{i=1}^N c_i |\psi_i\rangle\langle\psi_i | \in \Star{\States}_1$ 
satisfies
$\dens \gamma \near \rho$ and $\EE(\gamma) \near \lambda$.
Since $\gamma$ is thus tame, Cor. \ref{cor:tame-implies-ns-mixed-states}
implies that $\EE(\st{\gamma}) \le \lambda$ and
$(\dens \st{\gamma},\EE(\st{\gamma})) \in \epi F$.
\end{proof}
This of course immediately gives us lower-semicontinuity.
\begin{cor}
\label{cor:F-is-lsc}
$F: (\DensN,\tau) \rightarrow \overline{\Real}$ is lower semicontinuous 
if $\tau$ is weak-but-not-leaky.
\end{cor}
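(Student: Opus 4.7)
The plan is to deduce this immediately from Thm. \ref{thm:F-is-cco-of-Fpure} via the geometric characterization of lower semicontinuity recalled in \S \ref{sec:lsc-def}: for any $f: X \rightarrow \overline{\Real}$ on a topological space $X$, $f$ is lsc if and only if $\epi f$ is closed in $X \times \Real$ equipped with the product topology. Since the weak-but-not-leaky topology $\tau$ is Hausdorff, this product topology is the natural ambient topology for $\epi F$ and $\epi \Fpure$.

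First I would note that $\cco \epi \Fpure$ is, by the very definition of the closed convex hull (i.e. the closure of the convex hull), a closed subset of $\DensN \times \Real$. Then I would invoke Thm. \ref{thm:F-is-cco-of-Fpure} to identify $\epi F$ with exactly this set. Combining the two observations, $\epi F$ is closed, and the epigraph characterization yields that $F$ is lsc with respect to $\tau$.

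The main obstacle has in fact already been overcome in the preceding theorem: the delicate direction of Thm. \ref{thm:F-is-cco-of-Fpure} — extracting from a hyperfinite convex combination of $\Star$-pure states a standard tame mixed state realizing the relevant epigraph point — was where the nonstandard tools from Cor. \ref{cor:tame-implies-ns-mixed-states} had to do genuine work. By contrast, passing from \textquotedblleft epigraph equals closed convex hull\textquotedblright\ to \textquotedblleft$F$ is lsc\textquotedblright\ is purely a matter of unwinding definitions, so I expect no real obstacle here.
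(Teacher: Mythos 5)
Your proposal matches the paper's intended argument exactly: the paper explicitly states just before Thm.~\ref{thm:F-is-cco-of-Fpure} that lower semicontinuity of $F$ is equivalent to closure of its epigraph, and then remarks after the theorem that the corollary follows ``immediately.'' Identifying $\epi F$ with $\cco \epi \Fpure$ (which is closed by definition of closed convex hull) and invoking the epigraph characterization of lsc is precisely that one-line deduction.
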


\section{Weak-$\Part$ topology on $\DensN$}
\label{sec:wk-P}

The aim in this section is to construct a convenient, metrizable,
weak-but-not-leaky topology on $\DensN$ which is called the 
weak-$\Part$ topology.
Lower semicontinuity of $F$ with respect to it immediately implies
the same for any $\|\cdot\|_p$, $1\le p < \infty$. 
Although we are interested in only $\DensN$, the weak-$\Part$
metric will be defined on all of $\Dens$.

\subsection{Hierarchy of partitions}

Let $\Part^0$ denote a partition of ${\Real}^3$ into a
regular grid of cubes of side length $1$, with corners
at integer coordinates. To make it a genuine partition,
we agree that each cell includes its front, left and bottom faces,
but not the back, right or top.
Now, divide each cell of $\Part^0$ into $2^3$ cubes of side length $2^{-1}$;
these latter cubes are the cells of $\Part^1$. 
Continue subdividing so that each cell of $\Part^n$ is the union of $2^3$ cells
of $\Part^{n+1}$ to obtain an infinite hierarchy of partitions
\begin{equation}
\Part^0 < \Part^1 < \cdots.
\nonumber
\end{equation}
A set $A$ is {\it $\Part^n$-measurable} if it is the union of cells of
$\Part^n$, {\it $\Part^n$-finitely-measurable} if it is a finite
such union, and simply {\it $\Part$-measurable} if a finite union of 
cells. In this last case, $A$ is actually $\Part^m$-finitely-measurable,
where $m$ is the maximal rank of members of $A$, but the terminology
allows us to avoid mentioning $m$.

We define a linear projection $\pi_n$ from integrable functions 
to $\Part^n$-measurable integrable functions as follows.
For a function $f \in L^1({\Real}^3)$,
$\pi_n f$ is constant over each cell of $\Part^n$, with value on 
a cell equal to the average of $f$ over that cell. 
Thus, $f$ and $\pi_n f$ have the same integral over each cell.

\subsubsection{a metric for the weak-$\Part$ topology}
\label{sec:metric-for-wk-P}

\begin{defn}
For $A$ a $\Part^n$-finitely-measurable set, 
the continuous seminorm $p_A$ on $L^1({\Real}^3)$ is defined as
\begin{equation}
p_A(f) \defeq \int_A |\pi_n f|\, \le \| f\|_1.
\label{wk-Part-seminorms}
\end{equation}
\end{defn}
The weak-$\Part$ topology is simply the topology generated by
these seminorms. However, not all these seminorms are needed.
The smaller collection $\{p_\Omega : \Omega \in \Part\}$ suffices.
That is, $f_n \stackrel{\mathrm{wk}-\Part}{\to} f$ if and only if
$p_\Omega(f-f_n) \to 0$ for every $\Omega \in \Part$.
We can go even further and use just $\{p_{C_n},n=0,1,\ldots\}$
where $C_n$ is the largest $\Part^n$-measurable set inside the
ball $B_0(n)$ of radius $n$ centered at $0$.
Finally, if a metric is preferred, 
\begin{equation}
d_\Part(f,g) = \sum_{n=0}^\infty \frac{p_{C_n}(f-g)}{2^{n+2}}
\label{d-Part}
\end{equation}
generates the weak-$\Part$ topology.
Note that for $f,g \in L^1$, $d_\Part(f,g) \le \|f-g\|_1/2$.

\subsection{How weak is it?}
\label{sec:how-weak}

An example shows an important way in which weak-$\Part$
is weaker than the $L^1$ topology.
Let $f$ equal $1$ on some cell $\Omega \in \Part^0$, and $0$
elsewhere. One can easily construct a function $f_n$ which is
zero off $\Omega$, and inside $\Omega$ equal to $2$ on half 
the $\Part^n$ cells and to $0$ on the other half in such a
way that $\pi_m f_n \equiv 0$ for $m < n$. Then,
$f_n \stackrel{\text{wk-}\Part}{\to} f$, but $\|f_n - f\|_p = 1$
for all $n$ and all $1 \le p < \infty$. On the other hand,
since $|\int_\Omega f\, | \le \|f\|_p |\Omega|^{1/p^\prime}$, 
$L^p$ convergence implies wk-$\Part$ convergence.
Thus, on $\DensN$, weak-$\Part$ is weaker than $L^p$ ($1\le p < \infty$).

The $f_n$ of the example became more and more oscillatory
as $n$ increased. In our application, that sort of behavior
implies ever increasing kinetic energy.

\subsection{Approximation by projection}
\label{sec:approx-by-projection}

\subsubsection{Weak-$\Part^n$ is weak-but-not-leaky}

Recall that $C_n$ is the union of $\Part^n$-cells contained
fully within the ball of center $0$ and radius $n$, and
$1_{C_n}$ is the indicator function of $C_n$, equal to 1 on $C_n$
and to zero on its complement $C_n^c$.
\begin{lem}
\label{lem:cutoff-proj-approx}
Let $g \in L^1(\Real^3)$. Then, 
both $1_{C_n}\,\Pi_n g$ and $\Pi_n g$ tend to $g$ in $L^1$ as $n\to\infty$.
\end{lem}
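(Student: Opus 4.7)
The plan is to reduce both statements to the single assertion that $\Pi_n g \to g$ in $L^1$, and then obtain the cutoff version by a simple tail estimate.

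First I would record the key technical fact that $\Pi_n$ is an $L^1$-contraction: on each cell $Q \in \Part^n$ of volume $|Q|$, the cell-average $\alpha_Q = |Q|^{-1}\int_Q g$ satisfies $|\alpha_Q|\cdot|Q| \le \int_Q |g|$, so summing over cells yields $\|\Pi_n g\|_1 \le \|g\|_1$. (Equivalently, $\Pi_n$ is conditional expectation with respect to the $\sigma$-algebra generated by $\Part^n$.)

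Next I would establish $\Pi_n g \to g$ in $L^1$ by a density argument. For $h \in C_c(\Real^3)$ supported in some ball $B_0(R)$ and $n$ large enough that the mesh size $2^{-n}$ is less than $1$, both $h$ and $\Pi_n h$ are supported in $B_0(R+1)$. Uniform continuity of $h$ on this ball gives $\|\Pi_n h - h\|_\infty \to 0$, and hence $\|\Pi_n h - h\|_1 \to 0$. For a general $g \in L^1$ and $\epsilon > 0$, pick $h \in C_c$ with $\|g-h\|_1 < \epsilon$; the contraction property then yields
\[
\|\Pi_n g - g\|_1 \le \|\Pi_n(g-h)\|_1 + \|\Pi_n h - h\|_1 + \|h - g\|_1 \le 2\epsilon + \|\Pi_n h - h\|_1,
\]
which is eventually less than $3\epsilon$.

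Finally, for the cutoff version I would write
\[
\|1_{C_n}\Pi_n g - g\|_1 \le \|1_{C_n}(\Pi_n g - g)\|_1 + \|1_{C_n^c}\, g\|_1 \le \|\Pi_n g - g\|_1 + \int_{C_n^c}|g|,
\]
and observe that the first term tends to zero by the previous step and the second by absolute continuity of the Lebesgue integral, since $C_n \uparrow \Real^3$. The only mildly subtle point is the density reduction; once one knows $\Pi_n$ is an $L^1$-contraction and acts nicely on $C_c$, everything else is routine.
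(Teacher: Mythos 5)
Your proposal is correct and follows essentially the same route as the paper: approximate $g$ by $h\in C_c(\Real^3)$, use uniform continuity to get $\|\Pi_n h - h\|_1\to 0$, exploit the fact that $\Pi_n$ is an $L^1$-contraction to transfer the estimate to $g$, and dispose of the cutoff term by integrability of $g$ over the shrinking complements $C_n^c$. The only cosmetic difference is that you prove $\Pi_n g\to g$ first and add the cutoff estimate afterward, whereas the paper bundles the four terms into a single triangle inequality and then remarks that the uncut case is the same with one term dropped.
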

\begin{proof}
Consider first a continuous function with bounded support,
$f \in C_c({\mathrm R}^3)$.
It is easy to see that $\|f - \Pi_n f \|_1 \to 0$. For, being compactly 
supported, $f$ is uniformly continuous, so given $\epsilon > 0$, there
is $\delta > 0$ such that $|f(x+y) - f(x)| < \epsilon/M$ 
whenever $|x-y| < \delta$, where $M$ is the Lebesgue measure of the support of $f$.
But then the average of $f$ over a region $A$
with diameter less than $\delta$ differs by less than $\epsilon/M$
from its value at any point in $A$. 
Therefore, if $n$ is large enough that $\sqrt{3} 2^{-n} < \delta$,
it follows that $\|f - \Pi_n f\|_1 < \epsilon$.

Now, given $g \in L^1(\Real^3)$, and $\epsilon > 0$, 
There is $f\in C_c({\Real}^3)$ with $\|g - f\|_1 < \epsilon/3$,
because continuous functions with bounded support are dense in $L^1({\Real})$.
Then, by the triangle inequality, 
\begin{align}
\|g - 1_{C_n}\, \Pi_n \rho\|_1 
&\le 
\|\rho - f\|_1 
+ \|f - \Pi_n f \|_1
+ \|\Pi_n(f - \rho)\|_1
+ \|\Pi_n1_{C_n^c}\, g\|_1
\nonumber \\
&\le 
2\|\rho - f\|_1 
+ \|f - \Pi_n f \|_1
+ \|1_{C_n^c}\, g\|_1.
\nonumber
\end{align}
The final line follows because $\Pi_n$ is an $L^1$ contraction.
In the final line, the second term tends to zero by the previous 
paragraph and the last term simply because $g$ is integrable.
Therefore, $\limsup_{n\to\infty} \|g - 1_{C_n}\,\Pi_n g\|_1 < \epsilon$.
The proof for $\Pi_n g$ is even easier, as the final term in the
displayed inequality is absent in that case.
\end{proof}
This lemma immediately implies that the weak-$\Part$ topology is
weak-but-not-leaky, as shown in the following corrolary. 
However, it may be skipped without loss since a stronger result
is proven independently in Thm. \ref{thm:L1-approx-dichotomy}.
\begin{cor}
The weak-$\Part$ topology on $\DensN$ is weak-but-not-leaky.
\end{cor}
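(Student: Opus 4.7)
The plan is to verify the three defining conditions of \emph{weak-but-not-leaky} in turn: (i) weak-$\Part$ is no finer than the $L^1$ topology on $\DensN$, (ii) it is Hausdorff, and (iii) every nearstandard point of $\Star{\DensN}$ is tight. Items (i) and (ii) are essentially immediate. For (i), the bound $d_\Part(f,g) \le \|f-g\|_1/2$ noted after (\ref{d-Part}) shows that $L^1$-convergence forces weak-$\Part$ convergence. For (ii), if standard $\rho \not= \sigma$ in $\DensN$, Lemma \ref{lem:cutoff-proj-approx} gives $\Pi_n(\rho-\sigma) \to \rho - \sigma$ in $L^1$, so $p_{C_m}(\rho-\sigma) > 0$ for some $m$; the open $p_{C_m}$-balls of radius $p_{C_m}(\rho-\sigma)/3$ around $\rho$ and $\sigma$ then separate them.

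The substantive item is (iii). Let standard $\rho \in \DensN$ be given and let $\rho' \in \Star{\DensN}$ be weak-$\Part$-near $\rho$; the goal is $\int_{\BX(R)^c}\rho' \near 0$ for every illimited $R$ (with $\BX(R)$ replaced by the ball $B_0(R)$ suffices too, since $C_n\subset B_0(n)$). The key observation is that $\pi_n$ preserves the integral over every $\Part^n$-cell, hence $\int_A(\rho-\rho') = \int_A \pi_n(\rho-\rho')$ for any $\Part^n$-measurable $A$, and therefore
\beq
\Bigl|\int_A(\rho-\rho')\Bigr| \le \int_A |\pi_n(\rho-\rho')| = p_A(\rho-\rho').
\nonumber
\eeq
Applying this with $A = C_n$ (which is $\Part^n$-measurable by construction) and using $p_{C_n}(\rho-\rho')\near 0$ for every standard $n$, one gets $\int_{C_n}\rho' \near \int_{C_n}\rho$.

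To finish, invoke the shared normalization $\int\rho' = \Numparts = \int\rho$ to pass to complements: $\int_{C_n^c}\rho' \near \int_{C_n^c}\rho$ for every standard $n$. Since $\rho\in L^1$ and $\bigcup_n C_n = \Real^3$ (a consequence of Lemma \ref{lem:cutoff-proj-approx} applied to $\rho$, or simply absolute continuity of the integral), given standard $\epsilon>0$ we may choose standard $n$ with $\int_{C_n^c}\rho < \epsilon$, whence $\int_{C_n^c}\rho' \lesssim \epsilon$. For any illimited $R$, the inclusion $C_n \subset B_0(n) \subset B_0(R)$ yields $B_0(R)^c \subset C_n^c$, so $\int_{B_0(R)^c}\rho' \lesssim \epsilon$; as $\epsilon$ was an arbitrary standard positive, $\int_{B_0(R)^c}\rho'\near 0$, i.e., $\rho'$ is tight. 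The only mild obstacle is matching the cubical geometry of $C_n$ to the spherical control needed in the definition of tightness, which is handled by the inclusion $C_n\subset B_0(n)$ built into the definition of $C_n$.
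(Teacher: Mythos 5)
Your proof is correct and follows essentially the same line as the paper: appeal to the metric bound $d_\Part \le \tfrac12\|\cdot\|_1$ for (i), Lemma~\ref{lem:cutoff-proj-approx} for Hausdorffness, and for tightness the observation that $\Pi_n$ preserves cell integrals to pass from $p_{C_n}(\rho-\rho')\near 0$ to $\int_{C_n}\rho'\near\int_{C_n}\rho$, then use normalization to go to complements and the integrability of the standard $\rho$ to finish. The only difference is that you spell out the Hausdorff separation and the cube-versus-ball inclusion $C_n\subset B_0(n)\subset B_0(R)$, which the paper leaves implicit.
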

\begin{proof}
That the weak-$\Part$ topology is weaker than $L^1$ is trivial.
Lemma \ref{lem:cutoff-proj-approx} shows that it is Hausdorff.
There only remains to show that if $\rho \in \DensN$ and 
$\rho^\prime$ is $d_\Part$-near $\rho$, that $\rho^\prime$ is tight.
But, $\int_{C_n}|\Pi_n(\rho^\prime - \rho)| \near 0$ implies
that $\int_{C_n} \rho^\prime \near \int_{C_n}\rho$. Since both
integrate to $\Numparts$,
$\int_{C_n^c} \rho^\prime \near \int_{C_n^c}\rho$ for all limited $n$. 
But $\int_{C_n^c}\rho^\prime$ is decreasing in $n$ and $\int_{C_n}\rho \to 0$.
\end{proof}

\subsection{Weak-$\Part$ with internal energy control is as good as $L^1$}
The next theorem shows that if a sequence $\rho_n$ converges with respect to 
$d_{\Part}$, then the only way it can avoid converging with respect to $L^1$ 
is to have divergent internal energy.

\subsubsection{Poincar\'e inequality\cite{Adams,Attouch,Acosta+Duran03,Bebendorf03}}

Let $\Omega$ be a convex bounded region with diameter $\mathrm{diam}(\Omega)$.
Then, if $f$ is in $W^{1,1}(\Omega)$ (it and its 
distributional gradient $\nabla f$ are integrable over $\Omega$),
and denoting the mean of $f$ over $\Omega$ by $\langle{f}\rangle_\Omega$,
the Poincar\'e inequality we need is
\begin{equation}
\int_\Omega |f - \langle{f}\rangle_\Omega| \, d{x} \le \frac{\pi}{2} \, \mathrm{diam }(\Omega)
\int_\Omega |\nabla f| \, d{x}.
\label{poincare}
\end{equation}
Applying this inequality to each cell of $\Part^n$ and summing the results yields
\begin{equation} 
\|\rho - \Pi_n \rho\|_1
\le \frac{\pi\sqrt{3}}{2^{n+1}} \|\nabla \rho\|_1 \le \frac{c}{2^n} F(\rho)^{1/2},
\quad \text{for }\rho \in \DensN.
\label{ineq:coarse-graining-close}
\end{equation} 
Always, of course, $\|\rho - \Pi_n\rho\|_1 \le 2\Numparts$.

\subsubsection{A dichotomy}
\label{sec:dichotomy}

In preparation for Thm. \ref{thm:L1-approx-dichotomy}, we derive a useful inequality.
For any measurable set $B$ and $\rho,\rho^\prime\in \DensN$,
\begin{equation}
\|\rho - \rho^\prime\|_1 
 = \|(\rho - \rho^\prime)1_B\|_1
+ \|(\rho - \rho^\prime)1_{B^c}\|_1,
\nonumber
\end{equation}
Now we estimate the second term on the right-hand side.
\begin{align}
\|(\rho - \rho^\prime)1_{B^c}\|_1
&\le \int_{B^c}\rho\, dx + \int_{B^c}\rho^\prime\, dx
\le
2 \int_{B^c}\rho\, dx + 
\left| \int_{B^c}(\rho^\prime -\rho)\, dx \right|
\nonumber \\
&= 2 \int_{B^c}\rho\, dx + 
\left| \int_{B}(\rho^\prime -\rho)\, dx \right|
\le 2 \int_{B^c}\rho\, dx + 
\|(\rho - \rho^\prime)1_{B}\|_1.
\nonumber
\end{align}
The first inequality is a consequence of the fact that densities
are non-negative. The equality in the second line follows because
both densities $\rho$ and $\rho^\prime$ have the same total integral ($\Numparts$),
and the final inequality is just pulling the absolute value inside the
integral. Plugging this last result into the previous display,
\begin{equation}
\rho,\rho^\prime \in \DensN \Longrightarrow
\|\rho - \rho^\prime\|_1 
\le 2 \|(\rho - \rho^\prime)1_B\|_1
+ 2 \int_{B^c}\rho\, dx.
\label{L1-B-split}
\end{equation}

\begin{thm}
\label{thm:L1-approx-dichotomy}
Suppose $\rho_n \stackrel{wk-\Part}{\to} \rho$ in $\DensN$.
Then, either $\rho_n \stackrel{L^1}{\to} \rho$ 
\hfill\break
or 
{$\liminf_n \Fpure(\rho_n) = +\infty$.}
\end{thm}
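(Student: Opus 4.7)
The plan is to prove the contrapositive-flavored statement: if we pass to a subsequence on which $\Fpure(\rho_n) \le M$ uniformly (which the hypothesis $\liminf_n \Fpure(\rho_n) < +\infty$ permits), then that subsequence converges to $\rho$ in $L^1$. Relabeling such a subsequence as $(\rho_n)$, I may assume $F(\rho_n) \le \Fpure(\rho_n) \le M$ for all $n$, with $\rho_n \to \rho$ in weak-$\Part$.

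Given $\epsilon > 0$, I first localize via the split (\ref{L1-B-split}) with $B = C_k$:
\[
\|\rho - \rho_n\|_1 \le 2\|(\rho - \rho_n)1_{C_k}\|_1 + 2\int_{C_k^c}\rho.
\]
Since $\rho \in L^1$, the second term is driven below $\epsilon/2$ by choosing $k$ large. The task reduces to controlling $\|(\rho - \rho_n)1_{C_k}\|_1$.

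To do so, I insert the level-$m$ coarsening $\Pi_m$ at some resolution $m \ge k$:
\[
\|(\rho - \rho_n)1_{C_k}\|_1 \le \|\rho - \Pi_m\rho\|_1 + \int_{C_k}|\Pi_m(\rho - \rho_n)| + \|\rho_n - \Pi_m\rho_n\|_1.
\]
The Poincar\'e estimate (\ref{ineq:coarse-graining-close}) bounds the outer two terms by $c\,2^{-m}\sqrt{M}$, uniformly in $n$ thanks to $F(\rho_n) \le M$; I pick $m \ge k$ large enough to make each arbitrarily small. For the middle term, since $m \ge k$ the set $C_k$ is a finite union of cells $\omega \in \Part^m$, and constancy of $\Pi_m$ on each $\omega$ yields
\[
\int_{C_k}|\Pi_m(\rho - \rho_n)| = \sum_{\substack{\omega \in \Part^m \\ \omega \subset C_k}} \left|\int_\omega (\rho - \rho_n)\right|,
\]
a finite sum of absolute values of continuous weak-$\Part$ seminorms. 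With $k$ and $m$ now fixed, weak-$\Part$ convergence drives this expression to $0$ as $n \to \infty$. Combining the three pieces produces $\|\rho - \rho_n\|_1 < \epsilon$ for all large $n$.

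The key obstacle is the order of quantifiers: $k$ is fixed from $\rho$ alone, $m$ is then fixed using the uniform energy bound $M$ on $F(\rho_n)$, and only afterward is weak-$\Part$ convergence invoked at the now-fixed level $m$. The uniform bound $M$ is precisely what allows $m$ to be chosen before $n$; without it, the Poincar\'e-driven telescoping terms cannot be tamed uniformly in $n$, and this is exactly why the dichotomy arises.
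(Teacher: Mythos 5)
Your argument is a correct standard (epsilon--$N$) transcription of the paper's infinitesimal proof: the same two ingredients appear, the split (\ref{L1-B-split}) and the three-term triangle inequality (\ref{ineq:triangle}) with the coarsening $\Pi_m$ inserted, and the Poincar\'e estimate (\ref{ineq:coarse-graining-close}) playing the same role. Where the paper uses Robinson's Lemma to produce a single illimited $N$ that simultaneously serves all three terms, you achieve the equivalent by fixing $k$, then fixing $m$ (using the uniform energy bound $M$ on the subsequence), and only then sending $n\to\infty$; the uniform bound is exactly what lets $m$ be chosen before $n$, and you make this observation explicitly. So this is the same proof in a different dialect, and the quantifier-ordering point you close with is precisely the content Robinson's Lemma supplies in the original.

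One small slip: you claim the Poincar\'e estimate bounds \emph{both} outer terms by $c\,2^{-m}\sqrt{M}$, but for $\|\rho-\Pi_m\rho\|_1$ that requires $F(\rho)\le M$, which is not among your hypotheses. It does hold a posteriori, e.g.\ by lower semicontinuity of $F$ (Cor.~\ref{cor:F-is-lsc}), but you should either say so or, more simply, handle the $\rho$ term exactly as the paper does via Lemma~\ref{lem:cutoff-proj-approx} (which gives $\|\rho-\Pi_m\rho\|_1\to 0$ for any fixed $\rho\in L^1$ without any energy hypothesis). With that adjustment the argument is complete.
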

\begin{proof}
The equivalent nonstandard phrasing of the lemma is:
if $\rho^\prime$ is $d_\Part$-near $\rho$, then either
it is $L^1$-near $\rho$, or it has illimited internal energy.
There is nothing to show unless $\Star{\Fpure}(\rho^\prime) \ll \infty$, so
assume that is so.

The inequality (\ref{L1-B-split}) plays a key role,
and a triangle inequality,
\begin{equation}
\|\rho - \rho^\prime\|_1 
\le \|\rho - \Pi_n\rho\|_1
+ \|\Pi_n(\rho - \rho^\prime)\|_1
+ \|\rho^\prime - \Pi_n\rho^\prime\|_1,
\label{ineq:triangle}
\end{equation}
will be used much as in Lemma \ref{lem:cutoff-proj-approx}.
We proceed by showing that all three terms on the right-hand side
are infinitesimal for some $N\in\Star{\Nat}$.
By definition of the weak-$\Part$ topology,
$\int_{C_n}|\Pi_n(\rho-\rho^\prime)|\, dx \,\, \near 0$ for every
limited $n$ [recall the definition of the $C_n$ from (\ref{d-Part})].
Robinson's lemma (\S \ref{sec:nsa-overspill}) 
implies existence of an illimited $N$, fixed in 
the following discussion, such that that continues to hold for all $n\le N$.

\noindent 1st term of (\ref{ineq:triangle}):
$\rho$ is standard and $N$ illimited. By
Lemma \ref{lem:cutoff-proj-approx},
$\|\rho - \Pi_N\rho\|_1 \near 0$.

\noindent 2nd term:
Apply (\ref{L1-B-split}) with $C_N$ for $B$, and note
that $\int_{C_N^c}\rho\, \near 0$ ($\rho$ is standard).
Thus, $\|\Pi_N(\rho - \rho^\prime)\|_1 \near 0$.

\noindent 3rd term:
$\rho^\prime$ has limited internal energy.
Therefore, (\ref{ineq:coarse-graining-close}) implies
$\|\rho^\prime - \Pi_N\rho^\prime \|_1 \near 0$. 
\end{proof}

Internal energy functionals are not upper semicontinuous because
by putting in short-wavelength wiggles which are a negligible change
in $L^p$ (for any $p$) norm can drive the internal energy 
arbitrarily high. 
The inequality (\ref{thm:L1-approx-dichotomy}) and
Thm. \ref{ineq:coarse-graining-close} show the flip-side of this
phenomenon: under a bound on internal energy, we really only need
to know a density down to a certain spatial resolution. Below that
scale, it must be almost flat.

\section{Ground-state energy, or `energy floor' ($E$)}
\label{sec:energy-floor}
\setcounter{subsubsection}{0}

At long last, we are ready to begin considering external
one-body potentials.
A one-body potential $v$ induces the $\Numparts$-body external potential
$\underline{v}(\underline{x}) = \sum_{i=1}^\Numparts v(x_i)$.
The usual way of defining the ground-state energy is
\begin{equation}
E(v) = \inf
\left\{ \EE(\psi) + \langle \psi | \underline{v} \psi \rangle 
\,:\, \psi\in\PureStates \right\}.
\nonumber
\end{equation}
Of course, the term $\langle \psi | \underline{v} \psi \rangle$ only depends
on the state through its density, so the minimization can done in
two stages:
\begin{equation}
E(v) = \inf
\left\{ \Fpure(\rho) + \pair{v}{\rho} \,:\, {\rho\in\DensN} \right\}.
\label{E-variational}
\end{equation}
Here, we have introduced the notational abbreviation
\begin{equation}
\pair{v}{\rho} \defeq \int v\, \rho.
\end{equation}
This is common in situations of pairing between elements of a Banach space
and its dual, but we will use it even in cases which may yield $\pair{v}{\rho} = +\infty$.
Now, if the external potential $v$ is bounded, then $\int v \rho$ is well-defined 
for all $\rho \in \DensN$ and the minimization in (\ref{E-variational}) 
makes sense. But if it is unbounded, we must proceed more cautiously.

\subsubsection{on the term ``ground-state energy''}
\label{sec:term-ground-state-energy}

The term ``ground-state energy'' for $E(v)$ is customary in the
DFT literature. But, unless the infimum is actually a minimum, 
there is no genuine {\em ground state}. $v \equiv 0$ is the simplest
example of a potential for which there is no minimum.
A term, such as `energy floor', which does not imply the existence of 
a ground state might therefore be more appropriate.
Mostly, we shall avoid the issue by simply writing `$E$'.

\subsubsection{Habitability and stability}

\begin{defn}
An external potential $v$ is {\it habitable} if $E(v) < +\infty$ and
{\it stable} if $E(v) > -\infty$.
\end{defn}
Needless to say, we are interested in external potentials that are
both habitable and stable, but, ensuring these two conditions involves
very differing considerations.

\subsubsection{some notation}

$\pair{v}{\rho}$ is not a function; it is the value of the
function $x \mapsto \pair{v}{x}$ at the argument $\rho$.
To deal with this awkwardness, we write $\pair{v}{\cdot}$ for the 
function $\rho \mapsto \pair{v}{\rho}$.
The notation
\begin{equation}
v^+ \defeq v \vee 0, \,\,\, v^- \defeq (-v) \vee 0
\nonumber
\end{equation}
for the positive and negative part of an external potential,
such that $v = v^+ - v^-$ and
$\pair{v}{\cdot} = \pair{v^+}{\cdot} - \pair{v^-}{\cdot}$,
will also be useful in this section.

\subsection{Bounded external potentials}
\label{sec:E-bdd-potl}

We begin by restricting our attention to external potentials in
$L^\infty$, for which we use the notation `$\PotlsBdd$'.

If $w\in \PotlsBdd$, then $\pair{w}{\cdot}$ is
$L^1$ {\em continuous}, since $L^\infty$ is the Banach
space dual of $L^1$. That is, 
$|\pair{w}{\rho} - \pair{w}{\rho^\prime}| \le \|w\|_\infty \|\rho-\rho^\prime\|_1$.
And since $F$ is $L^1$ lsc, so too is $F+\pair{w}{\cdot}$. 
However, Cor. \ref{cor:F-is-lsc} established that $F$ is lsc with respect
to much weaker topologies, so we should try to show the same for 
$F+\pair{w}{\cdot}$. The full class of weak-but-not-leaky topologies
is beyond our reach, but the weak-$\Part$ topology can be handled.
\begin{prop}
\label{prop:F+PotlsBdd-lsc}
For $w \in \PotlsBdd$, 
$F + \pair{w}{\cdot}: 
(\DensN,\text{wk-}\Part) \to \overline{\Real}$ is lower semicontinuous.
\end{prop}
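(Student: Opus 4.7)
The plan is to combine two facts already established: wk-$\Part$ lower semicontinuity of $F$ itself (Cor. \ref{cor:F-is-lsc}), and the dichotomy of Thm. \ref{thm:L1-approx-dichotomy}, which promotes wk-$\Part$ convergence to $L^1$ convergence whenever the internal energy remains limited. The bounded potential $w$ is only well-behaved under $L^1$ convergence, so the dichotomy is precisely the bridge we need.

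Working in the infinitesimal idiom, I would take standard $\rho \in \DensN$ and $\rho' \in \Star{\DensN}$ with $\rho' \near \rho$ with respect to $d_\Part$. The goal is
\beq
F(\rho) + \pair{w}{\rho} \;\le\; \Std{\bigl(F(\rho') + \pair{w}{\rho'}\bigr)}.
\nonumber
\eeq
Split into two cases according to the value of $F(\rho')$. Since $w$ is bounded and $\rho' \in \Star{\DensN}$ has $*$-integral $\Numparts$, the external-energy contribution is a priori limited: $|\pair{w}{\rho'}| \le \|w\|_\infty \Numparts$. Hence if $F(\rho')$ is illimited, the right-hand side above is $+\infty$ and there is nothing to show.

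In the remaining case $F(\rho') \ll \infty$, the hypotheses of the dichotomy (Thm. \ref{thm:L1-approx-dichotomy}) are met. The statement in the paper is phrased for $\Fpure$, but its proof uses the internal-energy bound only through the Poincar\'e estimate (\ref{ineq:coarse-graining-close}), which is stated for $F$; so the same argument yields: wk-$\Part$ nearness together with limited $F$ implies $L^1$ nearness. Thus $\rho'$ is $L^1$-near $\rho$. Because $L^\infty$ is the dual of $L^1$, $\pair{w}{\cdot}$ is $L^1$-continuous, giving $\pair{w}{\rho'} \near \pair{w}{\rho}$. Simultaneously, Cor. \ref{cor:F-is-lsc} supplies $F(\rho) \le \Std{F(\rho')}$. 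Adding these two yields the required inequality.

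The only step that requires any care is the dichotomy case split: one must notice that the bounded external-energy term $\pair{w}{\rho'}$ cannot mask a divergence in $F(\rho')$, and hence the $F$-bounded branch of the dichotomy is really what one needs to invoke. I expect no other obstacle; the argument generalizes transparently to any potential for which $\pair{w}{\cdot}$ is $L^1$-continuous on $\DensN$, foreshadowing the treatment of Lieb's $L^\infty + L^{3/2}$ potentials in \S \ref{sec:energy-floor}.
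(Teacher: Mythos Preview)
Your proposal is correct and is essentially the paper's own proof: split on whether $F(\rho')$ is illimited (trivial case, since $|\pair{w}{\rho'}|\le\Numparts\|w\|_\infty$) or limited, and in the latter invoke the dichotomy Thm.~\ref{thm:L1-approx-dichotomy} to upgrade to $L^1$-nearness, then use $L^1$-continuity of $\pair{w}{\cdot}$ together with lower semicontinuity of $F$. Your remark that the dichotomy, though stated for $\Fpure$, applies equally with $F$ is well taken; alternatively, the equivalence (\ref{ineq:F-equiv-H1-sqrt-rho}) shows $F(\rho')\ll\infty$ iff $\Fpure(\rho')\ll\infty$, so the issue does not arise.
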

\begin{proof}
Take $\rho$ in $\DensN$ and $\rho^\prime$ wk-$\Part$-near $\rho$.
According to Lemma \ref{thm:L1-approx-dichotomy}, either $F(\rho^\prime) \near \infty$,
or $\rho^\prime$ is $L^1$-near $\rho$. In the former case, of course
$F(\rho^\prime) \gtrsim F(\rho)$,
because $|\int w \rho^\prime | \le \Numparts \|w\|_\infty$ is bounded.
In the latter, the conclusion follows by $L^1$ lsc of $F$ and $L^1$
continuity of $\pair{w}{\cdot}$.
\end{proof}

\subsubsection{Fenchel conjugate and 
biconjugate\cite{ET,Aubin+Ekeland,Borwein+Zhu,Phelps88}}
\label{sec:biconjugate}

We take a break to do some general geometric analysis.
A proper (not identically $+\infty$) function 
$f: X \rightarrow \overline{\Real}$ on a vector space $X$
is given, along with a vector space $Y$ of linear functionals $X\rightarrow\Real$.
The topology on $X$ is that induced by $Y$, denoted $\sigma(X,Y)$. 
By definition, then, a set is a neighborhood of the origin if and only
if it contains an intersection 
$\cap_n \{x : |\pair{\xi_n}{x}| < \epsilon_n\}$, 
with $\xi_n \in Y$ and $\epsilon_n > 0$. In the $\sigma(X,Y)$ topology, 
a point $x_0$ is outside a closed convex set
$A$ if and only if there is $(\zeta,c) \in Y\times\Real$ such that
$\pair{\zeta}{x} > c > \pair{\zeta}{x_0}$ for all $x \in A$, that is, 
$x_0$ can be strictly separated from $A$ by a linear functional in $Y$.
Here is an explanation for that.
Forgetting about $A$'s convexity for the moment, there
must be a collection $\{(\xi_i,c_i): 1\le i\le n\}$ and $\epsilon > 0$
such that $c_i = \xi(x_0)$, and for every $x \in A$, there is
some $i$ such that $|\pair{\xi_i}{x} - c_i| > \epsilon$.
Arrange the $\xi_i$'s into a map 
$\underline{\xi} = (\xi_1,\ldots,\xi_n) : X \rightarrow \Real^n$
into a finite-dimensional Euclidean space. Then, the image of
$A$, $\underline{\xi}(A)$ is outside the $\epsilon$ ball around
$\underline{c} = (c_1,\ldots,c_n)$. But if $A$ is convex, then
so is its image in $\Real^n$. We can apply {\em finite-dimensional} 
separation properties in $\Real^n$ to obtain a linear combination
$\sum a_i \xi_i$ which achieves the desired separation.

The function defined on $Y$ by
\begin{equation}
f^*(\zeta) \defeq \sup
\left\{ \pair{\zeta}{x} - f(x): x\in X \right\},
\end{equation}
is called, among other names, the {\it Fenchel conjugate}.
Up to signs, the relationship between
$f$ and $f^*$ is essentially the same as that between $F$ and $E$.
Note that 
\begin{equation}
\forall x\in X.\,\,
f(x) \ge c + \pair{\zeta}{x}
\,\,\, \Leftrightarrow \,\,\,
-f^*(\zeta) \ge c. 
\nonumber
\end{equation}
Defining the half-space indexed by $(\zeta,c) \in Y\times\Real$ by
\begin{equation}
H(\zeta,c) \defeq \left\{ (x,z) \in X\times\Real : z \ge c + \pair{\zeta}{x} \right\},
\end{equation}
we see that $\epi f \subseteq H(\zeta,c)$ if and only if 
$-f^*(\zeta) \ge c$.
(Recall that the epigraph of $f$ is the set
$\epi f = \left\{ (x,z) \in X\times\Real : f(x) \le z \right\}$
of points on or above the graph of $f$.)

The proposition we are aiming for is this:
with $\hat{f}$ the function satisfying $\epi \hat{f} = \cco \epi f$,
\begin{equation}
\hat{f}(x) = 
\sup
\left\{ \pair{\zeta}{x} - f^*(\zeta): {\zeta\in Y} \right\} 
= f^{**}(x).
\label{eq:biconjugate}
\end{equation}
An equivalent statement is that
$\cco \epi f$ is equal to the intersection of all $H(\zeta,c)$ that contain
$\epi f$, and it is in this form that we prove it, by showing that
any point {\em not} in $\epi f$ falls outside some such half-space.
We prove this assuming that $f \ge 0$ (adequate for our purposes), 
and later comment on the general case.
Suppose $(x_0,z_0) \not\in \cco \epi f$. Then, there is 
$(\zeta,\beta)\in Y\times\Real$, $c\in\Real$ and $\epsilon > 0$ such that
\begin{equation}
\forall (x,z) \in \epi f.\,\,
\pair{\zeta}{x} + \beta z > c +\epsilon > c > \pair{\zeta}{x_0} + \beta z_0.
\nonumber
\end{equation}
Since we can always increase $z$ without leaving $\epi f$, $\beta \ge 0$.
If $x_0 \in \dom f$, then $(x_0,z) \in \epi f$ for large enough $z$,
so $\beta >0$ in that case.
If $\beta > 0$, dividing through by $\beta$, we see that
$\epi f$ is in $H(-\zeta/\beta,c/\beta)$, but $(x_0,z_0)$ is not.
The other case is $\beta = 0$. Then,
\begin{equation}
c + \pair{-\zeta}{x} <  0 <
c + \pair{-\zeta}{x_0}, \,\, 
\forall (x,z) \in \epi f.
\nonumber
\end{equation}
Thus, since $f \ge 0$, $f(x) > Mc + \pair{-M\zeta}{x}$ for any $M > 0$.
But, for large enough $M$, we will certainly have $z_0 < Mc + \pair{-M\zeta}{x_0}$.
That completes the proof assuming $f \ge 0$.
If that is not the case, since $f$ is proper,
there is some $x_0 \in \dom f$. The $\beta$ we get for this $x_0$ cannot
be zero, as remarked earlier. Hence,
$f + \pair{\zeta/\beta}{\cdot} - c/\beta \ge 0$. 
If the proposition is true for this tilted and shifted function, 
it is also true for $f$.

\subsubsection{application to $F$}
\label{sec:F-sufficiency}

Recall that $\epi F = \cco \epi \Fpure$ (Thm. \ref{thm:F-is-cco-of-Fpure}),
where the closure is with respect to any weak-but-not-leaky topology on
$\DensN$, such as $L^1$ or weak-$\Part$.
Now, the weak-$\Part$ topology is generated by the indicator functions of
cells of $\Part$. Thus, to apply \ref{sec:biconjugate}
we take $Y$ to be
\begin{equation}
\PotlsSimple \defeq \mathrm{span} \left\{1_\Omega \,:\, \Omega \in \Part \right\},
\end{equation}
the vector space of functions spanned by the indicator functions of
cells of $\Part$. A potential in $\PotlsSimple$ is nonzero on only a
finite number of cells. Then,
\begin{equation}
F(\rho) = \sup
\left\{ E(v) - \pair{v}{\rho} : {v \in \PotlsSimple} \right\}.
\label{eq:simple-potls-are-sufficient}
\end{equation}
If we call a set of external potentials {\it $F$-sufficient} 
just in case the supremum over
the functions $E(v) - \pair{v}{\cdot}$ is $F$, 
then the display says that $\PotlsSimple$ is $F$ sufficient. 
Since always
\begin{equation}
E(v) \le F(\rho) + \pair{v}{\rho},
\end{equation}
any larger class of external potentials is also $F$-sufficient. 
We are not claiming that $\PotlsSimple$ is a minimal $F$-sufficient class,
but it is certainly pretty small.
\begin{rem}
Starting from $L^1$ lower-semicontinuity of $F$, one may similarly show the
more-traditional, but weaker, result that $L^\infty$ is $F$-sufficient. The proof 
uses the Hahn-Banach principle in the form of the assertion that the 
closures of a convex set relative to the initial and weak topology are the same.
\end{rem}
\subsection{$[0,+\infty]$-valued external potentials}
\label{Potl-up}

We now move from the relatively modest space $\PotlsBdd$ of
bounded external one-body potentials to a much larger class $\PotlsUp$.
In contrast to the perturbations we consider later, it
is not a vector space, but only a convex cone.
This class consists of all measurable external potentials with values in
the non-negative extended reals:
\begin{equation}
\PotlsUp = (\Real^3 \rightarrow [0,+\infty]).
\end{equation}
Why deal with such a large class of potentials?
First, they have practical use in modeling
various sorts of confining potentials, such as a harmonic oscillator 
potential or even an infinite cubic-well.
Neither of these is in any $L^p$ space, and
a nice feature of our treatment is that such confining potentials
are integrated into the {\em density} functional picture without
any need to go back to talking explicitly about wavefunctions.
Second, if we are going to deal with external potentials unbounded
above, there is hardly any extra work to go all the way, for what
we do. Third, and closely related to the second point, the differing
{\em physical} significances of unboundedness above and below for
an external potential is brought into the starkest possible relief.

A potential $w\in\PotlsUp$ can be approximated density-wise
by {\em bounded} potentials.
That is, with the truncations 
\begin{equation}
(w\wedge n)(x) \defeq (\text{if } w(x) < n \text{ then } w(x) \text{ else } n),
\label{eq:truncation-def}
\end{equation}
we have
\begin{equation}
\langle w\wedge n,\rho\rangle \nearrow
\langle w,\rho \rangle, \,\, \text{as } {n\to\infty}.
\nonumber
\end{equation}
Thus,
\begin{equation}
F(\rho) + \langle w\wedge n,\rho\rangle \nearrow
F(\rho) + \langle w,\rho \rangle.
\nonumber
\end{equation}
The point is that if we can show that 
$\rho \mapsto F(\rho) + \langle w \wedge n,\rho\rangle$ is lower semicontinuous
for each $n$, then so is the supremum (\S \ref{sec:lsc-def}).
But, we've already done that.
\begin{thm}
\label{thm:F+PotlsUp-lsc}
For $w \in \PotlsUp$, 
$F_w \defeq F + \pair{w}{\cdot}: 
(\DensN,\text{wk-}\Part) \to \overline{\Real}$ is convex and lower semicontinuous.
\end{thm}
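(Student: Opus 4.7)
The plan is essentially to execute the outline already sketched in the paragraph preceding the theorem: approximate $w\in\PotlsUp$ from below by the bounded truncations $w\wedge n$ of (\ref{eq:truncation-def}), invoke Prop.~\ref{prop:F+PotlsBdd-lsc} for each truncation, and then take a supremum.

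More explicitly, I would proceed in three steps. First, fix $\rho\in\DensN$. Since $\rho\ge 0$ and $0 \le w\wedge n \nearrow w$ pointwise, the monotone convergence theorem (applied to the measure $\rho\,dx$) gives $\pair{w\wedge n}{\rho} \nearrow \pair{w}{\rho}$ in $[0,+\infty]$. Consequently, as functions on $\DensN$,
\beq
F + \pair{w}{\cdot} \;=\; \sup_{n\in\Nat} \bigl( F + \pair{w\wedge n}{\cdot}\bigr),
\nonumber
\eeq
with a monotonically increasing family on the right.

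Second, for each $n$ the truncation $w\wedge n$ lies in $\PotlsBdd$ (it is nonnegative and bounded above by $n$), so Prop.~\ref{prop:F+PotlsBdd-lsc} applies to give that $F + \pair{w\wedge n}{\cdot}$ is wk-$\Part$ lower semicontinuous. Third, as noted in \S\ref{sec:lsc-def}, the pointwise supremum of any family of lsc functions is lsc (the epigraph is the intersection of the individual epigraphs, which is closed). Hence $F + \pair{w}{\cdot}$ is wk-$\Part$ lsc.

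There is no genuine obstacle here; the only point that warrants a moment's care is that $\pair{w\wedge n}{\rho}$ and $\pair{w}{\rho}$ may be $+\infty$, but the argument above works in the topped reals $\overline{\Real}$, and the sup-of-lsc fact was formulated there too. All the real work has already been done in Prop.~\ref{prop:F+PotlsBdd-lsc} (which itself rested on the dichotomy Thm.~\ref{thm:L1-approx-dichotomy} and Cor.~\ref{cor:F-is-lsc}); the extension to $\PotlsUp$ is just monotone approximation.
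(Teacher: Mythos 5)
Your proof is correct and follows exactly the route the paper takes: truncate $w$ to $w\wedge n\in\PotlsBdd$, invoke Prop.~\ref{prop:F+PotlsBdd-lsc} for each truncation, use monotone convergence to see that $F+\pair{w}{\cdot}$ is the pointwise supremum, and conclude by the sup-of-lsc fact from \S\ref{sec:lsc-def}. This is precisely the content of the paragraph preceding the theorem, which the paper's one-line proof simply references.
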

\begin{proof}
By Prop. \ref{prop:F+PotlsBdd-lsc} and the preceding discussion.
\end{proof}

\subsubsection{sufficient condition for habitability of $\PotlsUp$ external 
potentials}
\label{sec:habitability-condition}

\begin{prop}
\label{prop:habitability-condition}
$w \in \PotlsUp$ is habitable
if $\int_U w < \infty$
for some open set $U\in\Real^3$.
\end{prop}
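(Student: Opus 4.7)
The plan is to exhibit a single density $\rho \in \DensN$ satisfying $F(\rho) + \pair{w}{\rho} < \infty$; habitability $E(w) < \infty$ then follows immediately from the variational formula (\ref{E-variational}) together with $F \le \Fpure$.

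First, since $U$ is open I would pass to an open ball $B$ with compact closure $\overline{B} \subset U$. Because $w \ge 0$, one has $\int_B w \le \int_U w < \infty$. The point of shrinking is only to ensure that any function supported in $\overline{B}$ automatically lives inside $U$ and is ``safely away'' from where $w$ might be infinite outside $U$.

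Next I would produce the density concretely: pick any non-negative $f \in C_c^\infty(\Real^3)$ supported in $\overline{B}$ and normalized so that $\int f^2 = \Numparts$, and set $\rho \defeq f^2$. Then $\rho \in \DensN$, and $\sqrt{\rho} = f \in H^1(\Real^3)$, so the upper bound in the boxed inequality (\ref{ineq:F-equiv-H1-sqrt-rho}) gives
\beq
F(\rho) \le \Fpure(\rho) \le c^\prime \|f\|_{H^1}^2 < \infty. \nonumber
\eeq
Moreover $\rho$ is bounded, supported in $\overline{B}$, so
\beq
\pair{w}{\rho} = \int_B w\, \rho \, \le \|f\|_\infty^2 \int_B w \, \le \|f\|_\infty^2 \int_U w \, < \infty. \nonumber
\eeq
Combining the two estimates yields $E(w) \le F(\rho) + \pair{w}{\rho} < \infty$.

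There is no real obstacle here. The only modeling choice worth noting is taking $\rho$ of the form $f^2$ with $f$ smooth rather than, say, a rescaled indicator: this guarantees the $H^1$-regularity of $\sqrt{\rho}$ needed to invoke the upper bound on $\Fpure$ from Section \ref{sec:wavefunction-construction}, while boundedness of $\rho$ is what cashes in the finite-integral hypothesis on $w$.
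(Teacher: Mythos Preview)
Your proof is correct and follows essentially the same approach as the paper: construct a smooth, bounded density supported inside a ball contained in $U$, invoke (\ref{ineq:F-equiv-H1-sqrt-rho}) to bound $F(\rho)$, and use boundedness of $\rho$ together with $\int_U w < \infty$ to bound $\pair{w}{\rho}$. Your version is in fact more explicit than the paper's, spelling out $\rho = f^2$ so that $\sqrt{\rho} = f \in H^1$ is immediate.
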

\begin{proof}
Since $U$ contains an open ball, one can easily find a smooth 
$\rho$ with support in $U$, bounded by $\|\rho\|_\infty < \infty$. 
By inequality (\ref{ineq:F-equiv-H1-sqrt-rho}), $F(\rho) < \infty$,
and $\pair{w}{\rho} \le \|\rho\|_\infty \int_U w < \infty$.
\end{proof}

\subsection{$F$-norm, and relatively $F$-small perturbations}
\label{sec:F-norm}

An potential unbounded below poses a danger of {\em instability}.
For such a potential, we can find a sequence $\rho_n$ of densities
such that $\pair{v}{\rho_n} \to -\infty$. Intuitively, this is achieved
by squeezing the density more and more tightly into a bottomless potential
well. To assure that the {\em total} energy cannot be driven
to $-\infty$ in this way, we make sure that the kinetic energy cost of
such a squeezing eventually dominates. 
This motivates the following definition.
\begin{defn}
The {\it $F$-norm} of a potential $v$ is
\begin{equation}
\|v\|_F \defeq \inf\left\{ a\in (0,\infty] : 
\forall \rho \in \dom F.\,
\pair{|v|}{\rho} \le a F(\rho)
\right\}.
\end{equation}
A potential $v$ is {\it $F$-small} if $\|v\|_F < 1$,
and {\it relatively $F$-small} if it has a decomposition $v = v_1 + v_2$, with
$v_1$ $F$-small and $\|v_2\|_\infty < \infty$.
The set of all $F$-small potentials is denoted 
$\PotlsF$, and the subset of non-negative ones, ${\PotlsF}^{+}$.
\end{defn} 
Thus, the set of relatively $F$-small potentials is $\PotlsF + \PotlsBdd$, but
there is no implication that the decomposition is unique.

The appelation `$F$-norm' is justified because the norm conditions
$\|v + w\|_F \le \|v\|_F + \|w\|_F$, $\|a v\|_F = |a| \|v\|_F$ ($a$ real)
and $\|v\|_F = 0 \Rightarrow v \equiv 0$ are satisfied. 
These imply that $\|\cdot\|_F$ is convex. 
\begin{lem}
$\PotlsF$ is convex and balanced
\rm{(}i.e., $x\in \PotlsF$ implies $-x \in \PotlsF$\rm{)}.
\end{lem}

With that, we are ready to define our big space of potentials.
\begin{defn}
\begin{equation}
\Potls 
\defeq  \PotlsUp + \PotlsBdd + \PotlsF.
\label{eq:big-potl-space}
\end{equation}
\end{defn}
\begin{thm}
\label{thm:stability}
Every external potential in $\Potls$ is stable.
\end{thm}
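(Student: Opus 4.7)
The plan is to exploit the three-piece structure of $\Potls$ so that each summand is controlled by a different mechanism. Fix $v \in \Potls$ and pick any decomposition $v = u_1 + u_2 + u_3$ with $u_1 \in \PotlsUp$, $u_2 \in \PotlsBdd$, and $u_3 \in \PotlsF$. The non-negative unbounded piece $u_1$ only pushes the energy up; the bounded piece $u_2$ can contribute at most a finite additive constant depending on $\|u_2\|_\infty$; and the $F$-small piece $u_3$, which is the only one capable of driving things down dramatically, is by design dominated by $F$ with strictly less than unit coefficient. So I would aim to establish a uniform lower bound of the shape $F(\rho) + \pair{v}{\rho} \ge (1-\|u_3\|_F)F(\rho) - \|u_2\|_\infty \Numparts$, from which stability is immediate because $F \ge 0$ and $\|u_3\|_F < 1$.

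Concretely, restrict first to $\rho \in \dom F$. Then $\pair{u_1}{\rho} \ge 0$ since $u_1 \ge 0$; $\pair{u_2}{\rho} \ge -\|u_2\|_\infty \Numparts$ since $\int \rho = \Numparts$; and $\pair{u_3}{\rho} \ge -\pair{|u_3|}{\rho} \ge -\|u_3\|_F F(\rho)$ using the definition of the $F$-norm. One small point to record here is that the infimum defining $\|u_3\|_F$ is in fact a minimum in the inequality sense: for every $a > \|u_3\|_F$ the bound $\pair{|u_3|}{\rho} \le a F(\rho)$ holds, and taking $a \searrow \|u_3\|_F$ gives the bound with $a = \|u_3\|_F$ itself. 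Summing the three estimates yields
\[
F(\rho) + \pair{v}{\rho} \;\ge\; (1-\|u_3\|_F)F(\rho) - \|u_2\|_\infty \Numparts \;\ge\; -\|u_2\|_\infty \Numparts.
\]
For $\rho \notin \dom F$, $F(\rho)=+\infty$ dominates the only potentially $-\infty$ contribution $-\pair{|u_3|}{\rho}$ (again via the $F$-norm bound, interpreted in the extended reals), so such $\rho$ do not undermine the lower bound. Hence $E(v) \ge -\|u_2\|_\infty\Numparts > -\infty$.

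The only real obstacle is bookkeeping rather than mathematics: the decomposition $v = u_1+u_2+u_3$ is highly non-unique, and one must check that the estimate is insensitive to this choice. It is, because it depends only on the existence of some decomposition with $\|u_3\|_F<1$ and $\|u_2\|_\infty<\infty$, and any legitimate element of $\Potls$ admits such a decomposition by definition. A secondary point worth a sentence in the proof is that the slack $1-\|u_3\|_F$ is strictly positive; this is exactly where the strict inequality $\|u_3\|_F<1$ in the definition of $\PotlsF$ earns its keep, ensuring that $F(\rho)$ genuinely absorbs the worst negative contribution rather than merely balancing it.
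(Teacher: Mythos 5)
Your argument is essentially the paper's: decompose $v$ into its three constituents, use $u_1\ge0$ to drop the $\PotlsUp$ piece, a uniform bound $-\|u_2\|_\infty\Numparts$ for the bounded piece, and the $F$-norm estimate $\pair{|u_3|}{\rho}\le\|u_3\|_F F(\rho)$ with $\|u_3\|_F<1$ to absorb the dangerous piece into $F$. The only cosmetic difference is that you work with the decomposition $\PotlsUp+\PotlsBdd+\PotlsF$ while the paper uses the equivalent $\PotlsUp+\Const-\PotlsF^+$ from (\ref{eq:big-potl-space}); your extra remarks (that the $F$-norm infimum delivers the bound with $a=\|u_3\|_F$, and that $\rho\notin\dom F$ is harmless) are correct bookkeeping the paper leaves implicit.
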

\begin{proof}
Noting that $\Potls$ can also be split as $\PotlsUp + \Const - \PotlsF^+$,
let $v$ be decomposed as $v = w + u - u^\prime \in \PotlsUp + \Const - \PotlsF^+$.
Then, since $w+u$ is bounded below, so is $\pair{w+u}{\rho}$ independently of 
$\rho \in \DensN$.
Also,
\begin{equation}
F(\rho) - \pair{u^\prime}{\rho} \ge (1 - \|u^\prime\|_F) F(\rho) > 0.
\nonumber 
\end{equation}
\end{proof}

\subsection{attitude and topology}
\label{sec:potential-topology}
Our attitude is not to view $\Potls$ as monolithic. 
Rather, the idea is that there is a habitable background potential $w \in \PotlsUp$, 
essentially regarded as fixed, and a variable additional part in $\PotlsBdd + \PotlsF$.
This idea was already signalled by the notation
$F_w \defeq F + \pair{w}{\cdot}$ introduced in Thm. \ref{thm:F+PotlsUp-lsc}.
If $w$ is habitable, then $F_w$, in addition to being convex and lower semicontinuous,
is also proper (somewhere less than $+\infty$, everywhere greater than $-\infty$).
$F_w$ is thus well-suited to be considered a modification of $F$ which can take the
place of the latter in the theory. Accordingly, 
for $w\in \PotlsUp$ habitable, we write
\begin{equation}
E_w(v) = \inf \left\{ F_w(\rho) + \pair{v}{\rho} \,:\, {\rho\in\DensN} \right\},\quad
v \in \PotlsBdd + \PotlsF.
\label{eq:E_w-minimization}
\end{equation}
A natural question is whether $E_w(v)$ is continuous as a function of $v$, but to make
it meaningful, $\PotlsBdd + \PotlsF$ must have a topology.
We will use the norm
\begin{equation}
\|v\| = \inf \left\{ \|v_1\|_\infty + \|v_2\|_F\,;\, v_1 + v_2 = v \right\}.
\end{equation}
However, insofar as it is important to have the second member, $v_2$, of
such a split be in $\PotlsF$, this norm will be mostly used just to define small 
neighborhoods of potentials, rather than to measure the ``size'' of individual potentials. 

Returning to the minimization problem (\ref{eq:E_w-minimization}), 
note that with $v = v_1 + v_2 \in \PotlsBdd + \PotlsF$,
\begin{equation}
\pair{v}{\rho} \ge -\Numparts  \|v_1\|_\infty - \|v_2\|_F F(\rho),
\nonumber
\end{equation}
for any density. Thus,
\begin{equation}
-\Numparts  \|v_1\|_\infty + [ 1 - \|v_2\|_F ] F_w(\rho) 
\le
F_w(\rho) + \pair{v}{\rho} 
\le 
\Numparts  \|v_1\|_\infty + [ 1 + \|v_2\|_F ] F_w(\rho). 
\nonumber
\end{equation}
Now, let $\rho_0$ be an inhabitant of $w$ (i.e., $F_w(\rho_0) < \infty$). 
In the minimization, there is no point in considering any density with
$F_w(\rho) + \pair{v}{\rho} > F_w(\rho_0) + \pair{v}{\rho_0}$. 
Therefore, attention may be confined to densities $\rho$ satisfying
the search region condition
\begin{equation}
F_w(\rho) \le
\frac{1}{ 1 - \|v_2\|_F }\Big(
2\Numparts  \|v_1\|_\infty + [ 1 + \|v_2\|_F ] F_w(\rho_0) \Big).
\label{eq:F-search-radius}
\end{equation}
This condition will be very important when we come to discuss continuity,
in Section \ref{sec:E-continuity}.

\subsection{Lieb potentials: $L^{3/2} + L^\infty$}
\label{Lieb-potls}

Conceptually, the $F$-norm suits our purposes perfectly. Unfortunately, its
computation poses significant, if not insurmountable, difficulties. 
The space of potentials $L^{\infty} + L^{3/2}$ equipped with the norm
$\|v\| = \inf \left\{ \|v_1\|_\infty + \|v_2\|_{3/2}\,;\, v_1 + v_2 = v \right\}$,
does not have that problem. 
Since this space was introduced by Elliott Lieb, we will denote it by `$\Lieb$'.

Recall the inequality (\ref{L3-F-bnd}): $\|\rho\|_3 \le c F(\rho)$.
Since 
$\pair{|v|}{\rho} \le \|v\|_{3/2}\|\rho\|_3 \le c \|v\|_{3/2} F(\rho)$,
\begin{equation}
\label{eq:F-vs-3/2-norm}
\|v\|_F \le c\|v\|_{3/2}.  
\end{equation}
This does not say that an $L^{3/2}$ potential is $F$-small. But,
as the next Lemma shows, if we chop off a big enough bounded part, 
the remainder will be so.
\begin{prop}
\label{prop:L3/2-F-small}
For any $\epsilon > 0$, there is a continuous embedding
$\Lieb \hookrightarrow \PotlsBdd + \epsilon\cdot \PotlsF$.
\end{prop}
\begin{proof}
Continuity is a simple consequence of (\ref{eq:F-vs-3/2-norm})
once the inclusion is shown valid. We demonstrate the inclusion.

$v^+$ and $v^-$ can be treated separately, so, without loss, 
assume $v \ge 0$. For any $n > 0$, 
$v = (v\wedge n) +   v - (v\wedge n)$, $\|v\wedge n\|_\infty \le n$ and
$v\wedge n \to v$ in $L^{3/2}$ norm as $n \to \infty$.
So, given $\epsilon$, there is an $n$ such that 
$c \| v - (v\wedge n) \|_{3/2} < \epsilon$.
\end{proof}

\subsection{Concavity}
\label{sec:E-concave}

A function $f$ is {\it concave} if $-f$ is convex. So $E$ is concave
if its domain is convex and for all $v$ and $w$ in its domain,
\begin{equation}
\forall \alpha \in (0,1).\,\, E(\alpha v + (1-\alpha) w) \ge 
\alpha E(v) + (1-\alpha) E(w).
\nonumber
\end{equation}
The formula (\ref{E-variational}) expresses $E$ as the 
infimum of a collection of functions $F(\rho) + \pair{\cdot}{\rho}$
indexed by densities. 
That can also be written as
\begin{equation}
\sub E = \bigcap_{\rho\in\DensN} \sub \left(F(\rho) + \pair{\cdot}{\rho}\right).
\nonumber
\end{equation}
Here, we use the notion of {\it subgraph}:
$\sub f$, the subgraph of $f$, is the region on or below the
graph of $f$. $f$ is a concave function if $\sub f$ is a convex set. 
If each of the functions $F(\rho) + \pair{\cdot}{\rho}$ is concave,
then all their subgraphs are convex, hence the intersection of the subgraphs 
is too, and therefore $E$ is concave.
To demonstrate concavity of $E$, it suffices to verify that
$F(\rho) + \pair{\cdot}{\rho}$ is concave for each $\rho \in \DensN$.
That would be trivial except for the fact that $\pair{v}{\rho}$ can be
$-\infty$ for some potentials.
Given $v,w \in \Potls$ and $0 < \alpha < 1$, we need to show
\begin{equation}
F(\rho) + \pair{\alpha v + (1-\alpha) w}{\rho} \ge
\alpha \left[ F(\rho) + \pair{v}{\rho}\right]
  + (1-\alpha)\left[ F(\rho) + \pair{w}{\rho} \right].
\nonumber
\end{equation}
Decompose $v = v^\prime + v^{\prime\prime}$ with $v^\prime \in \PotlsUp + \Const$
and $v^{\prime\prime} \in \PotlsF$, and similarly, 
$w = w^\prime + w^{\prime\prime}$.
Since 
\begin{equation}
\pair{\alpha v^\prime + (1-\alpha) w^\prime}{\rho} =
\alpha \pair{v^\prime}{\rho}  + (1-\alpha) \pair{w^\prime}{\rho},
\nonumber
\end{equation}
the problem reduces to showing
\begin{equation}
F(\rho) - \pair{\alpha v^{\prime\prime} + (1-\alpha) w^{\prime\prime}}{\rho} \ge
\alpha \left[ F(\rho) - \pair{v^{\prime\prime}}{\rho}\right]
  + (1-\alpha)\left[ F(\rho) - \pair{w^{\prime\prime}}{\rho} \right].
\nonumber
\end{equation}
But, since $v^{\prime\prime}$ and $w^{\prime\prime}$ are $F$-small,
if either potential term on the right-hand side is $-\infty$
then $F(\rho) = +\infty$, so both sides are infinite. 
Thus, the following is now established.
\begin{prop}
$E : \Potls \rightarrow \overline{\Real}$ is concave.
\end{prop}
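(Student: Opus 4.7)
The plan is to exploit the variational representation
\[
E(v) = \inf_{\rho \in \DensN} \bigl\{ F(\rho) + \pair{v}{\rho} \bigr\}
\]
from (\ref{E-variational}) together with the elementary principle that the pointwise infimum of any family of concave functions on a convex set is concave. If, for each fixed $\rho \in \DensN$, the map $v \mapsto F(\rho) + \pair{v}{\rho}$ can be shown to be concave (indeed, affine where finite) on the convex set $\Potls$, then concavity of $E$ follows by taking the infimum.

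First I would verify that $\Potls = \PotlsUp + \PotlsBdd + \PotlsF$ is convex. Each summand is either a convex cone ($\PotlsUp$, by non-negativity and closure under pointwise convex combinations in $[0,+\infty]$), a vector space ($\PotlsBdd$), or a convex balanced set ($\PotlsF$, by the lemma preceding the statement). Hence their Minkowski sum is convex, so that $\alpha v + (1-\alpha) w \in \Potls$ whenever $v,w \in \Potls$ and $\alpha\in(0,1)$.

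The main technical obstacle is that $\pair{v}{\rho}$ can take the value $+\infty$ when the $\PotlsUp$ part of $v$ is infinite on a set where $\rho$ is positive, so one cannot blindly linearly split the pairing. My approach would be to use the decomposition $v = v' + v''$, $w = w' + w''$ with $v',w' \in \PotlsUp + \Const$ and $v'',w'' \in \PotlsF$, and to argue separately. For the $\PotlsUp + \Const$ part, the pairing takes values in $(-\infty,+\infty]$ and is genuinely $\Real$-linear: the integrand is bounded below by a constant times $\rho$, so monotone convergence gives $\pair{\alpha v' + (1-\alpha) w'}{\rho} = \alpha \pair{v'}{\rho} + (1-\alpha)\pair{w'}{\rho}$, with both sides simultaneously finite or infinite. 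For the $\PotlsF$ part, $|\pair{v''}{\rho}| \le \|v''\|_F F(\rho)$, which is finite whenever $F(\rho) < \infty$; so on $\dom F$ the pairing is linear in $v''$, while off $\dom F$ we have $F(\rho) = +\infty$ and the inequality we need to prove becomes trivial ($+\infty$ on both sides, after adding the possibly infinite $\PotlsUp$ contribution).

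Assembling the two pieces, for each fixed $\rho$ the inequality
\[
F(\rho) + \pair{\alpha v + (1-\alpha) w}{\rho}
\ge \alpha\bigl[F(\rho) + \pair{v}{\rho}\bigr]
+ (1-\alpha)\bigl[F(\rho) + \pair{w}{\rho}\bigr]
\]
holds (with equality, in fact, whenever $F(\rho)$ is finite and the $\PotlsUp$ pairings are finite). Taking the infimum over $\rho \in \DensN$ preserves the inequality and yields concavity of $E$ on $\Potls$. No topological input on $\Potls$ is needed; the argument is purely algebraic, relying only on convexity of $\Potls$ and the $F$-norm bound on $\PotlsF$ pieces to keep the offending $+\infty$ contributions aligned on both sides.
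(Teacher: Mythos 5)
Your proof takes essentially the same approach as the paper: representing $E$ as the infimum over $\rho$ of the maps $v \mapsto F(\rho) + \pair{v}{\rho}$, decomposing $v = v' + v''$ with $v' \in \PotlsUp + \Const$ and $v'' \in \PotlsF$, and then handling the first piece by genuine linearity of the pairing and the second by $F$-smallness (so that an infinite contribution forces $F(\rho) = +\infty$, making both sides of the concavity inequality trivially infinite). Your explicit check that $\Potls$ is convex and the monotone-convergence justification for linearity on the $\PotlsUp + \Const$ part are details the paper leaves implicit but which tighten the argument.
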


\subsection{Continuity}
\label{sec:E-continuity}

In keeping with the idea of regarding 
$w+v$ for $w \in \PotlsUp$ and $v\in \PotlsBdd + \PotlsF$
as a perturbation of the background $w$, we write
\begin{equation}
E_w(v) = E(w+v).
\label{eq:bckgd-notation}
\end{equation}
The aim of this section is to show that $E_w(v)$ is continuous as a function of $v$.
What this means, in the standard idiom, is the following. Given $v_0 \in \PotlsBdd + \PotlsF$,
for any $\epsilon > 0$, there is a $\delta > 0$ such that 
whenever $v - v_0$ can be split as
\begin{equation}
v - v_0 = v^\prime + v^{\prime\prime}, \quad
\|v^\prime\|_\infty < \delta, \quad
\|v^{\prime\prime}\|_F < \delta,
\nonumber
\end{equation}
then, \hbox{$|E_w(v) - E_w(v_0) | < \epsilon$.}
\begin{thm}
\label{thm:E-continuity}
If $w \in \PotlsUp$ is habitable, then $E_w : \PotlsBdd + \PotlsF \rightarrow \Real$
is continuous.
\end{thm}
\begin{proof}
For convenience, we use the nonstandard characterization of continuity.
Translation to $\epsilon$-$\delta$ form is straightforward.
The condition (\ref{eq:F-search-radius}) plays a key role. 
Split standard potential $v_0$ as
\begin{equation}
v_0 = v_1 + v_2, \quad
v_1 \in \PotlsBdd, \quad 
v_2 \in \PotlsF,
\nonumber
\end{equation}
Now suppose $v$ (in $\Star{\PotlsBdd} + \Star{\PotlsF}$) is infinitely close to $v_0$, 
so $v - v_0 = v^\prime + v^{\prime\prime}$, with
$\|v^\prime\|_\infty \simeq 0$, and $\|v^{\prime\prime}\|_F \simeq 0$.
Then, 
\begin{equation}
|F(\rho) + \pair{v}{\rho} - [F(\rho) + \pair{v_0}{\rho}]| 
\le \Numparts \|v^\prime\|_\infty +  F(\rho) \|v^{\prime\prime}\|_F,
\label{eq:close}
\end{equation}
and the right-hand side is infinitesimal, as long as $F(\rho)$ is limited.
For that restriction, we appeal to the search region condition (\ref{eq:F-search-radius}).
Noting in particular that $\|v_2 + v^\prime\|_F \ll 1$ (more that infinitesimally less), 
we see that the minimizations required to find $E_w(v_0)$ and $E_w(v^\prime)$ can 
proceed under a common finite bound on $F(\rho)$. Therefore, the left-hand side of 
(\ref{eq:close}) is infinitesimal over the relevant portion of $\DensN$, 
and $E_w(v^\prime) \simeq E_w(v_0)$, which is what was to be proved.
\end{proof}

\begin{cor}
\label{thm:E-Lieb-continuity}
$E_w : \Lieb \rightarrow \Real$ is continuous.
\end{cor}
\begin{proof}
According to Prop. \ref{prop:L3/2-F-small}, the map in question is the composite
\begin{equation}
\Lieb \hookrightarrow \PotlsBdd + \PotlsF \stackrel{E_w}{\rightarrow} \Real
\nonumber  
\end{equation}
of two continous functions.
\end{proof}

This brings us to the end of our investigation of the 
functionals $F$ and $E$. In \ref{sec:F-sufficiency} we just
scratched the surface of the relationship between them,
and have not even touched the vexed $v$-representability 
problem of finding a potential with given ground-state density.
Nevertheless, it is time to quit, for now.

\appendix
\section{Selective functional analysis refresher}
\label{app:top}

This appendix contains a refresher on some basic notions of
functional analysis. The reader for whom this is completely
new is likely to feel the need for additional resources.

\subsection{Lebesgue ($L^p$) and Sobolev ($H^1$, $W^{1,1}$) spaces}

We recall here the definitions of $L^p$ norms. For any measurable
function $f: {\Real}^n \rightarrow \Real$,
\begin{align}
\|f\|_{L^p}^p & = \|f\|_p^p \defeq \int |f(x)|^p \, dx, \quad 0 < p < \infty
\nonumber \\
\|f\|_{L^\infty}  &= \|f\|_\infty \defeq \inf \{a \ge 0 : \mu_{\mathrm{Leb}}(|f| \le a) > 0\}.
\end{align}
In the second line, $\mu_{\mathrm{Leb}}$ denotes Lebesgue measure.
For a continuous function, $\|\cdot\|_\infty$ is just the supremum of
the absolute value. The complicated business with $\mu_{\mathrm{Leb}}$
is needed so that changing a function on a set of measure zero does
not change its norm.
The space $L^p({\Real}^n)$ consists of the functions with finite
$L^p$ norm. Actually, this is a white lie. 
None of the $\|\cdot\|_p$ can distinguish functions which differ on sets 
of measure zero,
so we really should say that the elements are {\em equivalence classes of functions}
rather than functions, {\em tout court}. But nobody does.

Sobolev spaces involve also norms of derivatives. These are really
distributional derivatives, and for our purposes, it is simplest to
think in terms of Fourier transform. With
\begin{equation}
\hat{f}(q) = \lim_{R\to\infty}\int_{|x|\le R} e^{iq\cdot x} f(x) \, 
\frac{dx}{(2\pi)^{n/2}}
\nonumber
\end{equation}
the Fourier transform of $f$, the $H^1$ norm of $f$ is
\begin{equation}
\|f\|_{H^1}^2
\defeq \int (1 + |q|^2) |\hat{f}(q)|^2 \, {dq}.
\nonumber
\end{equation}
If $f$ is actually differentiable, then this is the same as
\begin{equation}
\|f\|_{H^1}^2 \defeq \int (|f|^2 + |\nabla f|^2) \, dx
= \|f\|_{2}^2 + \|\nabla f\|_2^2,
\nonumber
\end{equation}
where the derivative is understood in the classical sense.
Otherwise $\nabla f$ is a distributional derivative.
The other Sobolev space used in these notes is $W^{1,1}$:
\begin{equation}
\|f\|_{W^{1,1}} \defeq \int (|f| + |\nabla f|) \, dx.
\nonumber
\end{equation}
The superscripts in `$W^{1,1}$' indicate maximum derivative order
and the power to which things are raised in the integral.
$H^1$ is also called $W^{1,2}$. The reader will see that these
are just the beginning of a family of function spaces.
Many sources are available for more information, for example
\cite{Taylor1,Treves-PDE,Folland-PDE,Adams,Lieb+Loss}.

\subsection{the concept of `topology'}

`Topology' is a badly overloaded word, largely due to
a tendency to deploy it in contexts (turning coffee cups 
into a doughnuts) where a more precise term would be better.
As we use it here, {\em a topology} is essentially a notion
of convergence. In a metric space, a sequence $(x_n)$ converges
to $x$ if the distances $d(x_n,x) \to 0$ in $\Real$.
What is involved here are certain standards of closeness,
technically called {\em neighborhoods}.
The ball $B_x(a)$ of radius $a$ about $x$ is one neighborhood of $x$
and the ball $B_x(b)$ is another.
If $a > b$, the second standard of closeness is strictly 
more demanding than the first; $B_x(b) \subset B_x(a)$.
We {\em could} consider ``lopsided'' neighborhoods which do not nest.
The point for a metric space is that we do not need such things;
the notion of convergence is completely and cleanly captured by the
centered balls. But, such a simple structure is not always obtainable. 
Consider the space of bounded functions
$\Real \rightarrow \Real$ where we say that $f_n \to f$ if and
only if $f_n(x) \to f(x)$ for each $x$.
Example: $f_n(x) = (\text{ if }\, x > n \,\text{ then }\, 1 \,\text{ else }\, 0)$
converges to the function $0$, which is identically equal to zero.
$\{g\,:\, |g(a)| < \epsilon\}$ is one neighborhood of $0$ and
$\{g\,:\, |g(b)| < \epsilon\}$ is another. 
The two are not comparable, and we cannot rejigger to eliminate
this complication.
But, there is a third neighborhood, 
$\{g\,:\, |g(x)| < \epsilon \,\text{ for }\, x\in\{a,b\}\}$
which is contained in both. 
This illustrates the general pattern. One way to specify the
topology of a topological space is to give, for each point $x$,
a {\em neighborhood base} ${\mathcal B}_x$ such that if
$A,B \in {\mathcal B}_x$, then there is $C \in {\mathcal B}_x$ 
satisfying $C \subseteq A$, $C\subseteq B$.
Often, given such a system of neighborhood bases, we can
find a metric which has the same concept of convergence.
This is done in \S \ref{sec:metric-for-wk-P}, for example.
In fact, no non-metrizable topology is essential anywhere in
these notes, so the reader loses nothing by reading every
occurence of ``topology'' in the text as 
``notion of convergence associated to some metric''.

\section{Infinitesimals}
\label{sec:nsa}

\subsection{seriously}
\label{sec:nsa-seriously}

Infinitesimals have a long history in mathematics. 
Leibniz based his version of the calculus on them,
and Euler made masterful use of infinitesimals and
their reciprocals, infinitely large quantities.
But the logical status of old-style infinitesimals was ever
shaky, and they were banished from mathematical 
discourse during the 19th century, replaced by the 
$\epsilon$-$\delta$ method.
Eventually, infinitesimals were revived in a rigorous form,
primarily through the work of Abraham Robinson in the 1960's,
giving us modern {\it infinitesimal analysis}, or
{\it nonstandard analysis}\cite{Robinson66}.
Although there are alternate axiomatic treatments\cite{Nelson77,Nelson87,Kanovei+Reeken},
we follow the model theoretic 
approach pioneered by Robinson\cite{Robinson66,Lindstrom85,Goldblatt,Loeb+Hurd,Loeb,Davis,AFHL,Stroyan+Luxemburg}.
The idea is to extend a conventional mathematical universe 
${\Univ}$ to a nonstandard universe $\Star{{\Univ}}$ 
which has all the same {\em formal} properties as ${\Univ}$ does, 
and yet is chock full of ideal elements. That an element is ideal, for
example infinitesimal, is an {\em external} judgement not available
to the {\em internal} object language.
The interplay of internal and external is responsible
for the power of infinitesimal methods. We give here an informal
exposition. The previously-cited references give details
of the customary construction of nonstandard universes by means of 
ultraproducts over set-theoretic superstructures.
Relieved of the burden of proving anything, we describe matters
instead in a type-theoretic way inspired by modern functional
programming languages such as ML and Haskell.
A good strategy for the reader might be to go through 
just \ref{sec:nsa-base}--\ref{sec:nsa-metric}, and then return
when necessary. The only nonstandard argument which is not
elementary is contained in the proof of Lemma \ref{lem:wild-mixed-state}.

\subsection{base types}
\label{sec:nsa-base}

We begin with the base types of Booleans
$\Bool = \{\mathrm{T},\mathrm{F}\}$,
natural numbers $\Nat$ and real numbers $\Real$.
These types, and their elements, are denizens of the mathematical universe
${\Univ}$. The nonstandard universe $\Star{\Univ}$ has 
$\Star{\Bool}$, which is actually identical to $\Bool$, the
hypernaturals $\Star{\Nat}$ extending $\Nat$ and the hyperreals
$\Star{\Real}$ extending the reals. $\Star{\Nat}$ and $\Star{\Real}$ have
all the order and arithemetic properties of $\Nat$ and $\Real$.
But these sets have additional elements; the new elements in $\Star{\Nat}$ 
are larger than all naturals and some of the hyperreals are infinitesimal.
With the understanding of implicit coercion from $\Nat$ to $\Real$
and from $\Star{\Nat}$ to $\Star{\Real}$, we have the following
grab-bag of terminology and ideas.
The essential image is of a number line with a ``halo'' of nonstandard 
numbers around every standard real and then extended on the far left
and far right with ``infinitely large'' numbers.
\begin{defn}
\label{def:nonstd}
Let $x,y \in \Star{\Real}$.

\renewcommand{\labelenumi}{\theenumi}
\renewcommand{\theenumi}{\alph{enumi}.}
\begin{enumerate}
\item $x$ is {\em standard} if $x \in \Real$. Otherwise,
$x$ is {\em nonstandard}.

\item $x$ is {\em limited} if $|x| < n$ for some $n\in \Nat$;
otherwise $x$ is {\em illimited}. There is no smallest illimited
hypernatural or hyperreal.

\item $x$ is {\em infinitesimal} if $|x| < 1/n$ for all $n\in\Nat$;
otherwise $x$ is {\em appreciable}.

\item If $x-y$ is infinitesimal, then $x$ and $y$ are {\em infinitely close},
or simply {\em near}. Notation: $x\near y$.

\item If $x$ is limited, then $x\near y$ for a unique $y \in \Real$.
$y$ is the {\em standard part} of $x$, denoted by $\Std{x}$ or $\st x$. 

\item If $x$ is standard, the collection of all hyperreals 
near $x$ makes up the {\em halo} (or {\em monad}) of $x$.
Halos of distinct reals are disjoint.

\item $x \gg y$ means $x > y$ but $x \not\near y$. $x \ll \infty$
is a synonym for ``$x$ is limited''.
\end{enumerate}
\end{defn}

Any formal statement $S$ we can make about $\Bool$, $\Nat$, and $\Real$
using Boolean operations
(conjunction $\wedge$, disjunction $\vee$ and negation $\neg$),
natural and real constants and arithmetic operations 
($0$,$1$,$+$,$\cdot$,$<$,$|\cdot|$) and 
bounded quantifiers ($\forall$ and $\exists$),
is turned into a statement $\Star{S}$ about $\Bool$, $\Star{\Nat}$ and $\Star{\Real}$ 
by putting stars on the constants and parameters. Leibniz' Principle,
which we require, says that $S$ and $\Star{S}$ have the same truth value.
For example, every natural is either even or odd, hence the same is true of a 
hypernatural, even if it is illimited. Every hypernatural has a successor.
There is a third hyperreal strictly between any two distinct hyperreals.
There is no real number squaring to $-1$, hence neither is there such a
hyperreal. Since we {\em identify} reals, naturals and Booleans with their 
$*$-transforms, `$\Star{(-1)}$' is simply `$-1$'.

\subsection{limits and continuity}
\label{sec:nsa-limits}

With individual functions added to our object language, we can discuss 
nonstandard characterizations of sequence limits and function continuity.
These are bread-and-butter applications of infinitesimal methods.
A real sequence $(x_n)$ is really a function $\Nat \to \Real$, and
has an extension $(\Star{x}_n)$ with $n$ running over $\Star{\Nat}$.
We want to show the equivalence of
``$x_n \to a$'' and  ``$\Star{x}_n \near a$ for $n \near \infty$''.

The formalization of $x_n \to a$ is
(${\Real}_{>0}$ abbreviates the reals greater than zero)
$\forall \epsilon \in {\Real}_{>0}\, \exists n\in\Nat.\,
m > n \Rightarrow |x_m - a| < \epsilon$.
Actually, we do not want to put $\epsilon$ and $n$ in the formal
statement, but keep them at the metalinguistic level.
Then we have simply, 
\begin{equation}
\forall m > n(\epsilon) \Rightarrow |x_m - a| < \epsilon,
\nonumber
\end{equation}
which is true in ${\Univ}$. 
Transfer says (recall, $n$ is the same as $\Star{n}$ and similarly for $\epsilon$)
\begin{equation}
\forall m > {n(\epsilon)} \Rightarrow |\Star{x}_m - {a}| < \epsilon.
\nonumber
\end{equation}
Now, if $m\near\infty$, then certainly $m > n(\epsilon)$.
And that's true no matter how small $\epsilon$ is, so $x_m \near a$.
For the other direction, suppose we know that $m\near\infty$ implies
$x_m \near a$. Then, given standard $\epsilon > 0$, the statement
\begin{equation}
\exists n\in\Star{\Nat}.\, m > n \Rightarrow |\Star{x}_m - a| < \epsilon
\nonumber
\end{equation}
is true in $\Star{\Univ}$; any illimited $n$ will serve.
Backward transfer gives us
\begin{equation}
\exists n\in{\Nat}.\, m > n \Rightarrow |{x}_m - a| < \epsilon.
\nonumber
\end{equation}
Since this procedure is valid for any $\epsilon \in \Real_{>0}$, 
we conclude that $x_n \to a$.

Continuity of a function $f: \Real \rightarrow \Real$ at $a$
has an infinitesimal characterization with a very similar flavor.
It is ``$x\near a$ implies $\Star{f}(x) \near \Star{f}(a)$'', 
and is demonstrated by a very similar technique.
Ordinarily, we might drop the stars on `$\Star{f}(x)$'
and `$\Star{f}(a)$', as there is neither ambiguity
nor possibility for confusion 
(Not even externally does `$a \near b$' make sense for $a,b\in\Real$.)

\subsection{metric, function, and topological spaces}
\label{sec:nsa-metric}

Any metric space $X$ is extended in much the same way as $\Real$.
For example, consider the space $\ell^2$ of square summable
sequences $x:\Nat \rightarrow \Real$, with the norm $\|x\|_2^2 = \sum_n |x_n|^2$.
Then $\Star{\ell^2}$ consists of sequences $\Star{\Nat} \rightarrow \Star{\Real}$.
$x\in \ell^2$ has an extension $\Star{x}\in\Star{\ell^2}$; it is a standard
element. Now, since $\sum |x_n|^2$ converges, for any standard $\epsilon > 0$,
there is $n(\epsilon)$ such that $\sum_{m\ge n(\epsilon)} |x_m|^2 < \epsilon$.
So using Transfer, we conclude that $\sum_{m \ge N} |x_m|^2 \near 0$ for
any illimited $N$. This is important to the nature of nearstandard elements
of $\Star{\ell^2}$. If $y\in\Star{\ell^2}$ has a standard part, then
$\sum_{n\in\Star{\Nat}} |(\st y)_n - y_n|^2 \near 0$.
This implies that $(\st y)_n \near \Std(y_n)$ for limited $n$,
as well as $\sum_{n\in\Nat} |\Std(y_n)|^2 \ll \infty$ and
$\sum_{n \ge N} |y_n|^2 \near 0$ for any illimited $N$.
We see, then, that there are basically two ways $y\in\Star{\ell^2}$ can fail to
be nearstandard (not nearstandard is also called {\it remote}).
First, it could have an illimited norm, $\|y\| \approx \infty$.
Second, it could have appreciable weight at infinity.
The first has as analog for $\Star{\Real}$, but the second does not.
This is the nonstandard manifestation of the fact that bounded subsets
of $\Real$ are relatively compact, but bounded subsets of $\ell^2$ are not.
The element $y_n = \delta_{nN}$ ($n$ illimited) is a remote element with
norm one.

If $d$ is a metric for $X$ in $\Univ$, then $\Star{d}$ is a
metric for $\Star{X}$. Each standard point $\Star{x} \in \Star{X}$
is surrounded by a halo (or `monad') of nonstandard points at infinitesimal 
distance from it. If $d(y,\Star{x}) \near 0$, then $x$ is the
{\em standard part} of $y$, written either $\Std{y}$ or $\st y$,
as convenient. $\Star{X}$ may well have points which are not
near any standard point. These are {\em remote}.
Indeed, if there are points in $X$ arbitrarily far from $x$
($\forall n\in\Nat\, \exists y \in X.\, d(y,x) > n$), then
there are points in $\Star{X}$ an illimited distance from
$\Star{x}$. These are necessarily remote, since any putative
standard part must be a finite distance from $x$. 
A function $f: X\to \Real$ is continuous at $x\in X$ if and
only if $\Star{f}$ maps the halo of $x$ into the halo of $f(x)$.
All these points are exemplified by the space $\ell^2$ of
the previous paragraph.

We touch briefly on
the generalization to an arbitrary topological space in $\Univ$.
There is only one essential difference. Without a metric, there
is no numerical measure of ``infinitely close''. Instead, the
halo of a standard point $x$ must be defined as $\cap \Star{U}$, 
where the intersection runs over all the standard open neighborhoods of $x$.
One can check that for a metric space this matches the definition we have
already given.

\subsection{$*$-transform and Leibniz' Principle}
\label{sec:nsa-Leibniz}

To do much interesting mathematics, we need higher-order types,
such as power sets and function spaces, that are so far lacking from the language.
If $A$ and $B$ are types in $\Univ$, then so is the product
type $A \times B$ consisting of all ordered pairs $(x,y)$ for $x\in A$ and 
$y\in B$, the sum (disjoint union) type $A + B$,
and the type $[A \rightarrow B]$ of all functions from $A$ to $B$.
With these type constructors, along with function abstraction and
application, we can fill out the universe $\Univ$
and extend the $*$-transform to a map 
${\Univ} \stackrel{*}{\rightarrow} \Star{\Univ}$
obeying the Transfer Principle.

\renewcommand{\labelenumi}{\theenumi}
\renewcommand{\theenumi}{\alph{enumi}.}
\begin{enumerate}

\item The range of $*$ is the {\em standard} objects.
These are {raw materials} from which the other,
{\em internal}, objects of $\Star{\Univ}$ are obtained
by ordinary mathematical operations ({\it de novo} function
space construction is not ordinary).

\item Individual Booleans, naturals and reals are urelements with
no internal structure; we identify them with their $*$-transforms.

\item If $x\in A$, then $\Star{x} \in \Star{A}$.
But if $A$ is infinite, $\Star{A}$ contains nonstandard
members as well (example: illimited hypernaturals).

\item $*$ respects product and sum construction,
$\Star{(A\times B)} = \Star{A}\times\Star{B}$, 
$\Star{(A + B)} = \Star{A} + \Star{B}$.
, tupling,
$\Star(x,y) = (\Star{x},\Star{y})$, and function application,
$\Star{(f(x))} = \Star{f}(\Star{x})$.

\item Only the function space constructor has an abnormal interpretation
in $\Star{\Univ}$. $\Star[A \rightarrow B]$ is only
a subset of $[\Star{A} \rightarrow \Star{B}]$, which we
write as $[\Star{A} \dashrightarrow \Star{B}]$.
Members of $[\Star{A} \dashrightarrow \Star{B}]$ are
{\em internal} functions.

\item
The power set constructor, ${\mathcal P}(\cdot)$, is basically syntactic 
sugar for $[\, \cdot \rightarrow \Bool]$. Other set operations are likewise 
abbreviations, which we use freely. 

\item $\Star{\mathbb U}$ consists of {\em internal} objects,
anything we can describe using the resources of the formal language 
and beginning with standard objects.
In particular, a standard set is internal, any {\em member} 
of a (member of a $\cdots$ of a) standard set is internal. 
If $c\in \Star{\Real}$ is infinitesimal,
then it is not standard (not of the form $\Star{b}$), 
but it is internal, and so is the subset $[0,c]$ of $\Star{\Real}$.

\item If $C$ and $D$ are internal subsets of $\Star{A}$ and $\Star{B}$,
respectively, $[C \dashrightarrow D]$ is obtained in the normal way
by selection and restriction.
Also, the internal function space constructor respects currying:
if $A$, $B$ and $C$ are internal, then 
$ {A} \dashrightarrow [{B} \dashrightarrow {C}] 
= ({A} \times {B}) \dashrightarrow {C} $.

\item The symbol `$\near$', the predicate `limited', the standard part
operation and the other elements of Definition \ref{def:nonstd} are 
{\em not} part of the object language.
Most especially, neither $\{x\in\Star{\Nat} : x \ll \infty\}$ nor
$\{x\in\Star{\Real} : x \near 0\}$ is internal. They are {\em external}. 
Because we can formally express the least-number principle 
(any nonempty subset of $\Nat$ has a least element), Transfer 
implies that it holds for {\em internal} subsets of $\Star{\Nat}$.
But, there is no smallest illimited hypernatural. Consistency thus
demands that these things be external sets. 
\end{enumerate}

\subsection{overspill}
\label{sec:nsa-overspill}

`Limited' and `infinitesimal' are the most important external concepts. 
We illustrate a typical pattern of use.
Suppose a sequence $x:\Star{\Nat}\rightarrow\Star{\Real}$ has the property that
$n x_n < 1$ for every limited $n$.
Then, the set $B = \{m\in\Star{\Nat} : \forall n < m, nx_n < 1 \}$ is evidently
internal. Therefore, $B$ cannot consist of only the limited
hypernaturals, since that is an external set. There must be some illimited $m$ in $B$.
Thus the property $nx_n < 1$ {\em spills over} into the illimited hypernaturals.
Similarly, if a formally expressible property holds for all appreciable hyperreals, 
it must hold for some infinitesimal ones.
Now suppose an internal sequence $(x_n)$ satisfies $x_n \near 0$ for all limited $n$. 
Can we use overspill? Since ``$\near$'' is external, it is not evidently applicable,
but there is a clever trick discovered by Robinson. The given condition 
implies $nx_n < 1$ for limited $n$. And for illimited $n$, $nx_n < 1$ 
implies $x_n\near 0$. Thus the property $x_n \near 0$ actually does
extend into the illimited hypernaturals.
This result, known as Robinson's Lemma, is used in \ref{sec:dichotomy}.


\providecommand{\newblock}{}

\end{document}